\documentclass[a4paper,12pt]{fullarticle}
\usepackage[british]{babel}
\usepackage{csquotes}
\usepackage{sciencestuff}
\usepackage[algoruled,algosection,vlined,shortend,linesnumbered]{algorithm2e}

\title{%
Functional Renormalization Group for a Rank-4 Renormalizable Tensorial Group Field Theory with Derivative Necklace Couplings
}

\author[1]{Seke Fawaaz Zime Yerima\emailfoot{fawaaz.zimeyerima@uac.bj}}
\author[2]{Vincent Lahoche\emailfoot{vincent.lahoche@cea.fr}}
\author[1]{Dine Ousmane Samary\emailfoot{dine.ousmanesamary@uac.bj}}

\affil[1]{%
	Faculté des Sciences et Techniques (ICMPA-UNESCO Chair)
	\protect \\
	Université d'Abomey-Calavi, 072 BP 50, Benin
}

\affil[2]{%
  Université Paris-Saclay, CEA, 
  \protect \\
  Palaiseau, F-91120, France
}

\date{}

\hypersetup{%
  pdftitle={Non-perturbative fixed point for a rank $4$ renormalizable Tensorial Group Field Theory with derivative couplings},
  pdfkeywords={%
  functional renormalization group,
  theoretical physics,
  data science,
  signal analysis,
  signal detection,
    random matrix theory
    },
  pdfsubject={signal detection}
}

\newtheorem{remark}{Remark}
\newtheorem{definition}{Definition}
\newtheorem{theorem}{Theorem}
\newtheorem{proposition}{Proposition}
\newtheorem{corollary}{Corollary}
\newtheorem{lemma}{Lemma}
\newtheorem{ansatz}{Ansatz}

\newcommand{\sym}{\mathrm{Sym}}

\begin{document}

\maketitle

\begin{abstract}

We apply the functional renormalization group to an Abelian Group Field Theory extended beyond the branched-polymer (melonic) sector by including interactions that are subdominant from a power-counting perspective but enhanced by derivative couplings. Focusing on a rank-4 model, we consider a class of non-melonic interactions with a necklace structure. Due to their index contraction pattern, their leading-order behavior is analogous to that of large-N random matrix models and is associated with a planar graph structure. Within this setting, we identify the emergence of a nontrivial ultraviolet fixed point, reminiscent of mechanisms previously observed in matrix models, and discuss its reliability within the present truncation. The robustness of this fixed point will be further investigated through modified Ward identities, following strategies previously developed in the melonic sector.
\end{abstract}

\highlights{%
We employ the Non-Perturbative Renormalization Group (NPRG) framework to investigate the phase structure of an Abelian tensorial group field theory (TGFT), characterized by a competition between melonic and necklace-type bubble interactions.
}

\keywords{%
    Renormalization group,
    quantum gravity,
    random tensor theory
    random matrix theory,
    group field theory,
    phase transition
}

\clearpage

{\small\tableofcontents}

\clearpage


\section{Introduction}

Group Field Theories (GFT) represent a class of field theories defined on a group manifold, characterized by a specific form of non-locality in their interactions. This non-locality endows their Feynman diagrams with the structure of cellular complexes rather than simple graphs \cite{Freidel_2005,baratin2012ten,https://doi.org/10.48550/arxiv.1110.5606,https://doi.org/10.48550/arxiv.gr-qc/0607032,https://doi.org/10.48550/arxiv.1210.6257}. Fundamentally, a GFT is defined over a group manifold $G^d$, where $d$ corresponds to the number of field variables. In the context of quantum gravity research, this group is generally the Lorentz group, its Riemannian counterpart, or one of their subgroups, such as the compact groups $SU(2)$ or $SO(2,1)$. GFTs can be understood through several complementary lenses: as a generalization of random matrix models \cite{Francesco_1995}, or as an enrichment of random tensor models \cite{gurau2017random} with pre-geometric data. Furthermore, they emerge as a second-quantized version of Loop Quantum Gravity (LQG). At the perturbative level, their Feynman graphs appear as simplicial complexes weighted by spin-foam amplitudes; thus, GFTs can be viewed as a completion of both the LQG and spin-foam approaches to covariant quantum general relativity see \cite{https://doi.org/10.48550/arxiv.1310.7786,oriti2015group}. Tensorial Group Field Theories (TGFTs) are a subclass of GFTs where interactions are termed \textit{tensorial} \cite{carrozza2014renormalization2,carrozza2014tensorial}, in the sense defined by colored random tensor theories \cite{bonzom2012random,gurau2017random}. Indeed, rank-$d$ tensor models of size $N$ are characterized by a $U(N)^{\otimes d}$ invariance, providing a $1/N$ expansion similar to that of matrix models. For tensor models, the $1/N$ expansion is governed by an index called the \textit{Gurau degree}, which plays a role analogous to the genus in matrix models. The leading-order contributions correspond to vanishing degree; it has been established that these correspond to specific spherical triangulations called \textit{melons}, which play a role similar to planar graphs in matrix models.

The $U(N)^{\otimes d}$ invariance in tensor models arises from the specific contraction pattern of tensor indices. This same scheme is employed in TGFTs, where sums over indices are replaced by integrations over group variables. As with tensor models, this construction allows for a power-counting analysis of the theory. Moreover, the tensorial structure of interactions enables the definition of a locality principle known as \textit{traciality} \cite{carrozza2014renormalization2,carrozza2014tensorial,rivasseau2016random,carrozza2014renormalization,https://doi.org/10.48550/arxiv.1111.4997,carrozza2016flowing}, provided the propagator is modified by a Laplacian-type coupling (a power of the Laplacian on $G^d$). Ultimately, renormalization and the physics of phase transitions play a crucial role in the fundamental question of the emergence of spacetime. Indeed, the primary challenge for background-independent approaches to quantum gravity is to understand the transition from the discrete quantum degrees of freedom of a fundamental theory candidate to the description of a continuous, or at least semi-classical, spacetime through an emergence scenario. In the GFT approach, a significant step toward understanding this mechanism has been taken in several recent works \cite{oriti2015generalized,gielen2014quantum,gielen2022effective,oriti2021tensorial,jercher2022emergent,oriti2018black,de2017dynamics,kegeles2018inequivalent}. In the first part of these studies, the authors derive the Friedmann cosmological equations for a homogeneous and isotropic Universe from the Schwinger-Dyson equations, under the hypothesis that such a spacetime corresponds to a condensate of microscopic degrees of freedom, by analogy with condensed matter physics. In the second part, the authors discuss phase transition aspects and the role of matter degrees of freedom, justifying some of the hypotheses regarding condensate emergence.

However, this choice remains an intuition, motivated by parallels with quantum optics but still lacking a rigorous justification derived from the theory itself. Within this framework, the non-perturbative Functional Renormalization Group (FRG) approach becomes a natural tool, offering a robust framework to investigate phase transition physics and justify the occurrence of a condensed phase, as anticipated in \cite{marchetti2021phase}. This constitutes the primary physical motivation for the present work.

Over the past decade, numerous studies have explored the non-perturbative renormalization group for TGFTs \cite{Carrozza_2015a,Lahoche_2017bb,Benedetti_2016,Benedetti_2015,Geloun_2016,Ben_Geloun_2015,Geloun_2018}. The main difficulty of this approach lies in the non-local nature of the interactions, which limits the application of standard field theory methods \cite{Delamotte_2012}. Furthermore, the non-trivial Ward identities of these models add to their specificity and have been extensively studied in a series of papers \cite{Lahoche_2019bb,Lahoche_2019a,Lahoche_2020d,Lahoche_2021c,Lahoche_2020b,https://doi.org/10.48550/arxiv.1701.03029,lahoche2021no,Lahoche:2018oeo}, focusing on the symmetric phase while exploring the entirety of the non-branched melonic sector. Other methods, proposed in \cite{pithis2021no,pithis2020phase}, bypass Ward identities and instead explore the non‑symmetric regime. A doubt persists to this day regarding the existence of a melonic fixed point supporting a pre-geometric transition. In particular, it appears that accounting for Ward identities suppresses these Wilson-Fisher type fixed points, which tended to manifest in early studies that ignored these constraints.\\

In this paper, we propose to apply FRG methods to a rank-4 TGFT model, including interactions beyond the melonic sector. These interactions exhibit a cyclic and balanced index structure (with two links between each node) and go by the evocative name of \textit{necklaces}. While these necklaces are sub-dominant compared to melons under ordinary power counting, their influence can be "enhanced" by carefully chosen derivative couplings. This places them on a par with melons in the hierarchy of canonical dimensions, thereby escaping the 'curse' of branched polymers. Similar procedures have been considered in colored tensor models (where $U(N)^{\otimes d}$ invariance is not broken by a propagator) in \cite{bonzom2015enhancing}; our work can be seen as an extension of this research to the GFT framework. The primary conclusion in the reference above was that, due to the enhancement of non-melonic interactions which appear "matrix-like" in structure the leading order remains tree-like, as in the strictly melonic sector, but with an additional structure called \textit{disks}, appearing as planar configurations. Like the melonic tree sector, the planar sector possesses its own critical behavior the model exhibiting a transition between them with a positive entropy exponent. The main interest of such a model, combining matrix-like behavior with a tensorial structure, is that it allows for phase transitions from a "branched polymer" phase (corresponding to the tensorial tree sector) to a planar phase, characterized by the proliferation of planar graphs along the trees, thus enriching the graph topology. From a physical perspective, this connects to the question of the emergence of matter degrees of freedom. In several approaches to this problem, matter is not simply added to gravitational microscopic degrees of freedom but emerges as fluctuations around the vacuum of the fundamental quantum gravity theory. Such an approach in the GFT framework has been explored, for instance, in \cite{fairbairn20073d}, where it was shown that fluctuations behave as an effective matrix field. This aligns with the standard view that trees are suitable building blocks for a more elaborate theory, providing a second motivation for our enhanced field theory.\\

The paper is organized as follows: In section \ref{sectiondef} we defined the model and the conventions used throughout this paper. In particular, we define the classical action and the quantized theory via the path integral formalism and Feynman diagrams. In section \ref{sectionpower} we investigated the power counting of the theory, allowing us to specify the renormalizable sector, which includes both melons and necklaces. In section \ref{sectionNP} we introduced the FRG formalism and the approximation scheme, providing a detailed derivation of the flow equations. Sections \ref{numeric} and \ref{conclusion} present the results and offer a conclusion opening onto future research.Finally, appendix \ref{App0} provides a basic definition of the colored graphs which are used throughout the manuscript, and appendix \ref{App1} presents an alternative (perturbative) computation using the Polchinski equation.


\section{A Group Field Theory beyond the world of melons}
\label{sectiondef}

We consider the simplest case of a GFT over the Abelian group manifold $U(1)^4$. From the perspective of tensor models and tensor field theories, this choice is quite natural: when the size $N$ of the tensor goes to infinity, the classical action is reduced to that of a group field theory on $U(1)^d$, where the tensor components become the Fourier coefficients of the fields. From a purely GFT point of view, our theory can be viewed as a toy model for quantum gravity, implementing more topological ingredients than traditional melonic TGFTs. A group field $\psi$ over $U(1)^4$ is a smooth map $\psi: U(1)^4 \to \mathbb{C}$. A field theory is then defined by the choice of a generating functional:
\begin{equation}\label{genfunc}
\mathcal{Z}[J,\bar{J}]=\int d\psi d\bar{\psi} e^{-S[\psi,\bar{\psi}]+\int_{U(1)^4}\bar{J}\psi+\int_{U(1)^4}\bar{\psi}J},
\end{equation}
where the sources $J$ and $\bar{J}$ are smooth mappings from $U(1)^4$ to $\mathbb{C}$, and the classical action $S[\psi,\bar{\psi}]$ splits into a kinetic part and an interaction part: $S[\psi,\bar{\psi}]=S_{kin}[\psi,\bar{\psi}]+S_{int}[\psi,\bar{\psi}]$. The first term encodes the dynamical structure of the field theory, while the second encodes the ways in which the fields interact with one another. For our purposes, we choose:
\begin{equation}\label{skin}
S_{kin}[\psi,\bar{\psi}]=\int_{U(1)^4}\prod_{i=1}^4 dg_i\bar{\psi}(g_1,g_2,g_3,g_4)\bigg(-\sum_{i=1}^4\Delta_i^{\eta}+m^{2\eta}\bigg)\psi(g_1,g_2,g_3,g_4),
\end{equation}
where $\Delta$ is the Laplace-Beltrami operator on $U(1)$, $dg$ is the Haar measure over $U(1)$, and $1/2 \leq \eta \leq 1$ is chosen in agreement with Osterwalder-Schrader positivity. In the history of TGFT, the introduction of a propagator of this form was motivated by radiative corrections in GFTs \cite{BenGeloun:2011jnm}, it appears as the simplest requirement to initiate a renormalization program using the tools of quantum or statistical field theory. In this context, Osterwalder-Schrader positivity plays the role of a natural constraint on the theories, especially from the perspective that a time dimension is already contained within the elementary random geometry theory though this remains an assumption.\\

\noindent

Let us now turn to the interaction part $S_{int}[\psi, \bar{\psi}]$. Such a theory is referred to as ``tensorial'' if these terms are invariant under independent unitary transformations acting on each index. To achieve this, each variable in the interaction terms attached to a field $\psi$ must be identified with a corresponding variable of a field $\bar{\psi}$ and integrated over the group manifold. As an example, we consider the following $T^4$ (quartic) interaction:
\begin{align}\label{intexample}
\int \prod_{i=1}^4 dg_idg_i'&\psi(g_1,g_2,g_3,g_4)\bar{\psi}(g_1',g_2,g_3,g_4)\psi(g_1',g_2',g_3',g'_4)\bar{\psi}(g_1,g_2',g_3',g'_4).
\end{align}

Such an interaction, exhibiting tensorial invariance, is called a bubble. A GFT whose interactions are of this type is referred to as a TGFT. Each bubble can be conveniently depicted as a bipartite, regular, 4-colored graph. In this representation, the fields $\psi$ and $\bar{\psi}$ are represented by white and black vertices, respectively. Each vertex is incident to four edges, representing the four group variables of the field. The connectivity of these edges follows the pattern of contractions (sums over variables) between the fields, as illustrated in Figure \ref{fig1}a for the bubble in Eq. \eqref{intexample}.
\begin{center}
\includegraphics[scale=0.85]{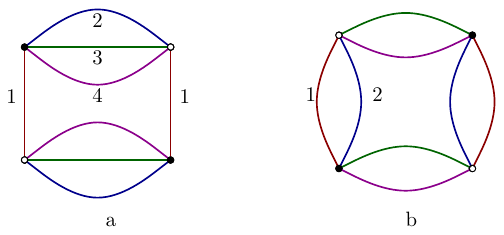} 
\captionof{figure}{The two possible configurations for quartic interactions. (a) Bipartite graph associated with the interaction given in Eq. \eqref{intexample}. (b) The non-melonic configuration.}\label{fig1}
\end{center}
We focus exclusively on quartic interactions, of which there are two distinct types, as depicted in Figures \ref{fig1}a and \ref{fig1}b. There are four bubbles of the first type (\ref{fig1}a), denoted by $b_i$ for $i \in \{1, \dots, 4\}$, where the index $i$ refers to the color of the 'singular' line. There are also three bubbles of the second type (\ref{fig1}b), denoted by $b_{1i}$ for $i \in \{2, 3, 4\}$, where the index $i$ indicates the color of the edge paired with the edge of color $1$. For instance, Figure \ref{fig1}b corresponds to the bubble $b_{12}$. We introduce the notation $\Tr_b[\psi, \bar{\psi}]$ to indicate that the field variables are contracted according to the scheme defined by the bubble $b$. Explicitly:
\begin{equation}
\Tr_{b_{12}}[\psi,\bar{\psi}]=\int \prod_{i=1}^4 dg_idg_i'\psi(g_1,g_2,g_3,g_4)\bar{\psi}(g_1',g_2',g_3,g_4)\psi(g_1',g_2',g_3',g'_4)\bar{\psi}(g_1,g_2,g_3',g'_4).
\end{equation}											

One can show from its definition that the Gurau degree vanishes for interactions of type (\ref{fig1}a). Since a melon is defined as a graph with a vanishing Gurau degree, these interactions are termed melonic. As discussed in the introduction, melonic graphs constitute the leading-order contributions, without a modification of the interactions of type (\ref{fig1}b), their contributions remain sub-leading. In the model studied in \cite{bonzom2015enhancing}, the authors addressed this issue by rescaling the interaction with an additional factor $N$, the tensor size. This rescaling ensures that non-melonic interactions carry the same weight as melonic ones in the power counting. In a field theory framework, such a 'boost' corresponds to a derivative coupling\footnote{This will be clarified in the next section.}, a situation highly reminiscent of non-Abelian gauge theories. Given the non-local structure of tensor fields, there is more than one possible derivative coupling. Consequently, we choose the most balanced one, which increases the weight of each color by one.
\begin{equation}\label{fou}
\int \prod_{i=1}^4 dg_idg_i'\sum_{i=1}^4(|\nabla_{g_i}|+|\nabla_{g'_i}|)\psi(g_1,g_2,g_3,g_4)\bar{\psi}(g_1',g_2',g_3,g_4)\psi(g_1',g_2',g_3',g'_4)\bar{\psi}(g_1,g_2,g_3',g'_4),
\end{equation}
where $\nabla_{g_i}$ denotes the gradient on $U(1)$ acting on the variable $g_i$. In Fourier space, denoting the Fourier coefficients of $\psi$ by $T_{\vec{p}}$, \eqref{fou} is reduced to:
\begin{equation}
\sum_{\vec{p},\vec{p}^{\prime}\in\mathbb{Z}^4}T_{p_1,p_2,p_3,p_4}\bar{T}_{p_1',p_2',p_3,p_4}T_{p_1',p_2',p_3',p_4'}\bar{T}_{p_1,p_2,p_3',p_4'}\sum_{i=1}^4(|p_i|+|p_i'|),
\end{equation}
which we denote by $\Tr_{b_{12}}^{\prime}[\psi,\bar{\psi}]$. The following interaction term $S_{int}$ is then chosen as:
\begin{equation}\label{interaction}
S_{int}[\psi,\bar{\psi}]:=\lambda_1\sum_{i=1}^4\Tr_{b_i}[\psi,\bar{\psi}]+\lambda_2\sum_{i=2}^4\Tr_{b_{1i}}^{\prime}[\psi,\bar{\psi}],
\end{equation}

Thus, all the components of our theory are defined by Eq. \eqref{interaction} and Eq. \eqref{skin} at least formally, if one disregards the divergences typically occurring in the computation of Feynman amplitudes.

\noindent

A regularization prescription is required in order to prevent the appearance of divergences in the course of computing Feynman amplitudes. For this purpose, we have chosen to work with Schwinger regularization. The propagator that is associated with the kinetic action (\ref{skin}) will be denoted by $C_{\vec{p},\vec{p}^{\prime}}$. In the momentum representation, this propagator is diagonal, meaning that it connects only momentum modes that are equal. Concretely, we have the relation $C_{\vec{p},\vec{p}^{\prime}} = \delta_{\vec{p},\vec{p}^{\prime}} C(\vec{p})$, where the Kronecker symbol $\delta_{\vec{p},\vec{p}^{\prime}}$ enforces the conservation of momentum, and $C(\vec{p})$ is a function that depends solely on $\vec{p}$. The latter is given by the following expression: \begin{equation}
C(\vec{p}\,)=\dfrac{1}{\sum_i|p_i|^{2\eta}+m^{2\eta}}=\int_{0}^{\infty}d\alpha \, e^{-\alpha \left(\sum_i|p_i|^{2\eta}+m^{2\eta}\right)},
\end{equation}

and the Schwinger regularization introduces a cut-off $1/\Lambda^{2\eta}$ in the $\alpha$-integration, thereby defining a regularized propagator $C_{\Lambda}$:
\begin{equation}
C_{\Lambda}(\vec{p}\,):=\int_{1/\Lambda^{2\eta}}^{\infty}d\alpha \, e^{-\alpha \left(\sum_i|p_i|^{2\eta}+m^{2\eta}\right)},
\end{equation}

so that the contribution of high momenta decreases exponentially. The perturbative theory is entirely captured by Feynman graphs, whose vertices are interaction bubbles. The edges, corresponding to Wick contractions between fields, are associated with an additional color, '0', and depicted by dotted lines. Consequently, such a Feynman diagram admits a natural colored extension as a bipartite, regular, 5-colored graph, an example of which is given in Figure \ref{fig2}. Within this perturbative framework, the connected Schwinger function $S_N$ with $N$ external lines can be expanded as a sum of Feynman amplitudes indexed by line-connected graphs $\mathcal{G}_N$:
\begin{equation}
S_N=\sum_{\mathcal{G}_N}\dfrac{1}{s(\mathcal{G}_N)}(-\lambda_1)^{V_1}(-\lambda_2)^{V_2}\mathcal{A}_{\mathcal{G}_N},
\end{equation}

where $V_i$ denotes the number of vertices of type $i$, $s(\mathcal{G}_N)$ is the symmetry factor, and $\mathcal{A}_{\mathcal{G}_N}$ is the Feynman amplitude, whose explicit expression follows directly from the Feynman rules:
\begin{align}\label{amplitude}
\mathcal{A}_{\mathcal{G}_N}=&\int_{1/\Lambda^{2\eta}}^{\infty}\prod_{l\in \mathcal{L}(\mathcal{G}_N)}d\alpha_le^{-\alpha_lm^{2\eta}}\prod_{f\in\mathcal{F}(\mathcal{G}_N)}\sum_{p_f\in\mathbb{Z}}p_f^{\eta(\partial f)}e^{-(\sum_{l\in\partial f}\alpha_l)p_f^{2\eta}}\cr
&\times \prod_{f\in\mathcal{F}_{ext}(\mathcal{G}_N)}p_f^{\eta(\partial f)}e^{-\left(\sum_{l\in\partial f}\alpha_l\right)p^{2\eta}_f},
\end{align}

where $\mathcal{F}$, $\mathcal{F}_{ext}$, and $\mathcal{L}$ denote the sets of internal faces, external faces, and lines of the graph $\mathcal{G}_N$, respectively. $F$ represents the cardinality of $\mathcal{F}$, while $\eta(\partial f)$ is the momentum power arising from the derivative couplings along the face $f$, satisfying: $\sum_f \eta(\partial f) = V_2$.
\begin{figure}
\begin{center}
\includegraphics[scale=0.8]{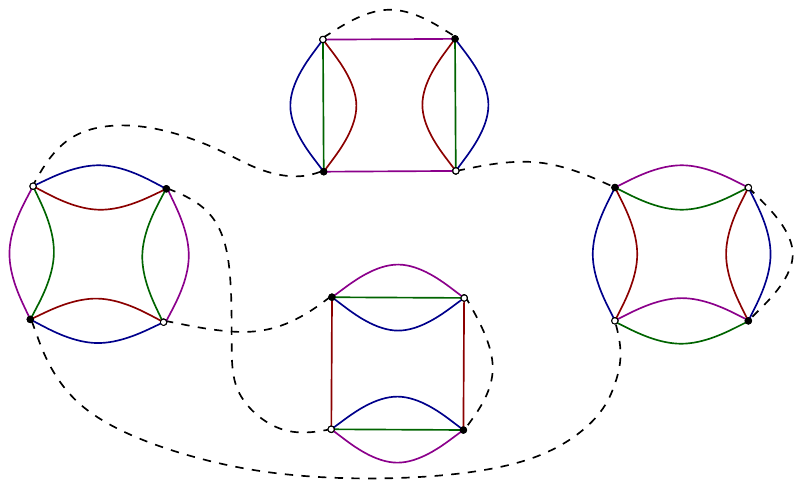}
\end{center}
\caption{Vacuum Feynman graph with four vertices.}\label{fig2}
\end{figure}

\section{Power counting}\label{sectionpower}

Power counting provides a criterion for classifying subgraphs as either divergent or convergent. Generally, there are two sources of divergences: the ultraviolet (UV) sector, occurring as $\alpha \to 0$, and the infrared (IR) sector, occurring as $\alpha \to \infty$. In our model, IR divergences are eliminated by the choice of a non-zero mass parameter and, further, by the compactness of the $U(1)$ group structure, consequently, divergences arise solely from the UV sector. In this section, we employ a slice decomposition multi-scale analysis to extract the exact power counting. Subsequently, we use this power counting to classify models according to the choice of parameters. Note that we extend the group structure to $U(1)^D$ for the more general case, as in references \cite{carrozza2014tensorial,samary2014just}.

\subsection{Multi-scale analysis}


With this choice of the group $U(1)^D$, the momenta $p_f$ appearing in Eq. \eqref{amplitude} become $D$-dimensional vectors, and the sums over each face are performed over $\mathbb{Z}^D$. The multi-scale decomposition introduces a slicing of the $\alpha$-integration. Choosing a scale parameter $M > 1$, we decompose the regularized propagator $C_{\Lambda}$ into slices $[M^{-2\eta i}, M^{-2\eta(i-1)}]$ as follows:
\begin{align}\label{slice}
&C_{0}(\mathbb{P})=\int_{1}^{\infty}d\alpha e^{-\alpha m^{2\eta}}\prod_{i=1}^4e^{-\alpha\sum_{I=1}^Dp_{iI}^{2\eta}},\\
\forall i\geq 1, \quad &C_i(\mathbb{P})=\int_{M^{-{2\eta}i}}^{M^{-{2\eta}(i-1)}}d\alpha e^{-\alpha m^{2\eta}}\prod_{i=1}^4e^{-\alpha\sum_{I=1}^Dp_{iI}^{2\eta}}, 
\end{align}
so that:
\begin{equation}
C_{\Lambda}=\sum_{i=0}^{\rho} C_i,
\end{equation}
where the integer $\rho$, the scale parameter $M$, and the cut-off $\Lambda$ are related by $\Lambda = M^{\rho}$. Furthermore, the elements of the $(4 \times D)$ matrix $\mathbb{P}$ appearing in Eq. \eqref{slice} are defined as $\mathbb{P}_{iI} = p_{iI}$. Notably, the propagator $C_i$ restricted to slice $i$ admits the following trivial uniform bound:
\begin{equation}\label{bound1}
|C_i(\mathbb{P})|\leq KM^{-2\eta i}e^{-\delta M^{-i}|\mathbb{P}|},
\end{equation}
for some positive constants $K$ and $\delta$, with the norm defined by $|\mathbb{P}| := \sum_{i,I} |p_{iI}|$. Given a tensor graph $\mathcal{G}$ and its corresponding amplitude $\mathcal{A}_{\mathcal{G}}$, one can define the multi-scale expansion: $\mathcal{A}_{\mathcal{G}} = \sum_{\mu} \mathcal{A}_{\mathcal{G},\mu}$, where $\mu = \{i_{l_1}, \dots, i_{l_{L(\mathcal{G})}}\}$ denotes a discrete scale attribution for each line of $\mathcal{G}$, and $\mathcal{A}_{\mathcal{G},\mu}$ is the amplitude restricted to that attribution. From the bound (\ref{bound1}) and the explicit expression (\ref{amplitude}), we obtain the following theorem:
\begin{theorem}\label{th1}
The amplitude $\mathcal{A}_{\mathcal{G},\mu}$ for a given scale attribution $\mu$ admits the following uniform bound:
\begin{equation}
|\mathcal{A}_{\mathcal{G},\mu}|\leq K\prod_{i}\prod_{k=1}^{k(i)}M^{\omega(\mathcal{G}_i^k)},
\end{equation}
where $K$ is a constant depending on the graph $\mathcal{G}$, and $\mathcal{G}_i^k$ denotes the $k$-th connected component of the subgraph $\mathcal{G}_i$, which contains only the lines with a scale assignment $j \geq i$. The divergent degree (or superficial degree of divergence) $\omega$ is then defined by:
\begin{equation}
\omega(\mathcal{G}_i^k)=-2\eta L(\mathcal{G}_i^k) +D\bigg(F(\mathcal{G}_i^k)+\sum_{f\in\mathcal{F}(\mathcal{G}_i^k)}\eta(\partial f)\bigg)\,.
\end{equation}
\end{theorem}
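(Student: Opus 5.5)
We follow the standard multi-scale argument used for tensorial field theories \cite{carrozza2014tensorial,samary2014just}: bound each propagator slice uniformly with \eqref{bound1}, bound every face sum by the smallest scale along its boundary, and then reorganize the product of these factors as a product over the connected components $\mathcal{G}_i^k$ of the Gallavotti–Nicolò tree. The first step is to insert the bound \eqref{bound1} for each line $l$ with scale $i_l$. This produces a factor $KM^{-2\eta i_l}$, and the exponential decay is distributed over the faces through which the line passes. Since a line of scale $i_l$ carries $e^{-\delta M^{-i_l}|p_f|}$ for every face $f$ containing $l$, each internal face keeps at least the decay $e^{-\delta M^{-i_f}|p_f|}$, where $i_f:=\min_{l\in\partial f} i_l$. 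The external faces are bounded trivially, which contributes only a constant at fixed external momenta and can be absorbed into $K$.

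The second step is to perform the face sums. For an internal face carrying the derivative weight $\eta(\partial f)$, we need to estimate
\begin{equation*}
\sum_{p_f\in\mathbb{Z}^D}|p_f|^{\eta(\partial f)}\,e^{-\delta M^{-i_f}|p_f|}\leq K' M^{D i_f}\,M^{i_f\eta(\partial f)}.
\end{equation*}
The bound follows by comparing the sum with an integral after rescaling $p_f\to M^{i_f}p_f$. We expect this to be the main obstacle, because the monomial $|p_f|^{\eta(\partial f)}$ lives on a $D$-dimensional momentum, and the statement asks for the exponent $D\,\eta(\partial f)$ rather than $\eta(\partial f)$. We therefore need to fix the convention that the derivative factor is counted as one power per component, i.e.\ $\prod_I|p_{fI}|^{\eta(\partial f)}$. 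Alternatively, we can bound crudely by $|\mathbb{P}|$, which only makes the estimate weaker and is still compatible with an upper bound of the stated form. Under that convention each face gives $M^{D i_f(1+\eta(\partial f))}$. Collecting all factors, the bound reads
\begin{equation*}
|\mathcal{A}_{\mathcal{G},\mu}|\leq K\prod_{l}M^{-2\eta i_l}\prod_{f\in\mathcal{F}}M^{D i_f(1+\eta(\partial f))}.
\end{equation*}

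The final step is purely combinatorial. We use the telescoping identities
\begin{equation*}
M^{-2\eta i_l}=\prod_{i=1}^{i_l}M^{-2\eta},\qquad M^{D i_f(1+\eta(\partial f))}=\prod_{i=1}^{i_f}M^{D(1+\eta(\partial f))}.
\end{equation*}
Then we observe that a line $l$ belongs to $\mathcal{G}_i$ if and only if $i\leq i_l$, and that a face $f$ is an internal face of some connected component $\mathcal{G}_i^k$ if and only if $i\leq i_f$, since all of its lines must have scale at least $i$. This lets us exchange the order of the products. For each $i$, the factors split over the components $k=1,\dots,k(i)$, and the component $\mathcal{G}_i^k$ collects exactly $M^{-2\eta L(\mathcal{G}_i^k)+D\left(F(\mathcal{G}_i^k)+\sum_{f}\eta(\partial f)\right)}=M^{\omega(\mathcal{G}_i^k)}$. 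This yields the claimed bound, with the constant $K$ depending only on $\mathcal{G}$, through the number of lines and faces and the external data.
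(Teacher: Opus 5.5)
Your proposal is correct and follows the paper's proof essentially step by step: you use the uniform slice bound \eqref{bound1} on each line, keep on each face only the decay of its lowest-scale line, apply the per-face estimate giving $M^{D i_f(1+\eta(\partial f))}$, and regroup by telescoping over the connected components $\mathcal{G}_i^k$. The convention issue you flag is handled the same way in the paper, which implicitly applies the one-dimensional estimate $\sum_{p\in\mathbb{Z}}|p|^n e^{-K|p|}\leq C\,K^{-(n+1)}$ componentwise; and even with a single scalar power $|p_f|^{\eta(\partial f)}$ you would get the sharper factor $M^{i_f(D+\eta(\partial f))}\leq M^{D i_f(1+\eta(\partial f))}$, so the stated bound holds in either reading.
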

\textbf{Proof}: 
Thanks to the uniform bound (\ref{bound1}), the overall bound of the amplitude for a given scale attribution $\mu$ involves the following contribution:
\begin{equation}
\prod_{l\in\mathcal{L}(\mathcal{G})}M^{-2\eta i_l}=\prod_{l\in\mathcal{L}(\mathcal{G})}\prod_{i=1}^{i_l}M^{-2\eta}\,.
\end{equation}
We then define the subgraphs $\mathcal{G}_i^k$ as the connected components, labeled by $k=1, \dots, k(i)$, of the subgraph of $\mathcal{G}$ consisting of all lines with a scale assignment $j \geq i$. From this definition, we obtain:
\begin{align}
\prod_{l\in\mathcal{L}(\mathcal{G})}M^{-2\eta i_l}=\prod_i\prod_{l\in\mathcal{L}(\cup_{k=1}^{k(i)}\mathcal{G}_i^k)}M^{-2\eta}=\prod_{i,k}M^{-2\eta L(\mathcal{G}_i^k)}\,.\label{cont11}
\end{align}
A second contribution arises from the summation over the momenta along each face. Letting $i(f)$ denote the scale assignment of the lowest-scale line in the boundary of face $f$, we obtain the following trivial bounds:
\begin{equation}
\prod_{l\in\partial f}e^{-M^{-i_{l}}\sum_{I=1}^D|p_{f,I}|}\leq e^{-M^{-i(f)}\sum_{I=1}^D|p_{f,I}|}\,.
\end{equation}
With this choice, the factor generated by the summation over momenta is minimized, ensuring that the bound is optimal. The sums over $p_f$ produce factors that depend on $\eta(\partial f)$, which can be absorbed into a global constant specific to the graph. Consequently, we only consider the scaling dependence on $M$. Since:
\begin{equation}
\sum_{p\in\mathbb{Z}}|p|^n e^{-K|p|}\simeq K^{n+1}\,,
\end{equation}
one finds the following contribution:
\begin{equation}
\prod_{f\in\mathcal{F}} M^{D(\eta(\partial f)+1)i_f}\,,
\end{equation}
and following the same strategy as for the contribution (\ref{cont11}), one obtains:
\begin{align}
\nonumber\prod_{f\in\mathcal{F}} M^{D(\eta(\partial f)+1)i_f}&=\Big[\prod_{i,k}\prod_{f\in\mathcal{F}(\mathcal{G}_i^k)}M^D\Big]\times \Big[\prod_{i,k}\prod_{f\in\mathcal{F}(\mathcal{G}_i^k)}M^{D\eta(\partial f)}\Big]\\
&=\prod_{i\in\mathbb{N}}\prod_{i=1}^{k(i)}M^{D\big(F(\mathcal{G}_i^k)+\sum_{f\in\mathcal{F}(\mathcal{G}_i^k)}\eta(\partial f)\big)}\,.\label{cont222Bis}
\end{align}
\noindent
Finally, merging the contributions \eqref{cont11} and \eqref{cont222Bis} yields the scaling dependence on $M$ as stated in the theorem.
\begin{flushright}
$\square$
\end{flushright}

\subsection{Classification of models: The necklace bubbles}

Consider a vacuum graph $\mathcal{G}$ and its associated colored extension $\mathcal{G}_c$, see appendix \ref{App0}, definition \ref{coloreddef}, for readers unfamiliar with these notions. Our objective is to derive an expression for the number of faces $F(\mathcal{G})$ which corresponds, up to the number of external faces, to the count of faces of color type $0i$ ($i \neq 0$) in $\mathcal{G}_c$, denoted as $|\mathcal{F}_{0i}(\mathcal{G}_c)|$ in terms of the Gurau degree $\varpi(\mathcal{G}_c)$. Lemma \ref{propdeg} of appendix \ref{App0} provides the fundamental link between the face count and the degree. Furthermore, because of the definition \ref{coloreddef} of the appendix, the total number of faces in the colored extension $|\mathcal{F}(\mathcal{G}_c)|$ decomposes as:

\begin{equation}
|\mathcal{F}(\mathcal{G}_{c})|=F(\mathcal{G})+|\mathcal{F}_c^{\neq 0}(\mathcal{G}_c)|\,.
\end{equation}

The faces in the set $\mathcal{F}_c^{\neq 0}(\mathcal{G}_c)$ do not contain any edges of color $0$. Let $\mathcal{G}_c^{(\hat{0})}$ be the subgraph obtained from $\mathcal{G}_c$ by deleting all edges of color $0$. This subgraph consists of $|\mathcal{V}(\mathcal{G})|$ connected components, denoted as $\mathcal{B}_{k}$, where $k$ labels the original vertices of $\mathcal{G}$. Each component $\mathcal{B}_{k}$ represents a $3$-bubble (an interaction vertex). The total number of internal faces $|\mathcal{F}_c^{\neq 0}(\mathcal{G}_c)|$ is the sum of the faces within these bubbles. Applying Lemma \ref{propdeg} for $d=3$ to each bubble, we deduce that:
\begin{equation}
|\mathcal{F}_c^{\neq 0}(\mathcal{G}_c)|=\sum_{k=1}^{V(\mathcal{G})}\bigg(3p_k+3-\varpi(\mathcal{G}_c^{(0)k})\bigg)\,,
\end{equation}
and because : $\sum_{k=1}^{V(\mathcal{G})}p_k=L(\mathcal{G})$, we obtain:
\begin{equation}
|\mathcal{F}_c^{\neq 0}(\mathcal{G}_c)|=\bigg(3L(\mathcal{G})+3V(\mathcal{G})-\sum_{k=1}^{V(\mathcal{G})}\varpi(\mathcal{G}_c^{(0)k})\bigg)\,.
\end{equation}
Then, using the lemma \ref{propdeg}:
\begin{equation}
F(\mathcal{G})=3\big[L(\mathcal{G})-V(\mathcal{G})+1\big]+1+\bigg[\sum_{k=1}^{V(\mathcal{G})}\varpi(\mathcal{G}_c^{(0)k})-\frac{1}{3}\varpi(\mathcal{G}_c)\bigg]\,,
\end{equation}
and finally:
\begin{equation}
F(\mathcal{G})=3\big[L(\mathcal{G})-V(\mathcal{G})+1\big]+1+\bigg[\sum_{k=1}^{V(\mathcal{G})}\varpi(\mathcal{G}_c^{(0)k})-\frac{1}{3}\varpi(\mathcal{G}_c)\bigg]\,.
\end{equation}
Using of the topological constraint $2V=L$, one finds, with the definition given in Theorem \ref{th1}:
\begin{equation}
\omega(\mathcal{G})=(3D-4\eta)V(\mathcal{G})+3D+DV_2(\mathcal{G})+D\sum_{k=1}^{V(\mathcal{G})}\varpi(\mathcal{G}_c^{(0)k})+D\rho(\mathcal{G})\,,
\end{equation}
where we make use of the constraint $\sum_f \eta(\partial f) = V_2$, and define:
\begin{equation}
\rho(\mathcal{G})=1-\frac{1}{3}\varpi(\mathcal{G}_c)\,.
\end{equation}
Because they are melonic, the Gurau degree vanishes for interaction bubbles of type \ref{fig1}a ($\varpi(\mathcal{B}_1) = 0$). Conversely, it is straightforward to show that the degree is equal to $3$ for interaction bubbles of type \ref{fig1}b ($\varpi(\mathcal{B}_2) = 3$). Consequently, we have:
\begin{equation}
\sum_{k=1}^{V(\mathcal{G})}\varpi(\mathcal{G}_c^{(0)k})=3V_2(\mathcal{G})\,,
\end{equation}
and:
\begin{equation}
\omega(\mathcal{G})=(3D-4\eta)V(\mathcal{G})+3D+4DV_2(\mathcal{G})+D\rho(\mathcal{G})\,.
\end{equation}
We restrict our attention to the case $D=1$, and the previous analysis can be extended to non-vacuum graphs in a similar fashion. Consider $\mathcal{G}$ as a non-vacuum graph. A vacuum graph $\hat{\mathcal{G}}$ is called a closure of $\mathcal{G}$ if $\mathcal{G}$ can be obtained from $\hat{\mathcal{G}}$ by cutting a subset of its internal lines. Furthermore, the closure is said to be maximal if the resulting number of faces is maximized. For a maximal closure $\hat{\mathcal{G}}$, it follows that:
\begin{equation}
\omega(\hat{\mathcal{G}})=\omega(\mathcal{G})-\frac{3}{4}N(\mathcal{G})+\Delta F(\hat{\mathcal{G}})+\Delta\eta(\hat{\mathcal{G}})\,,
\end{equation}
where $\Delta F(\hat{\mathcal{G}})$ and $\Delta\eta(\hat{\mathcal{G}})$ represent the variation in the number of faces and the number of derivative couplings contained within those faces, respectively. Consequently, it follows that:
\begin{equation}
\Delta F(\hat{\mathcal{G}})=F_{ext}-\mathcal{C}(\hat{\mathcal{G}}/\mathcal{G})=\dfrac{4}{2}N(\mathcal{G})-\mathcal{C}(\hat{\mathcal{G}}/\mathcal{G})\,,
\end{equation}
where $\mathcal{C}(\hat{\mathcal{G}}/\mathcal{G})$ denotes the difference between the number of face-connected components of the graph $\mathcal{G}$ and its maximal closure $\hat{\mathcal{G}}$. Specifically, we define:$$\mathcal{C}(\hat{\mathcal{G}}/\mathcal{G}) = \text{Comp}_f(\mathcal{G}) - \text{Comp}_f(\hat{\mathcal{G}})\,,$$
This term accounts for the topological fragmentation that occurs when internal lines are cut to form external legs. As a direct consequence, the degree of divergence $\omega(\mathcal{G})$ of a graph $\mathcal{G}$ in the $1/N$ expansion is given by:
\begin{equation}
\omega(\mathcal{G}) = \omega_{\text{melo}} - \frac{2}{(d-1)!} \varpi(\hat{\mathcal{G}}_c) - \mathcal{C}(\hat{\mathcal{G}}/\mathcal{G})\,,
\end{equation}
where $\omega_{\text{melo}}$ is the perturbative degree of divergence of the corresponding melonic vacuum diagrams.\\

\paragraph{Melonic sector.} 
First, we investigate the melonic sector, that implies $\varpi(\mathcal{G}_c) = 0$ and $\varpi(\mathcal{G}_c^{(0)k}) = 0$ for all $k$, which consequently leads to $V_2 = 0$. Furthermore, since melonic graphs are face-connected (see Corollary \ref{cor1}), a moment of reflection shows that $\Delta F(\hat{\mathcal{G}}) = 3N/2 + 1$. Substituting these values into our general expression, we find:
\begin{equation}
\omega(\mathcal{G})=(3-4\eta)V(\mathcal{G})-\frac{N(\mathcal{G})}{2}(3-2\eta)+3\,.
\end{equation}
The theory is then renormalizable only if $3 - 4\eta \leq 0 \implies \eta \geq 3/4$. Within the domain $\eta \in [3/4, 1]$, the condition $3 - 2\eta \geq 1$ holds. For the specific value $\eta = 3/4$, the melonic sector becomes just-renormalizable, and the degree of divergence reduces to:
\begin{equation}
\omega(\mathcal{G})=3-\frac{3}{4}N(\mathcal{G})\,.
\end{equation}
As a result, only graphs with 2 and 4 external lines remain divergent. For $\eta > 3/4$, the degree of divergence depends explicitly on the number of vertices, making the theory super-renormalizable. At this stage, it is essential to clarify the claim that the melonic sector is just-renormalizable. This stems from the fact that the melonic sector is, in a sense, isolated from other contributions: the contraction of a melonic subgraph cannot generate an effective vertex of Type 2. Consequently, the renormalization flow remains confined within the melonic boundary.

\noindent
\paragraph{Necklace sector.} The second limiting case that can be easily investigated is the 'matrix sector', where $V = V_2$. In this case, the topology is dominated by matrix-like ribbon structures rather than melonic trees, and the degree of divergence becomes:
\begin{equation}
\omega(\mathcal{G})\leq(7-4\eta)V(\mathcal{G})+\bigg[(1+\rho)-N(\mathcal{G})(1-\eta)\bigg]\,.
\end{equation}
It can be proved, from the lemma \ref{propdeg} that : $\varpi(\mathcal{G}_c)\geq 4\sum_k\sum_{k=1}^{V(\mathcal{G})}\varpi(\mathcal{G}_c^{(0)k})=12V$, and:
\begin{equation}
\omega(\mathcal{G})\leq(3-4\eta)V(\mathcal{G})+\bigg[2-N(\mathcal{G})(1-\eta)\bigg]\,.
\end{equation}
Once again, the dependence on the vertex count is eliminated by choosing $\eta = 3/4$. This choice appears to be the most natural for fostering the proliferation of both phases, as it ensures that the different topological sectors compete on an equal footing at the critical point.\\

The preceding analyses demonstrate that for $\eta = 3/4$, the theory becomes just-renormalizable within these two limiting sectors. In the melonic case, the renormalizable sector is restricted to quartic interactions ($p=2$). Conversely, in the 'matrix' limit, the renormalizable sector is limited to octic interactions ($p=4$).

\subsection{Canonical dimension}
\label{sectioncanonical}

The canonical dimension corresponds to the power of $e^{s}$ by which the couplings must be multiplied to make the flow equations autonomous (i.e., independent of $s$). One way to determine these powers is to identify the transformations that render the flow equations autonomous, which can be done directly from the equations derived in the following section. They can also be fixed beforehand, thereby avoiding the ambiguities that sometimes arise with the former strategy. First, let us note that, unlike standard field theories defined on a background spacetime, there is no notion of an external scale to the flow here. In other words, it is the behavior of the flow in the vicinity of the Gaussian fixed point and more precisely its local structure determines these dimensions. This behavior differs across regimes (UV or IR), and there is no global notion of dimension: it is scale-dependent. An extended discussion can be found in \cite{Lahoche:2018oeo,Benedetti_2015}, as well as in references \cite{lahoche2024functional}.\\

Following the method described in \cite{Carrozza_2017a}, we define the canonical dimension from the (saturated) superficial degree of divergence, setting $\eta=3/4$. Thus, for any melonic bubble $b$ with coupling $\lambda_b$,
\begin{equation}
[ \lambda_b] := 3-\frac{3}{4} N(b)\,,
\end{equation}

 where $N(b)$ denotes the number of fields in the bubble $b$. The quartic melon is therefore dimensionless. Similarly, for any necklace bubble $n$:
\begin{equation}
[\lambda_n]:= 2-N(b)/4\,,
\end{equation}
and the octic necklace interaction is dimensionless. Under this definition, renormalizable interactions always correspond to dimensionless couplings, and for the rest of this paper, we shall restrict our focus to the renormalizable sector.

Based on this power-counting, it is possible to provide a preliminary perturbative calculation of the renormalization group. This analysis, detailed in Appendix \ref{App1}, shows that: (i) the Gaussian fixed point is UV-unstable (not asymptotically free), raising the question of its UV completion, and (ii) non-Gaussian fixed points emerge, albeit beyond the strictly perturbative regime. Consequently, searching for fixed points requires the use of non-perturbative methods. It should be emphasized that the loss of asymptotic freedom is a novel feature brought about by the necklace enrichment, as discussed in  \cite{lahoche2015renormalization}.

\section{Non-perturbative functional renormalization group}
\label{sectionNP}

In this section, we apply the FRG formalism to the TGFTs introduced previously. We first derive the Wetterich equation within this specific tensorial context, subsequently employing it to extract the flow equations under a suitable truncation. For comprehensive reviews and diverse applications of the FRG, we refer the reader to \cite{Delamotte_2012}.

\subsection{Wetterich equation for tensor fields}

Starting from the generating functional defined in Eq. \eqref{genfunc}, we introduce the following one-parameter family of models:
\begin{equation}\label{family}
\mathcal{Z}_{s}[\bar{J},J]:=\int d\mu_{C}(\bar{\psi},\psi)e^{-S_{int}(\bar{\psi},\psi)-\Delta S_{s}[\bar{\psi},\psi]+\langle \bar{J},\psi \rangle+\langle\bar{\psi},J\rangle}.
\end{equation}
The additional term $\Delta S_{s}$, which we refer to as the IR regulator, is chosen to be ultralocal in the momentum (Fourier) representation and have the following structure:
\begin{equation}
\Delta S_{s}[\bar{\psi},\psi]:= \langle\bar{\psi}, R_{s} \psi \rangle= \sum_{\vec{p} \in \mathbb{Z}^4}R_{s}(\vec{p}\,)\bar{T}_{\vec{p}}T_{\vec{p}}.
\end{equation}

As is standard, in order for the partition function to be well-defined, we assume the presence of a UV regulator for instance, a sharp cutoff on the momenta $|p| \leq \Lambda$ (using the standard norm $|p| = \sqrt{\vec{p} \cdot \vec{p}}$). In practice, one often proceeds by taking the limit $\Lambda \to \infty$, as the Wetterich equation remains well-defined in this regime, even though its formal derivation from the path integral becomes more subtle.

\noindent

The function $R_{s}(\vec{p}\,)$ is positive definite, and we briefly recall its standard properties (where $\Lambda$ denotes the fundamental ultraviolet cutoff):

\noindent
$\bullet$ $R_{s}(\vec{p}\,)\geq 0$ for all $\vec{p}\in \mathbb{Z}^d$ and $s\in(-\infty,+\infty)$.\\

\noindent
$\bullet$ $\lim_{s\to-\infty} R_s(\vec{p}\,) =  0$: ensuring the boundary condition \eqref{family}: $\mathcal{Z}_{s=-\infty}[\bar{J},J]=\mathcal{Z}[\bar{J},J].$ Physically, it means that the original model is recovered when all the fluctuations are integrated out.  \\

\noindent
$\bullet$  $\lim_{s\to\ln\Lambda} R_s(\vec{p}\,) =  +\infty$, ensuring that all the fluctuations are frozen when $e^s=\Lambda$. As a consequence, the bare action will be represented by the initial condition for the flow at $s=\ln\Lambda$.\\

\noindent
$\bullet$ For $-\infty<s<\ln \Lambda$, the cutoff $R_{s}$ is chosen so that 
\begin{equation}
R_{s}(|p|>e^s)\ll 1\,,
\end{equation}
The first condition ensures that the UV modes ($|p| > e^s$) remain largely unaffected by the regulator, while the requirement that $R_{s}(|p| < e^s) \sim 1$ (or $R_{s}(|p| < e^s) \gg 1$) guarantees that IR modes ($|p| < e^s$) are effectively decoupled.\\

\noindent
Finally, in addition to the standard properties discussed above, we impose the requirement that $\partial_s R_s(\vec{p}) \leq 0$ for all $\vec{p} \in \mathbb{Z}^d$ and $s \in (-\infty, +\infty)$. This ensures that high-momentum modes are never suppressed more than low-momentum modes, thereby maintaining a consistent coarse-graining procedure throughout the RG flow.

\noindent

As it stands, \eqref{family} defines an infinite-dimensional deformation of the original partition function. However, the specific choice of the IR regulator, provided it satisfies the aforementioned requirements, is secondary to the parametric dependence on $s$. From a Wilsonian perspective, the former represents a choice of coarse-graining scheme, while the latter corresponds to the coarse-graining scale. Since our primary interest lies in the scale dependence of the theory, we assume that a specific cutoff function has been fixed and treat \eqref{family} as a one-parameter family of theories. Although approximation schemes generally introduce a spurious dependence of physical quantities on the chosen regularization, we will not discuss this point further here and refer the reader to the literature on optimization techniques \cite{DePolsi:2022wyb,Balog_2019,canet2003optimization,Berges_2002}.

\noindent
We define the one-parameter family of connected Schwinger functionals (or free energies) as:
\begin{equation}\label{free}
W_{s}:=\log \mathcal{Z}{s}[\bar{J},J],
\end{equation}
and, by taking their Legendre transform, a one-parameter family of effective actions $\Gamma{s}$, referred to as the effective average action and defined as:
\begin{equation}\label{legendre}
\Gamma_{s}[\bar{\phi},\phi]+\langle \bar{\phi}, R_{s}\phi\rangle=\langle \bar{J}, \phi\rangle+\langle \bar{\phi},J \rangle-W_{s}[\bar{J},J],
\end{equation}
where the source $J$ is expressed as a function of the classical field (or mean field) $\phi$ by inverting the relation:
\begin{equation}\label{meanfield}
\phi =\dfrac{\delta W_{s}}{\delta \bar{J}}.
\end{equation}

Interestingly, when translated into the effective average action framework, the properties of the IR regulator $\Delta S_{s}$ imply the following limits:

\noindent
$\bullet$ $\Gamma^{int}_{s=\ln \Lambda}=S_{int}$, so that when all the fluctuations are frozen, the effective average action coincides with the initial microscopic action.\\

\noindent
$\bullet$ $\Gamma_{s=-\infty}=\Gamma$, meaning that when all the fluctuations are integrated out, the effective average action coincides with the full effective action. \\

\noindent
By differentiating the effective average action \eqref{legendre} with respect to the renormalization scale $s$, we obtain the exact functional renormalization group equation, known as the Wetterich equation:
\begin{equation}\label{Wetterich}
\partial_s\Gamma_s=\sum_{\vec{p}\in \mathbb{Z}^4}\partial_s R_s(\vec{p}\,)[\Gamma_s^{(2)}+R_s]^{-1}(\vec{p},\vec{p}\,) \quad where \quad 
\Gamma_s^{(2)}(\vec{p},\vec{p}\,^\prime)=\frac{\partial^2\Gamma_s}{\partial T_{\vec{p}}\partial\bar{T}_{\vec{p}\,'}}\,.
\end{equation}
In this equation, $\Gamma_s^{(2)} + R_s$ corresponds to the inverse of the connected two-point function $\delta^2 W_s[J, \bar{J}] / \delta \bar{J} \delta J$. Extracting non-perturbative information from the exact flow equation \eqref{Wetterich} requires a suitable approximation scheme, typically involving the restriction of the flow to a finite-dimensional functional subspace. A standard strategy, known as the truncation method, consists of proposing an ansatz for $\Gamma_s$ that recovers the original action in the UV limit. In this approach, one retains only the terms on the right-hand side of the Wetterich equation that project onto the operators already present in the chosen ansatz. For our purposes, we adopt the following truncation:
\begin{align}\label{ansatz} 
\nonumber\Gamma_s[T,\bar{T}]:=\sum_{\vec{p}\in\mathbb{Z}^4}&\bar{T}_{\vec{p}}\bigg(Z_1(s)\sum_{i=1}^4|p_i|^{2\eta}+Z_2(s)\sum_{i=1}^4|p_i|+m^{2\eta}(s)\bigg)T_{\vec{p}}+\bigg[\lambda_1(s)\sum_{i=1}^4\sum_{\{\vec{p}_l\}}\mathcal{W}^{(i), {\vec{\bar{p}}}_1,{\vec{\bar{p}}}_2}_{melo\,\vec{p}_1,\vec{p}_2}\\\nonumber
&+\sum_{i=2}^4\sum_{\{\vec{p}_l\}}\mathcal{W}^{(1i), {\vec{\bar{p}}}_1,{\vec{\bar{p}}}_2}_{neck\,\vec{p}_1,\vec{p}_2}\big(\lambda_2(s)+\lambda_3(s)\sum_{i=1}^4(|p_{1i}|+|p_{2i}|)\big)\bigg]\prod_{l=1}^2T_{\vec{p}_l}\bar{T}_{{\vec{\bar{p}}}_l}\\\nonumber
&+\lambda_4\sum_{i=2}^4\sum_{\{\vec{p}_l\}}\mathcal{W}^{(1i), {\vec{\bar{p}}}_1,{\vec{\bar{p}}}_2,{\vec{\bar{p}}}_3}_{neck\,\vec{p}_1,\vec{p}_2,\vec{p}_3}\prod_{l=1}^3T_{\vec{p}_l}\bar{T}_{{\vec{\bar{p}}}_l}+\lambda_5\sum_{i=2}^4\sum_{\{\vec{p}_l\}}\mathcal{W}^{(1i), {\vec{\bar{p}}}_1,{\vec{\bar{p}}}_2,{\vec{\bar{p}}}_3,{\vec{\bar{p}}}_4}_{neck\,\vec{p}_1,\vec{p}_2,\vec{p}_3,\vec{p}_4}\prod_{l=1}^4T_{\vec{p}_l}\bar{T}_{{\vec{\bar{p}}}_l}\\
&=:\Gamma_{s\,kin}[T,\bar{T}]+\Gamma_{s\,int}[T,\bar{T}]\,,
\end{align}
where the kernels $\mathcal{W}^{(i), {\vec{\bar{p}}}_1,{\vec{\bar{p}}}_2}_{melo\,\vec{p}_1,\vec{p}_2}$ and $\mathcal{W}^{(1i), {\vec{\bar{p}}}_1,{\vec{\bar{p}}}_2}_{neck\,\vec{p}_1,\vec{p}_2}$, ..., are the product of Kronecker deltas that materialize the melonic and necklace interactions as defined above. The final step involves the choice of the regulator $R_s$, for which we adopt the following slightly modified (spectrally adapted) Litim-type cutoff \cite{litim2000optimisation}:
\begin{equation}\label{regulatorchoice}
R_s(\vec{p}\,)=\bigg[Z_1\bigg(e^{2\eta s}-\sum_{i=1}^4|p_i|^{2\eta}\bigg)+Z_2\bigg(e^s-\sum_{i=1}^4|p_i|\bigg)\bigg]\Theta\bigg(e^{2\eta s}-\sum_{i=1}^4|p_i|^{2\eta}\bigg)\,,
\end{equation}
where $\Theta$ denotes the Heaviside step function. The specific momentum dependence, combined with the inclusion of the wave function renormalization factor $Z_s$, leads to substantial simplifications that facilitate analytical calculations. Given these choices for the truncation and the regulator, we are now in a position to derive the flow equations for each parameter from the Wetterich equation \eqref{Wetterich}, which is the subject of the next section. For the sake of clarity, we shall derive these flow equations in two asymptotic regimes: the deep UV limit (large cutoff) and the deep IR limit (small cutoff). In the former, only the leading-order (LO) graphs are retained, while in the latter, the discrete sums reduce to their simplest terms (zero or one).

\begin{remark}
In the following, we shall employ $Z_1$ as the 'true' wave function renormalization, from which the renormalized coupling constants are defined. Consequently, $Z_2$ will be treated as an additional coupling. This approach is consistent in the UV limit, as the choice $\eta = 3/4$ determines the renormalizability of the theory and thus governs the Gaussian behavior in the ultraviolet. However, it should be noted that renormalizing with respect to $Z_2$ could be of interest in the infrared regime, which will be the subject of a subsequent analysis.
\end{remark}

\subsection{Flow equations in the deep UV limit}

We define:
\begin{align}
[\mathcal{K}_s]_{\vec{p},\vec{p}\,'}&=\bigg[Z_1(s)\sum_{i=1}^4|p_i|^{2\eta}+Z_2(s)\sum_{i=1}^4|p_i|+m^{2\eta}(s)\bigg]\delta_{\vec{p},\vec{p}\,'},\\
[\mathcal{F}_s]_{\vec{p},\vec{p}\,'}&=\frac{\partial^2\Gamma_{s\,int}}{\partial T_{\vec{p}}\partial\bar{T}_{\vec{p}\,'}}=\sum_{\mu=1}^3[\mathcal{F}_{s,2\mu}]_{\vec{p},\vec{p}\,'},
\end{align}
where $\mathcal{F}_{s,2\mu}$ denotes the contribution to $\mathcal{F}_{s}$ involving $2\mu$ fields. Thus, we have $\Gamma^{(2)}_s + R_s = \mathcal{K}_s + \mathcal{F}_s$. Letting $\Gamma_{s,2\mu}$ represent the terms in the truncated effective action \eqref{ansatz} containing $2\mu$ fields, we expand the right-hand side of the Wetterich equation \eqref{Wetterich} in powers of $\mathcal{F}_s$. We then obtain, in compact form (where the functional trace $\text{Tr}$ acts over matrices with indices in $\mathbb{Z}^4$):
\begin{align}\label{flow1}
\partial_s\Gamma_{s,2}&=-\Tr\big(\partial_sR_s\mathcal{K}_s^{-1}\mathcal{F}_{s,2}\mathcal{K}_s^{-1}\big),\\
\partial_s\Gamma_{s,4}&=-\Tr\big(\partial_sR_s\mathcal{K}_s^{-1}\mathcal{F}_{s,4}\mathcal{K}_s^{-1}\big)+\Tr\big(\partial_sR_s\mathcal{K}_s^{-1}\mathcal{F}_{s,2}\mathcal{K}_s^{-1}\mathcal{F}_{s,2}\mathcal{K}_s^{-1}\big),\\
\partial_s\Gamma_{s,6}&=-\Tr\big(\partial_sR_s\mathcal{K}_s^{-1}\mathcal{F}_{s,6}\mathcal{K}_s^{-1}\big)+\Tr\big(\partial_sR_s\mathcal{K}_s^{-1}\mathcal{F}_{s,2}\mathcal{K}_s^{-1}\mathcal{F}_{s,4}\mathcal{K}_s^{-1}\big)+ 2\leftrightarrow 4 \\\nonumber
&\quad-\Tr\big(\partial_sR_s\mathcal{K}_s^{-1}\mathcal{F}_{s,2}\mathcal{K}_s^{-1}\mathcal{F}_{s,2}\mathcal{K}_s^{-1}\mathcal{F}_{s,2}\mathcal{K}_s^{-1}\big),\\
\partial_s\Gamma_{s,8}&=\Tr\big(\partial_sR_s\mathcal{K}_s^{-1}\mathcal{F}_{s,2}\mathcal{K}_s^{-1}\mathcal{F}_{s,6}\mathcal{K}_s^{-1}\big)+ 2\leftrightarrow 6+\Tr\big(\partial_sR_s\mathcal{K}_s^{-1}\mathcal{F}_{s,4}\mathcal{K}_s^{-1}\mathcal{F}_{s,4}\mathcal{K}_s^{-1}\big)\\\nonumber
&\quad-\Tr\big(\partial_sR_s\mathcal{K}_s^{-1}\mathcal{F}_{s,2}\mathcal{K}_s^{-1}\mathcal{F}_{s,2}\mathcal{K}_s^{-1}\mathcal{F}_{s,4}\mathcal{K}_s^{-1}\big)+\sym(2\longrightarrow 2 \longrightarrow 4)\\\nonumber
&\quad+\Tr\big(\partial_sR_s\mathcal{K}_s^{-1}\mathcal{F}_{s,2}\mathcal{K}_s^{-1}\mathcal{F}_{s,2}\mathcal{K}_s^{-1}\mathcal{F}_{s,2}\mathcal{K}_s^{-1}\mathcal{F}_{s,2}\mathcal{K}_s^{-1}\big).
\end{align}

We shall now expand these expressions to extract the flow equations for each parameter. We focus specifically on the derivation of the flow equations in the deep UV limit.

\subsubsection{Derivation of flow equations}

Computing $\partial_sR_s$, we find:
\begin{align}\label{regulatorderive}
\nonumber &\partial_sR_s(\vec{p}\,)=\bigg[\bigg(\partial_sZ_1(s)\bigg(e^{2\eta s}-\sum_{i=1}^4|p_i|^{2\eta}\bigg)+2\eta Z_1(s)e^{2\eta s}\bigg)+\bigg(\partial_sZ_2(s)\bigg(e^{s}-\sum_{i=1}^4|p_i|\bigg)\\
&\quad+Z_2(s)e^{s}\bigg)\bigg]\Theta\bigg(e^{2\eta s}-\sum_{i=1}^4|p_i|^{2\eta}\bigg)
+2\eta e^{2\eta s}Z_2(s)\bigg(e^{s}-\sum_{i=1}^4|p_i|\bigg)\delta\bigg(e^{2\eta s}-\sum_{i=1}^4|p_i|^{2\eta}\bigg)\,,
\end{align}\label{regulatorderive}
\noindent
and it will be useful to define:
\begin{align}
\nonumber \bar{\partial}_sR_s(\vec{p})\,:=&\bigg[\bigg(\partial_sZ_1(s)\bigg(e^{2\eta s}-\sum_{i=1}^4|p_i|^{2\eta}\bigg)+2\eta Z_1(s)e^{2\eta s}\bigg)+\bigg(\partial_sZ_2(s)\bigg(e^{s}-\sum_{i=1}^4|p_i|\bigg)\\
&\quad+Z_2(s)e^{s}\bigg)\bigg]\Theta\bigg(e^{2\eta s}-\sum_{i=1}^4|p_i|^{2\eta}\bigg)\,,
\end{align}
and
\begin{equation}
\partial \partial_sR_s(\vec{p}\,)\,:=2\eta e^{2\eta s}Z_2(s)\bigg(e^{s}-\sum_{i=1}^4|p_i|\bigg)\delta\bigg(e^{2\eta s}-\sum_{i=1}^4|p_i|^{2\eta}\bigg)\,.
\end{equation}
Interestingly, the presence of the Heaviside step function restricts the summations on the right-hand side of the Wetterich equation to a finite domain. Within this region, the regulated kinetic term $\mathcal{K}_s$ becomes independent of the momentum $\vec{p}$:
\begin{equation}
\mathcal{K}_s=Z_1(s)e^{2\eta s}+Z_2e^s+m^{2\eta}(s).
\end{equation}
\noindent
\paragraph{Flow equations for $2$-valent bubbles couplings}
We begin with the $2$-valent bubble couplings, namely $m(s)$, $Z_1(s)$, and $Z_2(s)$. Accounting for the momentum independence of $\mathcal{K}_s$, the first flow equation \eqref{flow1} can be written as:
\begin{align}
\nonumber&\sum_{\vec{p}\in\mathbb{Z}^4}\bar{T}_{\vec{p}}\bigg(\partial_sZ_1\sum_{i=1}^4|p_i|^{2\eta}+\partial_sZ_2\sum_{i=1}^4|p_i|+\partial_sm^{2\eta}\bigg)T_{\vec{p}}=-2\Bigg\{\sum_{\vec{p}\in\mathbb{Z}^4}\frac{\bar{\partial}_sR_s(\vec{p}\,)+\partial \partial_sR_s(\vec{p}\,)}{[Z_1e^{2\eta s}+Z_2e^s+m^{2\eta}]^2}\\\nonumber
&\times \sum_{\vec{p}_1,\vec{p}_2}\bigg[\lambda_1(s)\sum_{i=1}^4\sym\mathcal{W}^{(i), {\vec{\bar{p}}}_1,{\vec{\bar{p}}}_2}_{melo\,\vec{p}_1,\vec{p}_2}+\sum_{i=2}^4\sym\mathcal{W}^{(1i), {\vec{\bar{p}}}_1,{\vec{\bar{p}}}_2}_{neck\,\vec{p}_1,\vec{p}_2}\big(\lambda_2(s)+\lambda_3(s)\sum_{i=1}^4(|p_{1i}|+|p_{2i}|)\big)\bigg]\Bigg\}T_{\vec{p}_1}\bar{T}_{\vec{p}_2},
\end{align}\label{equationflow2points}
with
\begin{equation}\label{one}
\sym\mathcal{W}^{{\vec{\bar{p}}}_1,{\vec{\bar{p}}}_2}_{melo\,\vec{p}_1,\vec{p}_2}:=\mathcal{W}^{{\vec{\bar{p}}}_1,{\vec{\bar{p}}}_2}_{\,\vec{p}_1,\vec{p}_2}+\mathcal{W}^{{\vec{{p}}}_1,{\vec{\bar{p}}}_2}_{\,\vec{\bar{p}}_1,\vec{p}_2}.
\end{equation}
\begin{figure}[htbp]
  \centering
\includegraphics[scale=0.9]{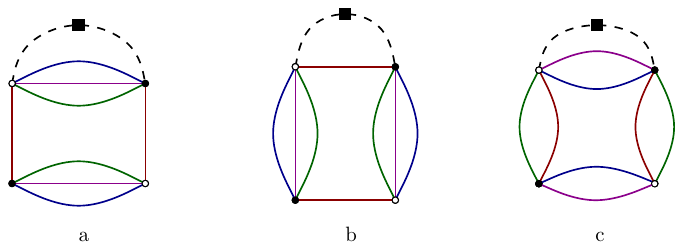} 
\caption{
Graphical representation of the typical contributions to the right-hand side of the flow equation \eqref{one}. The dotted line with a black square represents the insertion of the scale-derivative of the regulator, $\partial_s R_s$. Diagrams (a) and (b) illustrate the leading-order contributions arising from melonic interactions, characterized by their specific face-tracking structure. Diagram (c) depicts the contribution from necklace interactions, highlighting the different topological connectivity of these vertices.}
\label{fig3}  
\end{figure}

The r.h.s. of this equation naturally splits the contributions to $\partial_s Z_i(s)$ and $\partial_s m^{2\eta}(s)$ into two parts: the first arising from melonic interactions and the second from necklace interactions. We shall treat each contribution separately. These terms can be represented graphically as tadpoles, where the one-loop structure originates from the contraction of two vertices with the regulator insertion $\partial_s R_s(\vec{p}\,)\delta_{\vec{p}\vec{p}^\prime}$ (see Figure \ref{fig3}). Notably, diagram \ref{fig3}b is non-melonic and yields a sub-leading contribution, which may be neglected in the deep UV limit.

\noindent
$\bullet$\textbf{Melonic sector}.
We start with the melonic sector of equation \ref{one}, defined as:
\begin{align}\label{equationflow2pointsmelo}
\nonumber\sum_{\vec{p}\in\mathbb{Z}^4}&\bar{T}_{\vec{p}}\bigg(\partial_sZ_{1\,melo}\sum_{i=1}^4|p_i|^{2\eta}+\partial_sZ_{2\,melo}\sum_{i=1}^4|p_i|+\partial_sm^{2\eta}_{melo}\bigg)T_{\vec{p}}\\\nonumber
&=-\frac{2\lambda_1}{[Z_1e^{2\eta s}+Z_2e^s+m^{2\eta}]^2}\times \sum_{\vec{p}\in\mathbb{Z}^4}\Bigg\{\bigg[\bigg(\partial_sZ_1(s)\times
\bigg(e^{2\eta s}-\sum_{i=1}^4|p_i|^{2\eta}\bigg)+2\eta Z_1(s)e^{2\eta s}\bigg)\\\nonumber
&+\bigg(\partial_sZ_2(s)\bigg(e^{s}-\sum_{i=1}^4|p_i|\bigg)\quad+Z_2(s)e^{s}\bigg)\bigg]\Theta\bigg(e^{2\eta s}-\sum_{i=1}^4|p_i|^{2\eta}\bigg)
\\
&+2\eta e^{2\eta s}Z_2(s)\bigg(e^{s}-\sum_{i=1}^4|p_i|\bigg)\delta\bigg(e^{2\eta s}-\sum_{i=1}^4|p_i|^{2\eta}\bigg)\Bigg\}\sum_{\vec{p}_1,\vec{p}_2}\sum_{i=1}^4\mathcal{W}^{(i), {\vec{\bar{p}}}_1,{\vec{\bar{p}}}_2}_{melo\,\vec{p}_1,\vec{p}_2}T_{\vec{p}_1}\bar{T}_{\vec{p}_2}.
\end{align}
To extract the flow equations for $\partial_s Z_1$, $\partial_s Z_2$, and $\partial_s m^{2\eta}$, we introduce a specific set of test tensors: $T^{(k)}_{\vec{p}} := \prod_{j=1}^4 \delta_{p_j, k}$. For these configurations, the l.h.s. of equation \eqref{equationflow2pointsmelo} takes the form:
\begin{align}
\nonumber \sum_{\vec{p}\in\mathbb{Z}^4}\bar{T}_{\vec{p}}\bigg(\partial_sZ_{1\,melo}&\sum_{i=1}^4|p_i|^{2\eta}+\partial_sZ_{2\,melo}\sum_{i=1}^4|p_i|+\partial_sm^{2\eta}_{melo}\bigg)T_{\vec{p}}\\
&=4\partial_sZ_{1\,melo}|k|^{2\eta}+4\partial_sZ_{2\,melo}|k|+\partial_sm^{2\eta}_{melo},\label{projection}
\end{align}
and the r.h.s becomes:
\begin{align}
r.h.s=-\frac{2\lambda_1(s)}{[Z_1e^{2\eta s}+Z_2e^s+m^{2\eta}]^2}\big[(r.h.s)_1(k)+(r.h.s)_2(k)+(r.h.s)_3(k)\big],
\end{align}
where $\vec{k}:=(k,k,k,k)$ and:
\begin{equation}
(r.h.s)_1(k):=\sum_{\sum_{i=1}^4|p_i|^{2\eta}\leq e^{2\eta s}}\sum_{i=1}^4\mathcal{W}^{(i), {\vec{\bar{p}}}_1,{\vec{k}}}_{melo\,\vec{p}_1,\vec{k}}\bigg(\partial_sZ_1\bigg(e^{2\eta s}-\sum_{i=1}^4|p_i|^{2\eta}\bigg)+2\eta Z_1e^{2\eta s}\bigg),
\end{equation}
\begin{equation}
(r.h.s)_2(k):=\sum_{\sum_{i=1}^4|p_i|^{2\eta}\leq e^{2\eta s}}\sum_{i=1}^4\mathcal{W}^{(i), {\vec{\bar{p}}}_1,{\vec{k}}}_{melo\,\vec{p}_1,\vec{k}}\bigg(\partial_sZ_2\bigg(e^{s}-\sum_{i=1}^4|p_i|\bigg)+Z_2e^{s}\bigg),
\end{equation}
\begin{equation}
(r.h.s)_3(k):=2\eta Z_2e^{2\eta s}\sum_{\sum_{i=1}^4|p_i|^{2\eta}=e^{2\eta s}}\sum_{i=1}^4\mathcal{W}^{(i), {\vec{\bar{p}}}_1,{\vec{k}}}_{melo\,\vec{p}_1,\vec{k}}\bigg(e^{s}-\sum_{i=1}^4|p_i|\bigg).
\end{equation}
\noindent
Setting $k=0$ selecting only the contributions to $\partial_sm^{2\eta}$, we find:
\begin{align}
\nonumber \partial_sm^{2\eta}_{melo}=-&\frac{2\lambda_1(s)}{[Z_1e^{2\eta s}+Z_2e^s+m^{2\eta}]^2}\big[(r.h.s)_1(0)+(r.h.s)_2(0)+(r.h.s)_3(0)\big].
\end{align}
The r.h.s. involves several constrained summations, and we shall express the flow equations in terms of these quantities. In the melonic sector, the relevant sums are defined as follows:
\begin{align}\label{keysums}
S_1(k)&=\sum_{\sum_{i=2}^4|p_i|^{2\eta}\leq e^{2\eta s}-k^{2\eta}}1\,,\\
S_2(k)&=\sum_{\sum_{i=2}^4|p_i|^{2\eta}\leq e^{2\eta s}-k^{2\eta}}\sum_{i=2}^4|p_i|^{2\eta}\,,\\
S_3(k)&=\sum_{\sum_{i=2}^4|p_i|^{2\eta}\leq e^{2\eta s}-k^{2\eta}}\sum_{i=2}^4|p_i|\,,\\
\partial S_1(k)&=\sum_{\sum_{i=2}^4|p_i|^{2\eta}= e^{2\eta s}-k^{2\eta}}1\,,\\
\partial S_3(k)&=\sum_{\sum_{i=2}^4|p_i|^{2\eta}= e^{2\eta s}-k^{2\eta}}\sum_{i=2}^4|p_i|\,.
\end{align}
and $\partial_sm^{2\eta}_{melo}$ writes as:
\begin{align}\label{betammelon}
\nonumber\partial_sm^{2\eta}_{melo}=-&\frac{8\lambda_1(s)}{[Z_1e^{2\eta s}+Z_2e^s+m^{2\eta}]^2}\bigg\{(\partial_sZ_1e^{2\eta s}+\partial_sZ_2e^{s})S_1(0)-(\partial_sZ_1S_2(0)+\partial_sZ_2S_3(0))\\
&+(2\eta Z_1e^{2\eta s}+Z_2e^s)S_1(0)+2\eta Z_2(e^s\partial S_1(0)-\partial S_3(0))\bigg\}\,.
\end{align}
Following the same strategy, we extract the contributions to $\partial_s Z_{1, \text{melo}}$ and $\partial_s Z_{2, \text{melo}}$ by identifying the coefficients of $|k|^{2\eta}$ and $|k|$, respectively, on the r.h.s. of the flow equation. Let $S_i^{(2\eta)}(0)$ and $S_i^{(1)}(0)$ denote the coefficients of $k^{2\eta}$ and $k$ in the expansion of the sum $S_i(k)$ around the vanishing external momentum. We then find:
\begin{align}\label{anomalous1melon}
\nonumber\partial_sZ_{1\,melo}&=\dfrac{-2\lambda_1(s)}{[Z_1e^{2\eta s}+Z_2e^s+m^{2\eta}]^2}\bigg\{\partial_sZ_1\big(e^{2\eta s}S_1^{(2\eta)}-S_2^{(2\eta)}-S_1\big)+2\eta Z_1e^{2\eta s}S_1^{(2\eta)}\\
&+\partial_sZ_2\big(e^{s}S_1^{(2\eta)}-S_3^{(2\eta)}-S_1^{(2\eta-1)}\big)+Z_2e^{s}S_1^{(2\eta)}+2\eta Z_2\big(e^s\partial S_1^{(2\eta)}-\partial S_3^{(2\eta)}\big)\bigg\}\,,
\end{align}
\begin{align}\label{anomalous2melon}
\nonumber\partial_sZ_{2\,melo}&=\dfrac{-2\lambda_1(s)}{[Z_1e^{2\eta s}+Z_2e^s+m^{2\eta}]^2}\bigg\{\partial_sZ_1\big(e^{2\eta s}S_1^{(1)}-S_2^{(1)}-S_1^{(1-2\eta)}\big)+2\eta Z_1e^{2\eta s}S_1^{(1)}\\
&+\partial_sZ_2\big(e^{s}S_1^{(1)}-S_3^{(1)}-S_1\big)+Z_2e^{s}S_1^{(1)}+2\eta Z_2\big(e^s\partial S_1^{(1)}-\partial S_3^{(1)}\big)\bigg\}\,.
\end{align}
\noindent
$\bullet$\textbf{Necklace sector}
Turning now to the necklace contributions, we extract $\partial_s Z_{1, \text{neck}}$, $\partial_s Z_{2, \text{neck}}$, and $\partial_s m^{2\eta}_{\text{neck}}$ using the same strategy as for the melonic sector. For the necklace interaction associated with the coupling $\lambda_3$, the leading-order contributions arise from graphs involving a $|p_i|$ term in one of their two internal faces. Furthermore, since the canonical dimension of $\lambda_2$ is equal to $1$, all contributions involving this coupling must be retained. Consequently, by projecting the flow equation onto the test tensor $T^{(k)}$ and defining the following sums:
\begin{align}\label{keysums2}
S_4(k)&=\sum_{\sum_{i=3}^4|p_i|^{2\eta}\leq e^{2\eta s}-2k^{2\eta}}1\,,\\
S_5(k)&=\sum_{\sum_{i=3}^4|p_i|^{2\eta}\leq e^{2\eta s}-2k^{2\eta}}\sum_{i=3}^4|p_i|^{2\eta}\,,\\
S_6(k)&=\sum_{\sum_{i=3}^4|p_i|^{2\eta}\leq e^{2\eta s}-2k^{2\eta}}|p_3|\, \\
S_7(k)&=\sum_{\sum_{i=3}^4|p_i|^{2\eta}\leq e^{2\eta s}-2k^{2\eta}}\sum_{i=3}^4|p_i|^{2\eta}|p_3|\,,\\
S_8(k)&=\sum_{\sum_{i=3}^4|p_i|^{2\eta}\leq e^{2\eta s}-2k^{2\eta}}\sum_{i=3}^4|p_i||p_3|\,,\\
\partial S_4(k)&=\sum_{\sum_{i=3}^4|p_i|^{2\eta}= e^{2\eta s}-2k^{2\eta}}1\,,
\end{align}
\begin{align}
\partial S_6(k)&=\sum_{\sum_{i=3}^4|p_i|^{2\eta}= e^{2\eta s}-2k^{2\eta}}|p_3|\,,\\
\partial S_8(k)&=\sum_{\sum_{i=3}^4|p_i|^{2\eta}= e^{2\eta s}-2k^{2\eta}}\sum_{i=3}^4|p_i||p_3|\,.
\end{align}
we obtain the following expressions for $\partial_s m^{2\eta}_{\text{neck}}$, $\partial_s Z_{1, \text{neck}}$, and $\partial_s Z_{2, \text{neck}}$:
\begin{align}\label{betamneacklace}
\nonumber\partial_sm^{2\eta}_{\text{neck}}=&-\frac{12}{[Z_1e^{2\eta s}+Z_2e^s+m^{2\eta}]^2}\bigg\{\partial_sZ_1\big[2\lambda_3(e^{2\eta s}S_6-S_7)+\lambda_2(e^{2\eta s}S_4-S_5)\big]\\\nonumber
&+2\eta Z_1e^{2\eta s}(2\lambda_3S_6+\lambda_2S_4)+\partial_sZ_2\big[2\lambda_3(e^{2\eta s}S_6-S_8)+\lambda_2(e^{2\eta s}S_4-2S_6)\big]\\
&+Z_2e^{s}(2\lambda_3S_6+\lambda_2S_4)+2\eta Z_2\big[2\lambda_3\big(e^s\partial S_6-\partial S_8\big)+\lambda_2\big(e^s\partial S_4-2\partial S_6\big)\big]\bigg\}\,,
\end{align} 
\begin{align}\label{anomalousneacklace1}
\nonumber\partial_sZ_{1\,\text{neck}}&=-\frac{3}{[Z_1e^{2\eta s}+Z_2e^s+m^{2\eta}]^2}\bigg\{\partial_sZ_1\big[2\lambda_3(e^{2\eta s}S_6^{(2\eta)}-S_7^{(2\eta)}-2 S_6)\\\nonumber
&+\lambda_2(e^{2\eta s}S_4^{(2\eta)}-S_5^{(2\eta)}-2S_4)\big]+2\eta Z_1e^{2\eta s}(2\lambda_3S_6^{(2\eta)}+\lambda_2S_4^{(2\eta)})\\\nonumber
&+\partial_sZ_2\big[2\lambda_3(e^{2\eta s}S_6^{(2\eta)}-S_8^{(2\eta)}-2S_6^{(2\eta-1)})+\lambda_2(e^{2\eta s}S_4^{(2\eta)}-2S_6^{(2\eta)}\\\nonumber
&-2S_4^{(2\eta-1)})\big]+Z_2e^{s}(2\lambda_3S_6^{(2\eta)}+\lambda_2S_4^{(2\eta)})+2\eta Z_2\big[2\lambda_3\big(e^s\partial S_6^{(2\eta)}-\partial S_8^{(2\eta)}-2\partial S_6^{(2\eta-1)}\big)\\
&+\lambda_2\big(e^s\partial S_4^{(2\eta)}+2\partial S_6^{(2\eta)}-2\partial S_4^{(2\eta-1)}\big)\big]\bigg\}\,,
\end{align} 
\begin{align}\label{anomalousneacklace2}
\nonumber &\partial_sZ_{2\,\text{neck}}=-\frac{3}{[Z_1e^{2\eta s}+Z_2e^s+m^{2\eta}]^2}\bigg\{\partial_sZ_1\big[2\lambda_3(e^{2\eta s}S_6^{(1)}-S_7^{(1)}-S_6^{(1-2\eta)})\\\nonumber
&+2\lambda_3(e^{2\eta s}S_4-S_5-2S_4^{(-2\eta)})+\lambda_2(e^{2\eta s}S_4^{(1)}-S_5^{(1)}-2S_4^{(1-2\eta)})\big]+2\eta Z_1e^{2\eta s}(2\lambda_3(S_6^{(1)}+S_4)+\lambda_2S_4^{(1)})\\\nonumber
&+\partial_sZ_2\big[2\lambda_3(e^{2\eta s}S_6^{(1)}-S_8^{(1)}-2S_6)+2\lambda_3(e^{2\eta s}S_4-S_5-2S_4^{(-1)})+\lambda_2(e^{2\eta s}S_4^{(1)}-2S_6^{(1)}-2S_4)\big]\\\nonumber
&+Z_2e^{s}(2\lambda_3(S_6^{(1)}+S_4)+\lambda_2S_4^{(1)})+2\eta Z_2\big[2\lambda_3\big(e^s\partial S_6^{(1)}-\partial S_8^{(1)}-2\partial S_6\big)+2\lambda_3\big(e^s\partial S_4-2\partial S_6^{(1)}-2\partial S_4\big)\\
&+\lambda_2\big(e^s\partial S_4^{(1)}+2\partial S_6^{(1)}-2\partial S_4\big)\big]\bigg\}\,.
\end{align} 

where the factor $12$ in the first equation accounts for: the factor $2$ arising from the definition of $\mathcal{F}_s$, the factor $2$ coming from the symmetrization of the vertices, and the factor $3$ representing the number of necklace interactions. The factor $3$ on the r.h.s. of the second and third equations is obtained as $12/4$, where the divisor $4$ originates from the l.h.s. of the flow equation (see \eqref{projection}).\\

\noindent
By combining the contributions \eqref{betammelon} and \eqref{betamneacklace} for $\partial_s m^{2\eta}$, along with \eqref{anomalous1melon}, \eqref{anomalous2melon}, \eqref{anomalousneacklace1}, and \eqref{anomalousneacklace2} for $\partial_s Z_1$ and $\partial_s Z_2$, we obtain the following coupled system of equations:

\begingroup
\small
\begin{align}\label{flowm}
\nonumber\partial_sm^{2\eta}=&-\frac{12}{[Z_1e^{2\eta s}+Z_2e^s+m^{2\eta}]^2}\bigg\{\partial_sZ_1\big[\frac{2}{3}\lambda_1(e^{2\eta s}S_1-S_2)+2\lambda_3(e^{2\eta s}S_6-S_7)+\lambda_2(e^{2\eta s}S_4-S_5)\big]\\\nonumber
&+2\eta Z_1e^{2\eta s}(\frac{2}{3}\lambda_1S_1+2\lambda_3S_6+\lambda_2S_4)+\partial_sZ_2\big[\frac{2}{3}\lambda_1(e^sS_1-S_3)+2\lambda_3(e^{ s}S_6-S_8)+\lambda_2(e^{ s}S_4-2S_6)\big]\\
&+Z_2e^{s}(\lambda_1S_1+2\lambda_3S_6+\lambda_2S_4)+2\eta Z_2\big[\frac{2}{3}\lambda_1(e^s\partial S_1-\partial S_3)+2\lambda_3\big(e^s\partial S_6-\partial S_8\big)+\lambda_2\big(e^s\partial S_4-2\partial S_6\big)\big]\bigg\}\,,
\end{align}
\begin{align}
\nonumber\partial_sZ_1=&-\dfrac{3}{[Z_1e^{2\eta s}+Z_2e^s+m^{2\eta}]^2+A(\lambda_1,\lambda_2,\lambda_3)}\bigg\{2\eta Z_1e^{2\eta s}(\frac{2}{3}\lambda_1S_1^{(2\eta)}+2\lambda_3S_6^{(2\eta)}+\lambda_2S_4^{(2\eta)})\\\nonumber
&+\partial_sZ_2\big[\frac{2}{3}\lambda_1(e^sS_1^{(2\eta)}-S_3^{(2\eta)}-S_1^{(2\eta-1)})+2\lambda_3(e^{2\eta s}S_6^{(2\eta)}-S_8^{(2\eta)}-2S_6^{(2\eta-1)})\\\nonumber
&+\lambda_2(e^{2\eta s}S_4^{(2\eta)}-2S_6^{(2\eta)}-S_4^{(2\eta-1)})\big]+Z_2e^{s}(\frac{2}{3}\lambda_1S_1^{(2\eta)}+2\lambda_3S_6^{(2\eta)}+\lambda_2S_4^{(2\eta)})\\\nonumber
&+2\eta Z_2\big[\frac{2}{3}\lambda_1(e^s\partial S_1^{(2\eta)}-\partial S_3^{(2\eta)})+2\lambda_3\big(e^s\partial S_6^{(2\eta)}-\partial S_8^{(2\eta)}-2\partial S_6^{(2\eta-1)}\big)\\
&+\lambda_2\big(e^s\partial S_4^{(2\eta)}+2\partial S_6^{(2\eta)}-2\partial S_4^{(2\eta-1)}\big)\big]\bigg\}\label{flowZ1},
\end{align}

\begin{align}
\nonumber\partial_sZ_2=&-\dfrac{3}{[Z_1e^{2\eta s}+Z_2e^s+m^{2\eta}]^2+B(\lambda_1,\lambda_2,\lambda_3)}\bigg\{2\eta Z_1e^{2\eta s}(\frac{2}{3}\lambda_1S_1^{(1)}+2\lambda_3(S_6^{(1)}+S_4)+\lambda_2S_4^{(1)})\\\nonumber
&+\partial_sZ_1\big[\frac{2}{3}\lambda_1(e^{2\eta s}S_1^{(1)}-S_2^{(1)}-S_1^{(1-2\eta)})+2\lambda_3(e^{2\eta s}S_6^{(1)}-S_7^{(1)}-S_6^{(1-2\eta)})\\\nonumber
&+2\lambda_3(e^{2\eta s}S_4-S_5-2S_4^{(-2\eta)})+\lambda_2(e^{2\eta s}S_4^{(1)}-S_5^{(1)}-2S_4^{(1-2\eta)})\big]\\\nonumber
&+Z_2e^{s}(\frac{2}{3}\lambda_1S_1^{(1)}+2\lambda_3(S_6^{(1)}+S_4)+\lambda_2S_4^{(1)})+2\eta Z_2\big[\frac{2}{3}\lambda_1(e^s\partial S_1^{(1)}-\partial S_3^{(1)})\\
&+2\lambda_3\big(e^s\partial S_6^{(1)}-\partial S_8^{(1)}-2\partial S_6\big)
+\lambda_2\big(e^s\partial S_4^{(1)}+2\partial S_6^{(1)}-2\partial S_4\big)\big]\bigg\},
\end{align}\label{flowZ2}
with:
\begin{align}
\nonumber A(\lambda_1,\lambda_2,\lambda_3):=& 2\lambda_1\big(e^{2\eta s}S_1^{(2\eta)}-S_2^{(2\eta)}-S_1\big)+3\big(2\lambda_3(e^{2\eta s}S_6^{(2\eta)}-S_7^{(2\eta)}-2S_6)\\
&+\lambda_2(e^{2\eta s}S_4^{(2\eta)}-S_5^{(2\eta)}-2S_4)\big)\,,
\end{align}
\begin{align}
\nonumber B(\lambda_1,\lambda_2,\lambda_3):=&2\lambda_1\big(e^{2\eta s}S_1^{(1)}-S_2^{(1)}-S_1\big)+3\big(2\lambda_3(e^{2\eta s}S_6^{(1)}-S_8^{(1)}-2S_6)\\
&+2\lambda_3(e^{2\eta s}S_4-S_5-2S_4^{(1)})+\lambda_2(e^{2\eta s}S_4^{(1)}-2S_6^{(1)}-2S_4)\big)\,.
\end{align}
\endgroup

\bigskip
\noindent
\paragraph{Flow equations for the $4$-valent bubbles.} 
The second equation \eqref{flow1} involves two distinct contributions, which we shall compute separately. The first term, $\text{Tr}\big(\partial_s R_s \mathcal{K}_s^{-1} \mathcal{F}_{s,4} \mathcal{K}_s^{-1}\big)$, accounts for the 6-valent necklace bubbles; a typical leading-order contribution is illustrated in Figure \ref{fig4} below.
\begin{center}
\includegraphics[scale=1.1]{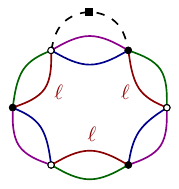} 
\captionof{figure}{Typical LO contribution to the term $\Tr\big(\partial_sR_s\mathcal{K}_s^{-1}\mathcal{F}_{s,4}\mathcal{K}_s^{-1}\big)$}\label{fig4}
\end{center}
These terms only contribute to the flow equation for $\lambda_2$, as their connectivity corresponds to a $4$-valent necklace bubble without derivative coupling. Setting $T = T^{(0)}$, the l.h.s. of the flow equation for the $4$-valent bubbles is given by:
\begin{equation}
\partial_s\Gamma_{s,4}[T^{(0)},\bar{T}^{(0)}]=4\partial_s\lambda_1+3\partial_s\lambda_2 \, ,
\end{equation}

and the contribution to $3\partial_s \lambda_2$ arising from the 6-valent bubbles on the r.h.s. of the flow equation \eqref{flow1} is given by:
\begin{align}
\nonumber\Tr\big(\partial_sR_s\mathcal{K}_s^{-1}\mathcal{F}_{s,4}[T^{(0)},\bar{T}^{(0)}]\mathcal{K}_s^{-1}\big)=&\frac{18\lambda_4}{[Z_1e^{2\eta s}+Z_2e^s+m^{2\eta}]^2}\bigg\{\partial_sZ_1(e^{2\eta s}S_4-S_5)+2\eta Z_1e^{2\eta s}S_4\\
&+\partial_sZ_2(e^{2\eta s}S_4-2S_6)+Z_2e^{s}S_4+2\eta e^{2\eta s} Z_2\big(e^s\partial S_4-2\partial S_6\big)\bigg\}\,.\label{cont1Bis}
\end{align}
A second contribution originates from the term $\text{Tr}\big(\partial_s R_s \mathcal{K}_s^{-1} \mathcal{F}_{s,2} \mathcal{K}_s^{-1} \mathcal{F}_{s,2} \mathcal{K}_s^{-1}\big)$ on the r.h.s. of equation \eqref{flow1}. This term also provides the contributions to the flow of the melonic coupling $\lambda_1$; both contributions can be extracted by projecting onto the subspace generated by $T^{(0)}$. The typical graphs contributing to the flow of $\lambda_2$ are illustrated in Figure \ref{fig5}b (for $i=k$ and $j=l$). Their contributions follow the structure below:
\begin{align}\label{cont2}
\frac{\mathcal{N}_{222}\lambda_2^2\mathcal{S}_{222}+\mathcal{N}_{223}\lambda_2\lambda_3\mathcal{S}_{223}+\mathcal{N}_{233}\lambda_3^2\mathcal{S}_{233}}{[Z_1e^{2\eta s}+Z_2e^s+m^{2\eta}]^3}\,,
\end{align}
where the $\mathcal{N}_{ijk}$ are purely numerical coefficients and $S_{ijk}$ (meaning $S_{ijk}(0)$) are:
\begin{align}
\nonumber\mathcal{S}_{222}(k)=\sum_{\vec{p}}\partial_sR_s(\vec{p}\,)\delta_{p_1k}\delta_{p_2k}=&\partial_sZ_1(e^{2\eta s}S_4-S_5)+2\eta Z_1e^{2\eta s}S_4+\partial_sZ_2(e^{s}S_4-2S_6)\\
&+Z_2e^{s}S_4+2\eta Z_2 e^{2\eta s}\big(e^s\partial S_4-2\partial S_6\big)\,,
\end{align}
\begin{align}
\nonumber\mathcal{S}_{223}(k)=2\sum_{\vec{p}}\partial_sR_s(\vec{p}\,)|p_3|\delta_{p_1k}\delta_{p_2k}=&2\big[\partial_sZ_1(e^{2\eta s}S_6-S_7)+2\eta Z_1e^{2\eta s}S_6+\partial_sZ_2 (e^{s}S_6-S_8)\\
&+Z_2e^{s}S_6+2\eta e^{2\eta s} Z_2\big(e^s\partial S_6-\partial S_8\big)\big]\,,
\end{align}
\begin{align}
\nonumber\mathcal{S}_{233}(k)&=2\sum_{\vec{p}}\partial_sR_s(\vec{p}\,)(|p_3|^2+|p_3||p_4|)\delta_{p_1k}\delta_{p_2k}\\\nonumber
&=2\big[\partial_sZ_1(e^{2\eta s}S_9-S_{10})+2\eta Z_1e^{2\eta s}S_9+\partial_sZ_2 (e^{s}S_9-S_{11})\\
&+Z_2e^{s}S_9+2\eta e^{2\eta s} Z_2\big(e^s\partial S_9-\partial S_{11}\big)\big]\,,
\end{align}
with:
\begin{align}\label{keysums3}
S_9(k)&=\sum_{\sum_{i=3}^4|p_i|^{2\eta}\leq e^{2\eta s}-2k^{2\eta}}(|p_3|^2+|p_3||p_4|),\\
S_{10}(k)&=\sum_{\sum_{i=3}^4|p_i|^{2\eta}\leq e^{2\eta s}-2k^{2\eta}}\sum_{i=3}^4|p_i|^{2\eta}(|p_3|^2+|p_3||p_4|),\\
S_{11}(k)&=\sum_{\sum_{i=3}^4|p_i|^{2\eta}\leq e^{2\eta s}-2k^{2\eta}}\sum_{i=3}^4|p_i|(|p_3|^2+|p_3||p_4|),\\
\partial S_9(k)&=\sum_{\sum_{i=3}^4|p_i|^{2\eta}= e^{2\eta s}-2k^{2\eta}}(|p_3|^2+|p_3||p_4|),\\
\partial S_{11}(k)&=\sum_{\sum_{i=3}^4|p_i|^{2\eta}= e^{2\eta s}-2k^{2\eta}}\sum_{i=3}^4|p_i|(|p_3|^2+|p_3||p_4|)\,.
\end{align}
Then, by counting the number of contractions for each contribution, we find the combinatorial factors to be $\mathcal{N}_{222}=24$, $\mathcal{N}_{223}=48$, and $\mathcal{N}_{233}=24$. Finally, by combining the contributions from \eqref{cont1Bis} and \eqref{cont2}, we obtain:
\begin{align}
\partial_s \lambda_2=&-\frac{6\lambda_4\mathcal{S}_{222}}{[Z_1e^{2\eta s}+Z_2e^s+m^{2\eta}]^2}+\frac{8\lambda_2^2\mathcal{S}_{222}+16\lambda_2\lambda_3\mathcal{S}_{223}+8\lambda_3^2\mathcal{S}_{233}}{[Z_1e^{2\eta s}+Z_2e^s+m^{2\eta}]^3}\,.
\end{align}\label{eqflowlambda2}
\begin{center}
\includegraphics[scale=1]{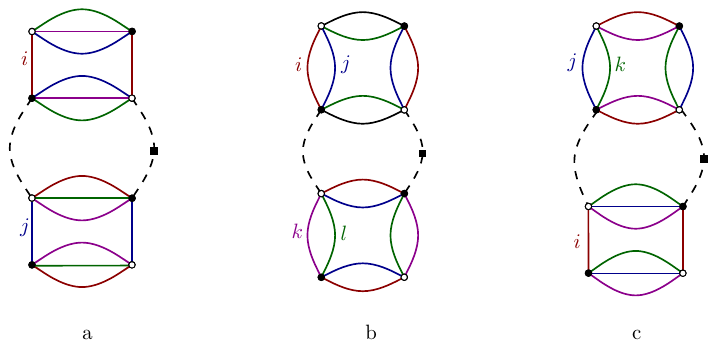} 
\captionof{figure}{Graphical representation of the LO contributions to the flow equations for $\lambda_1$, $\lambda_2$, and $\lambda_3$, arising from the term $\text{Tr}\big(\partial_s R_s \mathcal{K}_s^{-1} \mathcal{F}_{s,2} \mathcal{K}_s^{-1} \mathcal{F}_{s,2} \mathcal{K}_s^{-1}\big)$.}\label{fig5}
\end{center}
In a similar fashion, the r.h.s. of the flow equation for the melonic coupling $\lambda_1$ exhibits the following structure:
\begin{equation}
\frac{\mathcal{N}_{111}\lambda_1^2\mathcal{S}_{111}+\mathcal{N}_{113}\lambda_1\lambda_3\mathcal{S}_{113}+\mathcal{N}_{112}\lambda_1\lambda_2\mathcal{S}_{112}}{[Z_1e^{2\eta s}+Z_2e^s+m^{2\eta}]^3}\,,
\end{equation}
and with $\mathcal{N}_{111}=16$, $\mathcal{N}_{113}=192$, $\mathcal{N}_{112}=96$, we find:
\begin{equation}\label{eqflowlambda1}
\partial_s\lambda_1=\frac{4\lambda_1^2\mathcal{S}_{111}+48\lambda_1\lambda_3\mathcal{S}_{113}+24\lambda_1\lambda_2\mathcal{S}_{112}}{[Z_1e^{2\eta s}+Z_2e^s+m^{2\eta}]^3}\,,
\end{equation}
the numerical coefficients $\mathcal{S}_{113}$ and $\mathcal{S}_{112}$ equaling $\mathcal{S}_{222}$ and $\mathcal{S}_{223}$  respectively, and $\mathcal{S}_{111}$ being:

\begin{align}
\nonumber\mathcal{S}_{111}=&\partial_sZ_1(e^{2\eta s}S_1-S_2)+2\eta Z_1e^{2\eta s}S_1+\partial_sZ_2(e^{s}S_1-2S_3)\\
&+Z_2e^{s}S_1+2\eta Z_2 e^{2\eta s}\big(e^s\partial S_1-2\partial S_3\big)\,.
\end{align}
The flow equation for $\partial_s \lambda_3$ receives contributions exclusively from the term $$\text{Tr}\big(\partial_s R_s \mathcal{K}_s^{-1} \mathcal{F}_{s,2} \mathcal{K}_s^{-1} \mathcal{F}_{s,2} \mathcal{K}_s^{-1}\big)\,.$$ To extract this flow, we employ the test tensor $T^{(k)}$ with $k \neq 0$. In this case, the term involving $\partial_s \lambda_3$ on the l.h.s. is given by:
\begin{align}
24|k|\partial_s\lambda_3.
\end{align}
Hence, the flow equation is obtained by isolating the terms on the r.h.s. that are proportional to $|k|$ and exhibit the same connectivity as the 4-valent necklace bubbles. The relevant contribution to the r.h.s. takes the following form:
\begin{equation}
\frac{\mathcal{N}_{333}\lambda_3^2\mathcal{S}_{223}+\mathcal{N}_{332}\lambda_3\lambda_2\mathcal{S}_{222}}{[Z_1e^{2\eta s}+Z_2e^s+m^{2\eta}]^3}+3\times\frac{8\lambda_2^2\mathcal{S}_{222}^{(1)}+16\lambda_2\lambda_3\mathcal{S}_{223}^{(1)}+8\lambda_3^2\mathcal{S}_{233}^{(1)}}{[Z_1e^{2\eta s}+Z_2e^s+m^{2\eta}]^3}\,,
\end{equation}
where the second term is identical to the contribution for $\partial_s \lambda_2$ but expanded to first order in $|k|$, while the first contribution arises from configurations involving a derivative coupling within an external face. A direct inspection shows that $\mathcal{N}_{333} = \mathcal{N}_{332} = 3 \times 48$. Consequently, we obtain:
\begin{equation}\label{eqflowlambda3}
\partial_s\lambda_3=\frac{6\lambda_3^2\mathcal{S}_{223}+6\lambda_3\lambda_2\mathcal{S}_{222}}{[Z_1e^{2\eta s}+Z_2e^s+m^{2\eta}]^3}+\frac{\lambda_2^2\mathcal{S}_{222}^{(1)}+2\lambda_2\lambda_3\mathcal{S}_{223}^{(1)}+\lambda_3^2\mathcal{S}_{233}^{(1)}}{[Z_1e^{2\eta s}+Z_2e^s+m^{2\eta}]^3}.
\end{equation}

\paragraph{Flow equation for $6$-valent bubbles.}
The first contribution arises from the term $$\text{Tr}\big(\partial_s R_s \mathcal{K}_s^{-1} \mathcal{F}_{s,6} \mathcal{K}_s^{-1}\big)\,,$$ on the r.h.s. of the flow equation \eqref{flow1}, whose leading-order (LO) contributions are illustrated in Figure \ref{fig6}. Up to a numerical factor, this contribution is the same as for Figure \ref{fig3}:
\begin{align}
\nonumber\Tr\big(\partial_sR_s\mathcal{K}_s^{-1}\mathcal{F}_{s,6}[T^{(0)},\bar{T}^{(0)}]\mathcal{K}_s^{-1}\big)=&\frac{24\lambda_5\mathcal{S}_{222}}{[Z_1e^{2\eta s}+Z_2e^s+m^{2\eta}]^2}.
\end{align}\label{cont1}

\begin{figure}[htbp]
  \centering
\includegraphics[scale=1]{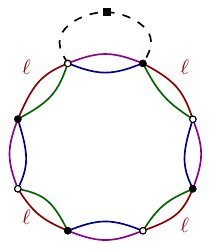} 
\caption{Typical LO contribution to $\Tr\big(\partial_sR_s\mathcal{K}_s^{-1}\mathcal{F}_{s,6}[T^{(0)},\bar{T}^{(0)}]\mathcal{K}_s^{-1}\big)$.}
\label{fig6}
\end{figure}

A second contribution arises from the term $\text{Tr}\big(\partial_s R_s \mathcal{K}_s^{-1} \mathcal{F}_{s,2} \mathcal{K}_s^{-1} \mathcal{F}_{s,4} \mathcal{K}_s^{-1}\big)$, whose leading-order (LO) contribution is illustrated in Figure \ref{fig7} below. In terms of its analytical expression, this corresponds to:
\begin{equation}\label{termref1}
\Tr\big(\partial_sR_s\mathcal{K}_s^{-1}\mathcal{F}_{s,2}\mathcal{K}_s^{-1}\mathcal{F}_{s,4}\mathcal{K}_s^{-1}\big)=\frac{\mathcal{N}_{443}\lambda_4\lambda_3\mathcal{S}_{223}+\mathcal{N}_{442}\lambda_4\lambda_2\mathcal{S}_{222}}{[Z_1e^{2\eta s}+Z_2e^s+m^{2\eta}]^3}+\text{NLO}\,.
\end{equation}
\begin{figure}[htbp]
  \centering\includegraphics[scale=1]{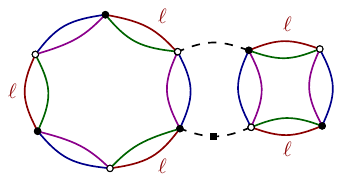} 
\caption{Typical LO contribution of $\Tr\big(\partial_sR_s\mathcal{K}_s^{-1}\mathcal{F}_{s,2}\mathcal{K}_s^{-1}\mathcal{F}_{s,4}\mathcal{K}_s^{-1}\big)$.}
\label{fig7}
\end{figure}
The final contribution arises from the term $\text{Tr}\big(\partial_s R_s \mathcal{K}_s^{-1} \mathcal{F}_{s,2} \mathcal{K}_s^{-1} \mathcal{F}_{s,2} \mathcal{K}_s^{-1} \mathcal{F}_{s,2} \mathcal{K}_s^{-1}\big)$, leading to the diagrams illustrated in Figure \ref{fig8}. Its analytical structure is given by:
\begin{align}
&\nonumber \Tr\big(\partial_sR_s\mathcal{K}_s^{-1}\mathcal{F}_{s,2}\mathcal{K}_s^{-1}\mathcal{F}_{s,2}\mathcal{K}_s^{-1}\mathcal{F}_{s,2}\mathcal{K}_s^{-1}\big)\\
&\qquad =\frac{\mathcal{N}_{4333}\lambda_3^3\mathcal{S}_{333}+\mathcal{N}_{4222}\lambda_2^3\mathcal{S}_{222}+\mathcal{N}_{4233}\lambda_3^2\lambda_2\mathcal{S}_{233}+\mathcal{N}_{4223}\lambda_2^2\lambda_3\mathcal{S}_{223}}{[Z_1e^{2\eta s}+Z_2e^s+m^{2\eta}]^4}\,,
\end{align}
with:
\begin{align}
\nonumber\mathcal{S}_{333}(k)&=2\sum_{\vec{p}}\partial_sR_s(\vec{p}\,)(|p_3|^3+2|p_3|^2|p_4|)\delta_{p_1k}\delta_{p_2k}\\\nonumber
&=2\big[\partial_sZ_1(e^{2\eta s}S_{12}-S_{13})+2\eta Z_1e^{2\eta s}S_{12}+\partial_sZ_2(e^{s}S_{12}-S_{14})\\
&+Z_2e^{s}S_{12}+2\eta Z_2\big(e^s\partial S_{12}-\partial S_{14}\big)\big]\,,
\end{align}
and:
\begin{align}\label{keysums3}
S_{12}(k)&=\sum_{\sum_{i=3}^4|p_i|^{2\eta}\leq e^{2\eta s}-2k^{2\eta}}(|p_3|^3+2|p_3|^2|p_4|)\,,\\
S_{13}(k)&=\sum_{\sum_{i=3}^4|p_i|^{2\eta}\leq e^{2\eta s}-2k^{2\eta}}\sum_{i=3}^4|p_i|^{2\eta}(|p_3|^3+2|p_3|^2|p_4|)\,,\\
S_{14}(k)&=\sum_{\sum_{i=3}^4|p_i|^{2\eta}\leq e^{2\eta s}-2k^{2\eta}}\sum_{i=3}^4|p_i|(|p_3|^3+2|p_3|^2|p_4|)\,,\\
\partial S_{12}(k)&=\sum_{\sum_{i=3}^4|p_i|^{2\eta}= e^{2\eta s}-2k^{2\eta}}(|p_3|^3+2|p_3|^2|p_4|)\,,\\
\partial S_{14}(k)&=\sum_{\sum_{i=3}^4|p_i|^{2\eta}= e^{2\eta s}-2k^{2\eta}}\sum_{i=3}^4|p_i|(|p_3|^3+2|p_3|^2|p_4|)\,.
\end{align}
\begin{figure}[htbp]
  \centering
\includegraphics[scale=1]{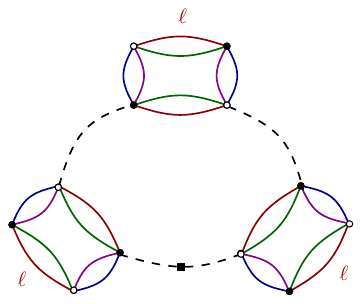} 
\caption{Typical LO contribution arising from the term $\text{Tr}\big(\partial_s R_s \mathcal{K}_s^{-1} \mathcal{F}_{s,2} \mathcal{K}_s^{-1} \mathcal{F}_{s,2} \mathcal{K}_s^{-1} \mathcal{F}_{s,2} \mathcal{K}_s^{-1}\big)$.}
\label{fig8}
\end{figure}
With the combinatorial factors $\mathcal{N}_{443} = 12 \times 3 = \mathcal{N}_{442}$ and $3\mathcal{N}_{4333} = \mathcal{N}_{4332} = \mathcal{N}_{4233} = 3\mathcal{N}_{4222} = 48 \times 3$, we obtain:
\begin{align}\label{flowlambda4}
\nonumber\partial_s\lambda_4=&-\frac{8\lambda_5\mathcal{S}_{222}}{[Z_1e^{2\eta s}+Z_2e^s+m^{2\eta}]^2}+24\frac{\lambda_4\lambda_3\mathcal{S}_{223}+\lambda_4\lambda_2\mathcal{S}_{222}}{[Z_1e^{2\eta s}+Z_2e^s+m^{2\eta}]^3}\\
&\qquad\qquad-16\frac{\lambda_3^3\mathcal{S}_{333}+\lambda_2^3\mathcal{S}_{222}+3\lambda_3^2\lambda_2\mathcal{S}_{233}+3\lambda_2^2\lambda_3\mathcal{S}_{223}}{[Z_1e^{2\eta s}+Z_2e^s+m^{2\eta}]^4}\,.
\end{align}

\paragraph{Flow equation for $8$-valent bubbles.} 
The r.h.s. of the flow equation for $\lambda_5$ involves four traces, as shown in the fourth equation of \eqref{flow1}. Up to a numerical factor reflecting the size of the interaction bubbles involved in $\mathcal{F}_{s,6}$, the term $\text{Tr}\big(\partial_s R_s \mathcal{K}_s^{-1} \mathcal{F}_{s,2} \mathcal{K}_s^{-1} \mathcal{F}_{s,6} \mathcal{K}_s^{-1}\big)$ whose leading-order (LO) contribution involves loops which are identical to \eqref{termref1} (up to the replacement $\bar{\lambda}_4 \to \bar{\lambda}_5$), namely:
\begin{align}\label{1}
\Tr\big(\partial_sR_s\mathcal{K}_s^{-1}\mathcal{F}_{s,2}\mathcal{K}_s^{-1}\mathcal{F}_{s,6}\mathcal{K}_s^{-1}\big)=3\times 16\frac{\lambda_5\lambda_3\mathcal{S}_{223}+\lambda_5\lambda_2\mathcal{S}_{222}}{[Z_1e^{2\eta s}+Z_2e^s+m^{2\eta}]^3}+\text{NLO}.
\end{align}
The second trace $\Tr\big(\partial_sR_s\mathcal{K}_s^{-1}\mathcal{F}_{s,4}\mathcal{K}_s^{-1}\mathcal{F}_{s,4}\mathcal{K}_s^{-1}\big)$ involves LO graphs pictured on \ref{fig9}b, and gives the contribution:
\begin{equation}\label{2}
\Tr\big(\partial_sR_s\mathcal{K}_s^{-1}\mathcal{F}_{s,4}\mathcal{K}_s^{-1}\mathcal{F}_{s,4}\mathcal{K}_s^{-1}\big)=3\times \frac{12\lambda_4^2\mathcal{S}_{222}}{[Z_1e^{2\eta s}+Z_2e^s+m^{2\eta}]^2}+\text{NLO}.
\end{equation}
\begin{center}
\includegraphics[scale=1.1]{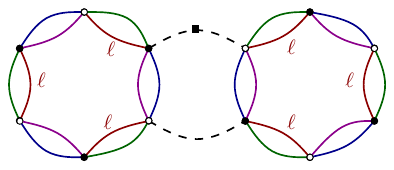} 
\captionof{figure}{Two LO contributions to the flow equation for $\partial_s\lambda_5$}\label{fig9}
\end{center}
The third contribution, $\text{Tr}\big(\partial_s R_s \mathcal{K}_s^{-1} \mathcal{F}_{s,2} \mathcal{K}_s^{-1} \mathcal{F}_{s,2} \mathcal{K}_s^{-1} \mathcal{F}_{s,4} \mathcal{K}_s^{-1}\big)$, yields the leading-order (LO) contribution illustrated in Figure \ref{fig10} below. It shares the same structural topology as the LO graphs shown in Figure \ref{fig7}:
\begin{equation}\label{3}
\Tr\big(\partial_sR_s\mathcal{K}_s^{-1}\mathcal{F}_{s,2}\mathcal{K}_s^{-1}\mathcal{F}_{s,2}\mathcal{K}_s^{-1}\mathcal{F}_{s,4}\mathcal{K}_s^{-1}\big)=3\times 24\lambda_4\frac{\lambda_2^2\mathcal{S}_{222}+\lambda_3^2\mathcal{S}_{233}+2\lambda_3\lambda_2\mathcal{S}_{223}}{[Z_1e^{2\eta s}+Z_2e^s+m^{2\eta}]^3}+\text{NLO}\,.
\end{equation}

\begin{figure}[htbp]
  \centering
\includegraphics[scale=1.2]{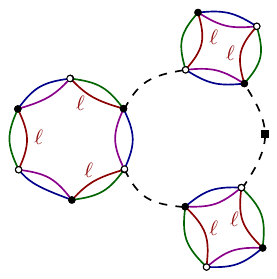} 
\captionof{figure}{Leading order contribution to the flow equation for $\partial_s\lambda_5$ coming from the octic trace: $\Tr\big(\partial_sR_s\mathcal{K}_s^{-1}\mathcal{F}_{s,2}\mathcal{K}_s^{-1}\mathcal{F}_{s,2}\mathcal{K}_s^{-1}\mathcal{F}_{s,4}\mathcal{K}_s^{-1}\big)$.}
\label{fig10}
\end{figure}
The fourth and final contribution arises from the trace $\text{Tr}\big(\partial_s R_s \mathcal{K}_s^{-1} \mathcal{F}_{s,2} \mathcal{K}_s^{-1} \mathcal{F}_{s,2} \mathcal{K}_s^{-1} \mathcal{F}_{s,2} \mathcal{K}_s^{-1} \mathcal{F}_{s,2} \mathcal{K}_s^{-1}\big)$, whose typical LO configurations are illustrated in Figure \ref{fig11}. It shares the same structural topology as the previous contribution $\text{Tr}\big(\partial_s R_s \mathcal{K}_s^{-1} \mathcal{F}_{s,2} \mathcal{K}_s^{-1} \mathcal{F}_{s,2} \mathcal{K}_s^{-1} \mathcal{F}_{s,2} \mathcal{K}_s^{-1}\big)$ computed for $\partial_s \lambda_4$, and we find:
\begin{align}
\nonumber &\Tr\big(\partial_sR_s\mathcal{K}_s^{-1}\mathcal{F}_{s,2}\mathcal{K}_s^{-1}\mathcal{F}_{s,2}\mathcal{K}_s^{-1}\mathcal{F}_{s,2}\mathcal{K}_s^{-1}\mathcal{F}_{s,2}\mathcal{K}_s^{-1}\big)\\
&\qquad \qquad =32\frac{\lambda_3^4\mathcal{S}_{3333}+4\lambda_3^3\lambda_2\mathcal{S}_{333}+12\lambda_3^2\lambda_2^2\mathcal{S}_{233}+4\lambda_2^3\lambda_3\mathcal{S}_{223}+\lambda_2^4\mathcal{S}_{222}}{[Z_1e^{2\eta s}+Z_2e^s+m^{2\eta}]^5}\,,\label{4}
\end{align}
with:
\begin{align}
\nonumber\mathcal{S}_{3333}(k):=2\sum_{\vec{p}}\partial_sR_s(\vec{p}\,)(|p_3|^4+2|p_3|^3|p_4|+6|p_3|^2|p_4|^2)\delta_{p_1k}\delta_{p_2k}.
\end{align}
\begin{center}
\includegraphics[scale=1.3]{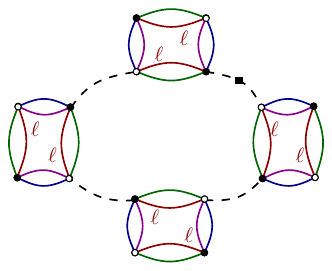} 
\captionof{figure}{LO graph contributing to the flow equation for $\partial_s\lambda_5$ and coming from the trace $\Tr\big(\partial_sR_s\mathcal{K}_s^{-1}\mathcal{F}_{s,2}\mathcal{K}_s^{-1}\mathcal{F}_{s,2}\mathcal{K}_s^{-1}\mathcal{F}_{s,2}\mathcal{K}_s^{-1}\mathcal{F}_{s,2}\mathcal{K}_s^{-1}\big)$.}\label{fig11}
\end{center}
Taking into account all the contributions \ref{1}, \ref{2}, \ref{3}, \ref{4}, we find:
\begin{align}\label{flowlambda5}
\nonumber\partial_s\lambda_5&=32\frac{\lambda_5\lambda_3\mathcal{S}_{223}+\lambda_5\lambda_2\mathcal{S}_{222}}{[Z_1e^{2\eta s}+Z_2e^s+m^{2\eta}]^3}+\frac{12\lambda_4^2\mathcal{S}_{222}}{[Z_1e^{2\eta s}+Z_2e^s+m^{2\eta}]^3}-192\lambda_4\frac{\lambda_2^2\mathcal{S}_{222}+\lambda_3^2\mathcal{S}_{233}+2\lambda_3\lambda_2\mathcal{S}_{223}}{[Z_1e^{2\eta s}+Z_2e^s+m^{2\eta}]^4 }\\
&\qquad \qquad +32\frac{\lambda_3^4\mathcal{S}_{3333}+4\lambda_3^3\lambda_2\mathcal{S}_{333}+12\lambda_3^2\lambda_2^2\mathcal{S}_{233}+4\lambda_2^3\lambda_3\mathcal{S}_{223}+\lambda_2^4\mathcal{S}_{222}}{[Z_1e^{2\eta s}+Z_2e^s+m^{2\eta}]^5}\,.
\end{align}

\subsubsection{Continuum limit and dimensionless renormalized couplings.}

In the large cut-off limit ($e^s \to \infty$) considered in this section, all the sums appearing in the flow equations \eqref{flowm}–\eqref{flowlambda5} can be evaluated using an integral approximation, a standard technique in condensed matter physics. The characteristic melonic sums namely, those arising from graphs with contractions between melonic bubbles in a melonic configuration take the following form:
\begin{equation}
S_s(\alpha, \beta,|k|):=\sum_{\vec{p}\in\mathbb{Z}^3}\Theta \left(e^{2\eta s}-|k|^{2\eta}-\sum_{i=1}^3|p_i|^{2\eta}\right)|p_1|^{\alpha}|p_2|^{\beta}\,.
\end{equation}
And introducing the variables $x_i=|p_i|/e^s$ and $y=|k|/e^s$, this sum can be approached by an integral $S_s(\alpha, \beta,|k|)\approx \mathcal{I}_s(\alpha, \beta,|k|)$, such that:

\begin{align}
\nonumber\mathcal{I}_s(\alpha, \beta,|k|)&:=2^3e^{(3+\alpha+\beta)s}\int_0^{(1-y^{2\eta})^{1/2\eta}} \!\! dx_1 \, x_1^\alpha \int_0^{(1-y^{2\eta}-x_1^{2\eta})^{1/2\eta}} \!\! dx_2 \, x_2^\beta \int_0^{(1-y^{2\eta}-x_1^{2\eta}-x_2^{2\eta})^{1/2\eta}} \!\! dx_3\,.
\end{align}
These integrals can be evaluated by applying two successive changes of variables: $x_2 \to x_2 / (1 - y^{2\eta} - x_1^{2\eta})^{1/2\eta}$ followed by $x_1 \to x_1 / (1 - y^{2\eta})^{1/2\eta}$:
\begin{equation}
\mathcal{I}_s(\alpha, \beta,|k|)= \frac{8 e^{(3+\alpha+\beta)s}}{(2\eta)^2} (1-y^{2\eta})^{\frac{3+\alpha+\beta}{2\eta}} B\left(\frac{\alpha+1}{2\eta}, \frac{\beta+2}{2\eta} + 1\right) B\left(\frac{\beta+1}{2\eta}, \frac{1}{2\eta} + 1\right)\,,
\end{equation}
where $B(a,b)$ denotes the Euler beta function, which is defined as $B(a,b)= \frac{\Gamma(a)\Gamma(b)}{\Gamma(a + b)}$.
In the same way, the boundary sums $\partial S_s(\alpha, \beta,|k|):=\frac{1}{2\eta e^{2\eta s}}\frac{d}{ds}S_s(\alpha, \beta,|k|)$  have integral approximation:
\begin{equation}
\partial\mathcal{I}_s(\alpha, \beta,|k|)=\frac{3+\alpha+\beta}{2\eta e^{2\eta s}}\bigg[1+\frac{y^{2\eta}}{1-y^{2\eta}}\bigg]\mathcal{I}_s(\alpha, \beta,|k|)\,.
\end{equation}

In particular:

\begin{align}\label{keyintsbisbis}
S_3(0)&\approx\frac{12}{5}e^{4s}\Gamma(4/3)\Gamma(2/3),\\
S_3^{(2\eta)}&\approx-\frac{32}{5}e^{(4-2\eta)s}\Gamma(4/3)\Gamma(2/3),\\
S_3^{(1)}&=0,\\
\partial S_1(k)&\approx8e^{(3-2\eta)s}[\Gamma(5/3)]^3,\\
\partial S_1^{(2\eta)}&\approx-8e^{(3-4\eta)s}[\Gamma(5/3)]^3,
\end{align}
\begin{align}
\partial S_1^{(1)}&=0,\\
\partial S_3(0)&\approx \frac{96}{5}e^{(4-2\eta)s}\Gamma(4/3)\Gamma(2/3),\\
\partial S_3^{(2\eta)}&\approx-32e^{4(1-\eta)s}\Gamma(4/3)\Gamma(2/3),\\
\partial S_3^{(1)}&=0.
\end{align}

In the same way, all the necklace sums, i.e. The sums occurring in the graphs involved necklace bubbles are of the form:
\begin{equation}
\mathcal{SN}_s(\alpha,\beta,|k|)=\sum_{(p_1,p_2)\in\mathbb{Z}^2}\Theta(e^{2\eta s}-2|k|^{2\eta}-|p_1|^{2\eta}-|p_2|^{2\eta})|p_1|^{\alpha}|p_2|^{\beta},
\end{equation}
with integral approximation:
\begin{align}
\mathcal{J}_s(\alpha,\beta)&=4e^{(2+\alpha+\beta)s}\int_0^{(1-2y^{2\eta})^{1/2\eta}}x_1^{\alpha}dx_1\int_0^{(1-2y^{2\eta}-x_1^{2\eta})^{1/2\eta}}x_2^{\beta}dx_2\\
&=\frac{4 e^{(2+\alpha+\beta)s}}{2\eta} (1-2y^{2\eta})^{\frac{2+\alpha+\beta}{2\eta}} B\left(\frac{1+\alpha}{2\eta}, \frac{1+\beta}{2\eta} + 1\right)\,.
\end{align}

In the same way, the sum $\partial\mathcal{SN}_s(\alpha,\beta,|k|):=\frac{1}{2\eta e^{2\eta s}}\frac{d}{ds}\mathcal{SN}_s(\alpha,\beta,|k|)$ has integral approximation:
\begin{align}
\partial\mathcal{J}_s(\alpha,\beta)&:=\frac{1}{2\eta e^{2\eta s}}\frac{d}{ds}\mathcal{J}_s(\alpha,\beta)=\frac{2+\alpha+\beta}{2\eta e^{2\eta s}}\bigg[1+\frac{2y^{2\eta}}{1-2y^{2\eta}}\bigg]\mathcal{J}_s(\alpha,\beta)\,.
\end{align}


The flow equations are significantly simplified by noting that all derivatives of the type $S_n^{(1)}$ vanish. In particular, we obtain:

\begin{align}
S_{222}^{(1)}&:=-2 S_4\partial_s Z_2-4 \eta Z_2 \partial S_4e^{2\eta s}\,,\\
S_{223}^{(1)}&:=-4 \partial_s Z_2 S_6 - 8 \eta e^{2 \eta s} Z_2 \partial S_6\,,\\
S_{233}^{(1)}&:=-4 \partial_s Z_2 S_9-8\eta e^{2\eta s} Z_2 \partial S_9\,.
\end{align}

\section{Numerical investigations in the deep UV}\label{numeric}

The flow equations derived in the previous section can be studied numerically. Following the power-counting analysis provided in Section \ref{sectioncanonical}, the dimensionless renormalized couplings are defined as follows:
\begin{equation}
\lambda_1=:Z_1^2 \bar{\lambda}_1\,,\quad \lambda_2=:Z_1^2 e^s \bar{\lambda}_2 \,,\quad \lambda_3=:Z_1^2 \bar{\lambda}_3 \,, \lambda_4=:Z_1^3 e^{s/2} \bar{\lambda}_4 \,, \lambda_5=:Z_1^4 \bar{\lambda}_5 \,,
\end{equation}

and:

\begin{equation}
m^{2\eta}=: Z_1 e^{3 s/2} {\mu} \,,\qquad Z_2=: Z_1 e^{s/2} z_2\,.
\end{equation}
We moreover define the anomalous dimension $\gamma$:

\begin{equation}
\gamma:= \frac{1}{Z_1}\frac{d}{ds} Z_1\,.
\end{equation}

\subsection{Melonic sector}

A simple limiting case is that of the pure melonic sector, for which $\bar{\lambda}_2=\bar{\lambda}_3=\bar{\lambda}_4=\bar{\lambda}_5=z_2=0$. In this case, the equations simplify significantly:

\begin{equation}
\gamma:=-\frac{77760 \bar{\lambda}_1 \Gamma \left(\frac{2}{3}\right)^3 \Gamma \left(\frac{7}{3}\right) \Gamma \left(\frac{10}{3}\right)}{14336 \pi ^2 \bar{\lambda}_1 \Gamma \left(\frac{8}{3}\right)-10935 (\mu +1)^2 \Gamma \left(\frac{7}{3}\right) \Gamma \left(\frac{10}{3}\right)}\,,
\end{equation}

\begin{equation}
\frac{d \mu}{d s}= -\left(\frac{3}{2}+\gamma\right)\mu+\frac{286720 \pi ^2 \bar{\lambda}_1 \left((\mu +1)^2 \Gamma \left(\frac{2}{3}\right)-4 \bar{\lambda}_1 \Gamma \left(\frac{5}{3}\right)^4\right)}{32805 (\mu +1)^4 \Gamma \left(\frac{7}{3}\right) \Gamma \left(\frac{10}{3}\right)-43008 \pi ^2 \bar{\lambda}_1 (\mu +1)^2 \Gamma \left(\frac{8}{3}\right)}\,,
\end{equation}

\begin{equation}
\frac{d \bar{\lambda}_1}{d s}= - 2\gamma \bar{\lambda}_1+ \frac{143360 \pi ^2 \bar{\lambda}_1^2 \left((\mu +1)^2 \Gamma \left(\frac{2}{3}\right)-4 \bar{\lambda}_1 \Gamma \left(\frac{5}{3}\right)^4\right)}{32805 (\mu +1)^5 \Gamma \left(\frac{7}{3}\right) \Gamma \left(\frac{10}{3}\right)-43008 \pi ^2 \bar{\lambda}_1(\mu +1)^3 \Gamma \left(\frac{8}{3}\right)}\,.
\end{equation}

There are three fixed points. One of them is immediately disqualified, as its anomalous dimension falls below the $3/2$ limit imposed by the choice of regulator to ensure that boundary conditions are met in the deep UV. The two remaining fixed points are found at the following values:

\begin{equation}
\text{MFP1} := \{\mu \approx -0.53\,, \bar{\lambda}_1 \approx 0.006\}\,,
\end{equation}
\begin{equation}
\text{MFP2} := \{\mu \approx -0.78\,, \bar{\lambda}_1 \approx 0.004\}\,,
\end{equation}
with anomalous dimensions $\gamma_1 \approx 0.58$ and $\gamma_2 \approx -0.81$. However, the second fixed point is associated with singular critical exponents (of the order of $10^{15}$) and, as such, appears to be pathological. The first fixed point is characterized by the following critical exponents:
\begin{equation}
\{\theta_1,\theta_2 \} \approx \{2.4\,,-0.84\}\,.
\end{equation}
This is clearly a Wilson–Fisher-type fixed point, characteristic of a second-order phase transition. Such fixed points are commonplace in tensor theories and tend to appear in most models. However, this fixed point possesses an intrinsic non-perturbative nature (as it does not appear at the one-loop level) and violates the Ward identities that arise from the breaking of the global unitary symmetry by the propagator. In references \cite{Lahoche_2019bb, Lahoche_2021c}, it is shown that the only way to satisfy the Ward identities is for the anomalous dimension to vanish at the fixed point, leaving $\lambda_1=0$ (the Gaussian fixed point) as the only solution. A general method developed in \cite{Lahoche_2020b} demonstrates that incorporating the constraints from the Ward identities within the melonic sector allows for the closure of the flow equation hierarchy, thereby proving the absence of this fixed point, which appears to be merely an artifact of the vertex expansion. Consequently, this fixed point will be ignored in the remainder of this study.
\subsection{Necklace sector}

Two other fixed points are discovered at the values:
\begin{equation}
\text{NFP1}=\{\mu \approx 0.61\,, z_2\approx 2.32\,, \bar{\lambda}_1 = 0\,,\bar{\lambda}_2\approx -0.05\,,\bar{\lambda}_3\approx 0.0008\,,\bar{\lambda}_4\approx -0.0024\,,\bar{\lambda}_5\approx-0.00038\}\,,
\end{equation}
and:
\begin{equation}
\text{NFP2}=\{\mu \approx -0.53\,, z_2\approx 1.10\,, \bar{\lambda}_1 = 0\,,\bar{\lambda}_2\approx 0.004\,,\bar{\lambda}_3\approx 0.10\,,\bar{\lambda}_4\approx -0.00065\,,\bar{\lambda}_5\approx0.00012\}\,,
\end{equation}
with anomalous dimensions $\gamma_{NFP1} \approx -0.60$ and $\gamma_{NFP2} \approx -0.24$ ; and critical exponents (real parts) given by:
\begin{equation}
\{ \Theta \}_1 = \{-2.35\,,-1.22\,,1.20\,,-1.03\,,-1.03\,,-0.93\,,0.005\,\}\,,
\end{equation}
and:
\begin{equation}
\{ \Theta \}_2 = \{-12.42\,,0.06\,,0.06\,,1.53\,,1.53\,,-1.69\,,-0.81\}\,.
\end{equation}

Many critical exponents share the same real parts and are indeed complex. The same holds for the fourth and fifth exponents in the set ${ \Theta }_1$. The first fixed point possesses two unstable directions, while the second one has four. These are therefore multicritical points, which is unsurprising given the number of relevant parameters in the model. Furthermore, it is easily seen that the $\bar{\lambda}_1$ axis corresponds to an irrelevant direction. Notably, the second-to-last exponent of ${ \Theta }_1$ and the last exponent of ${ \Theta }_2$ correspond to the eigenvector:
\begin{equation}
\textbf{v}:=\{0\,, 0\,,1\,, 0\,,0\,,0\,,0\}\,.
\end{equation}

Hence, some eigendirections escape slowly from the necklace sector. For instance, the last critical exponent of ${ \Theta }_1$, which is a relevant direction, is associated with the eigenvector:
\begin{equation}
\textbf{v}{1,7}:={0.002,, -0.013,,-0.017,, -0.19,,0.04,,-0.49,,-0.85}.
\end{equation}
Note that for both eigenvectors, the components are organized as follows: ${\mu, z_2, \bar{\lambda}_1,\bar{\lambda}_2, \bar{\lambda}_3,\bar{\lambda}_4,\bar{\lambda}_5 }$.

The reliability of the fixed points is difficult to evaluate within such a complex theory space. It should be noted that the sign of the renormalizable coupling $\bar{\lambda}_5$ for the first fixed point leads us to suspect that it is merely a numerical artifact. One way to assess the quality of these fixed points is to examine their stability across lower-order truncations, as summarized in Tables \ref{table1} and \ref{table2}. As shown, the point NFP1 does not appear in a quartic truncation; although such a truncation neglects renormalizable couplings which are relevant in the IR the absence of this fixed point at lower orders provides further evidence of its artificial nature. In the following, we will therefore only retain NFP2, which has the characteristic of having a nearly marginal direction (with critical exponent $0.005$).
\begin{table}[ht]
\centering
\begin{tabular}{|c||c|c|c|}
\hline
 Truncation Order & $4$ & $6$ & $8$  \\ \hline\hline
$\mu$ & $-0.49$ & $-0.58$ & $-0.53$  \\ \hline
$z_2$ & $0.97$ & $1.07$ & $1.10$  \\ \hline
$\bar{\lambda}_2$ & $0.008$ & $0.003$ & $0.004$  \\ \hline
$\bar{\lambda}_3$ & $0.050$ & $0.10$ & $0.101$ \\ \hline
$\bar{\lambda}_4$ & -- & $0.0003$ & $-0.0007$  \\ \hline
$\bar{\lambda}_5$ & -- & -- & $0.0001$ \\ \hline
$\theta_1$ & $-22.52$ & $-37.99$ & $-12.42$ \\ \hline 
$\theta_5$ & $-0.78$ & $-0.77$ & $-1.02$ \\ \hline 
$\gamma$ & $-0.26$ & $-0.28$ & $-0.24$ \\ \hline
\end{tabular}
\caption{Properties of $\text{NFP2}$ for different truncations, up to order $8$.}\label{table1}
\end{table}

\begin{table}[ht]
\centering
\begin{tabular}{|c||c|c|c|}
\hline
 Truncation Order & $4$ & $6$ & $8$  \\ \hline\hline
$\mu$ & -- & $0.57$ & $0.61$  \\ \hline
$z_2$ & -- & $2.85$ & $2.32$  \\ \hline
$\bar{\lambda}_2$ & -- & $-0.05$ & $-0.05$  \\ \hline
$\bar{\lambda}_3$ & -- & $0.00075$ & $0.0008$ \\ \hline
$\bar{\lambda}_4$ & -- & $-0.0022$ & $-0.0024$  \\ \hline
$\bar{\lambda}_5$ & -- & -- & $-0.00038$ \\ \hline
$\theta_1$ & -- & $-1.18$ & $-2.35$ \\ \hline 
$\theta_3$ & -- & $1.16$ & $1.20$ \\ \hline 
$\gamma$ & -- & $-0.58$ & $-0.60$ \\ \hline
\end{tabular}
\caption{Properties of $\text{NFP1}$ for different truncations, up to order $8$.}\label{table2}
\end{table}

Table \ref{table1} summarizes how the fixed point emerges and stabilizes as the truncation order is increased. Interestingly, the anomalous dimension decreases a phenomenon generally associated with robust convergence of the vertex expansion \cite{Balog_2019}. In particular, the table illustrates the crucial role played by $\lambda_4$ and $\lambda_5$ in stabilizing the critical exponents (which are very large in the absence of these couplings, making the RG flow unstable and the fixed point difficult to track). Furthermore, the melonic fixed point discussed in the previous section is already pathological at this level, as its anomalous dimension increases with the truncation order (see \cite{Carrozza_2017a}). As previously discussed, within the melonic sector, Ward identities require the anomalous dimension to vanish at a fixed point; while the results would likely differ for necklaces a topic we will address in a forthcoming paper the fact that the anomalous dimension remains stable is encouraging and suggests that the necklace fixed point may be consistent with the constraints imposed by Ward identities in the infinite truncation limit. 

The behavior of the flow in the vicinity of the fixed point NFP2 is shown in Figure \ref{figflownec}. Figure \ref{landscape} shows the behavior of the anomalous dimension around the fixed point (left), as well as the norm of the 'velocity' vector $\vec{\beta}:={\beta_\mu,\beta_{z_2},\beta_1,\beta_2,\beta_3,\beta_4,\beta_5}$ (right). The top figures show that the anomalous dimension becomes unstable in the $\lambda_1 > 0$ region, such that a fixed point located in this zone of high variation would be considered 'pathological' or an artifact of the vertex expansion. Indeed, this region marks a limit of the approximation considered here. Moreover, in the negative region, the anomalous dimension rapidly reaches the limit $\gamma = -3/2$, below which the boundary condition $R_{s\to \infty}=\infty$ is violated.

The figure on the bottom provides information regarding the stability of the fixed point: no other fixed point is expected in the neighborhood of the NFP2. Furthermore, it should be noted that the landscape surrounding the fixed points, and NFP2 in particular, is extremely flat. Consequently, the precise localization of the fixed point is numerically challenging, and the zeros are, at best, of the order of $10^{-5}$. \\

\begin{figure}[htbp]
\begin{center}
\includegraphics[scale=0.45]{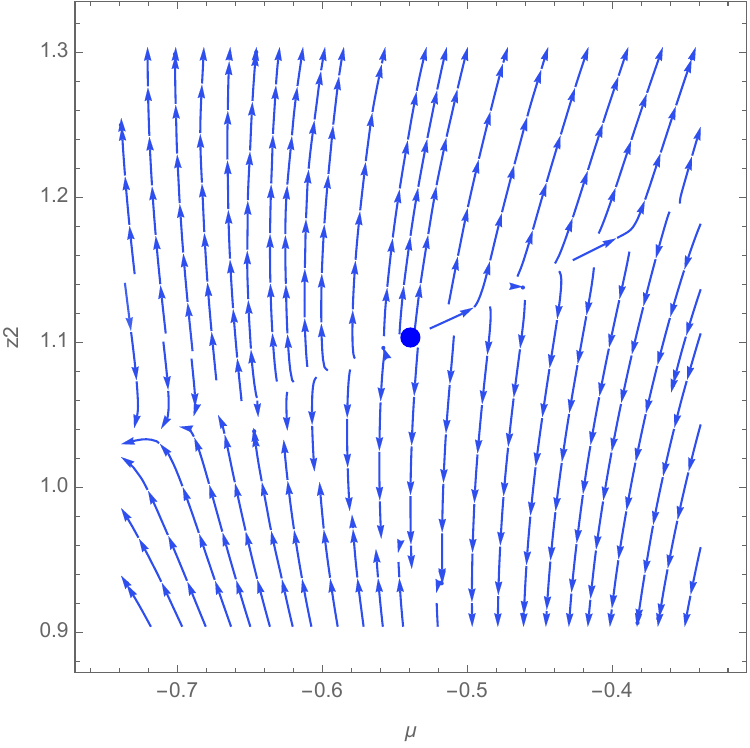}\quad \includegraphics[scale=0.45]{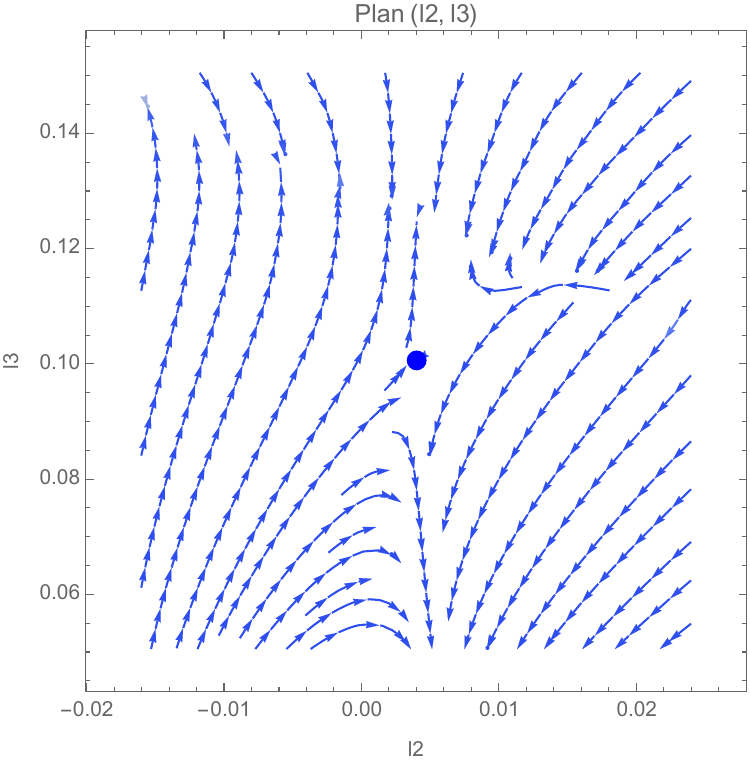}\\
\includegraphics[scale=0.45]{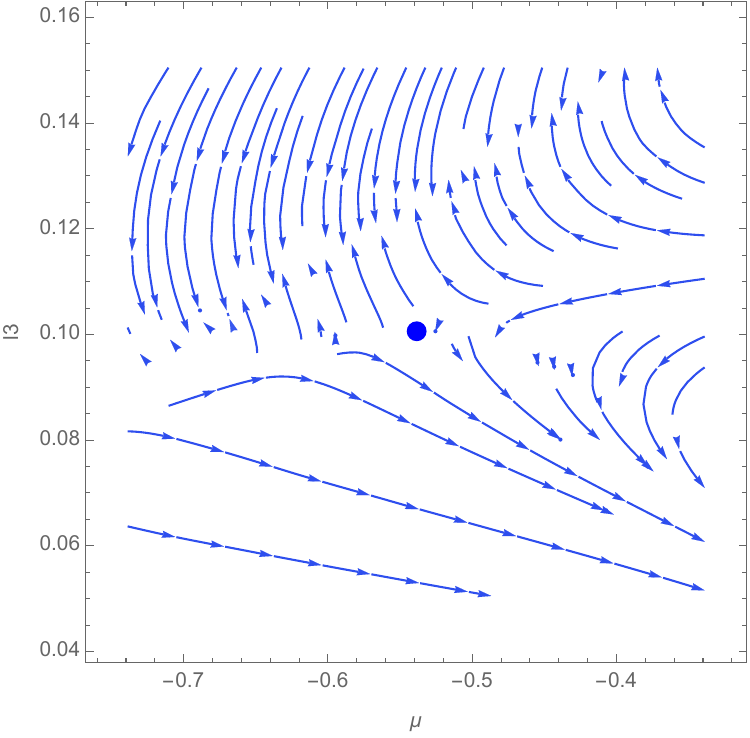}\quad \includegraphics[scale=0.45]{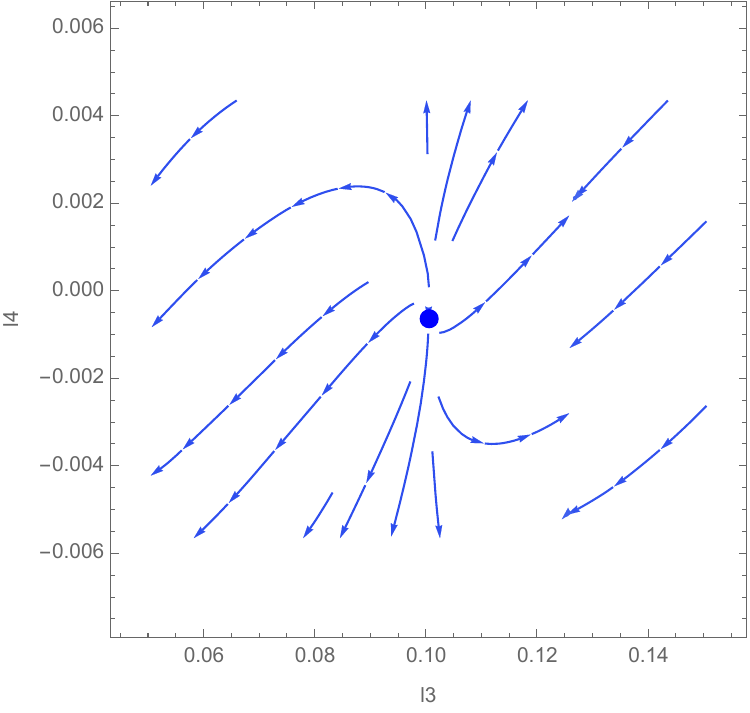}
\end{center}
\caption{Behavior of the RG flow in the vicinity of the non-Gaussian fixed point \text{NFP2}.}\label{figflownec}
\end{figure}

The existence of this fixed point agrees with the findings of \cite{bonzom2015enhancing}: The fixed point identified within the necklace sector delineates a physical scenario in which the theory breaks free from the branched polymer phase, which is typical of the dominant melonic limit. Due to their combinatorial structure, necklaces are akin to the planar graphs of random matrix models, suggesting a transition toward a discretized two-dimensional gravity phase rather than toward one-dimensional tree-like structures\footnote{Branched polymers have Hausdorff dimension 2 (because of the analogy with random walk), but, as branched polymers, they have topological dimension 1 (if we cut an edge of the polymer, we divide him in two parts. Here, we consider the topological dimension.}. This transition from a geometry of isolated 'bubbles' to a more extended structure is supported by the stabilization of the anomalous dimension $\gamma_N$. To summarize: this Necklace fixed point could represent a new universality class within tensorial theories, providing a topological condensation mechanism able of generating spacetime configurations that are closer to geometric continuity.

\begin{figure}
\begin{center}
\includegraphics[scale=0.5]{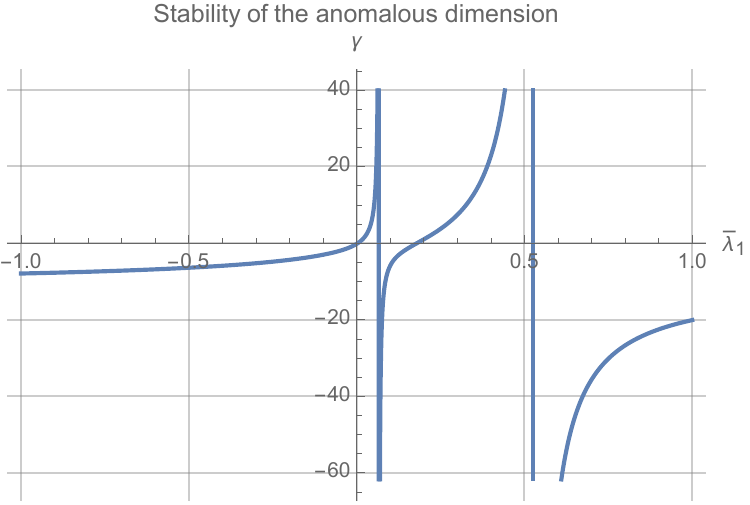}\quad \includegraphics[scale=0.5]{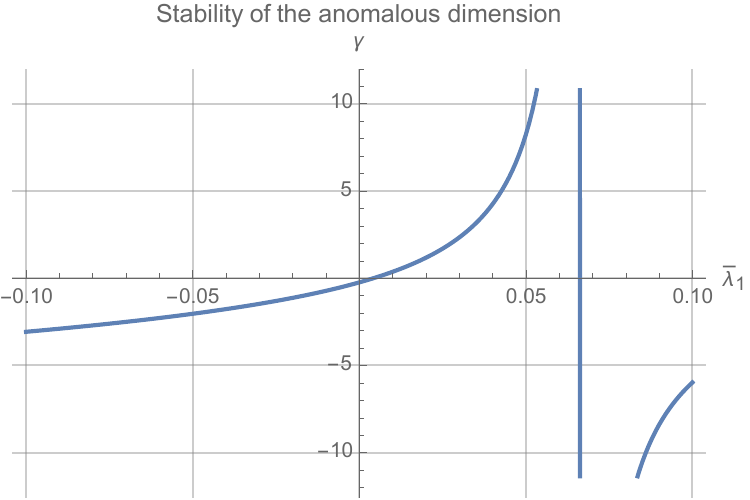} \quad \includegraphics[scale=0.5]{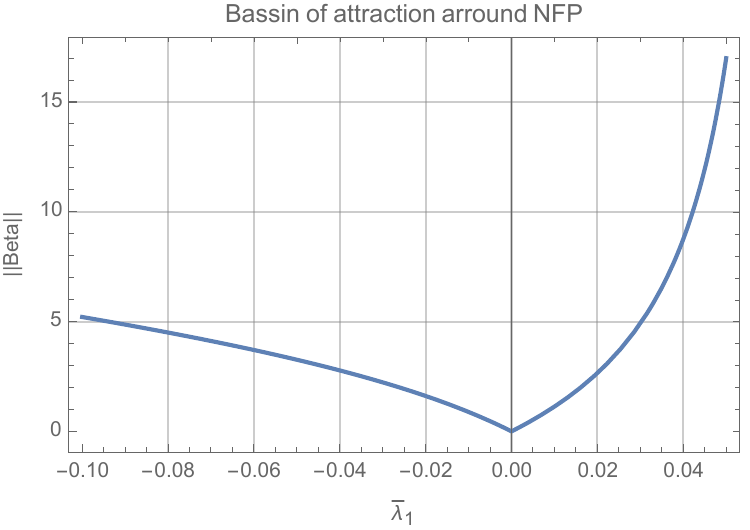}
\end{center}
\caption{On the top: behavior of the anomalous dimension in the vicinity of the non-Gaussian fixed point NFP2. On the bottom: behavior of the norm $\sqrt{\vec{\beta}^2}$.}\label{landscape}
\end{figure}

\section{Conclusion}\label{conclusion}

In this work, we have explored the renormalization group flow structure of a rank-4 TGFT, moving beyond the conventional melonic approximation. By introducing necklace-type interactions enhanced by derivative couplings, we have identified a richer physical landscape characterized by the competition between dominant and subdominant combinatorial structures.

The primary contribution of this study is the discovery of a robust non-Gaussian fixed point (NFP) within the necklace sector. In contrast to the melonic fixed point, which proved pathological under scrutiny with respect to truncation convergence and Ward identities, the necklace fixed point exhibits encouraging signs of numerical stability. We observed that as the truncation order in the vertex expansion increases, the anomalous dimension $\gamma_N$ remains sufficiently stable. This behavior suggests that this fixed point is not merely an artifact of the approximation but may represent a genuine physical phase of the theory.

However, the definitive validity of this fixed point remains contingent upon a rigorous analysis of the modified Ward identities for the non-melonic sector. While the numerical trend of the anomalous dimension is promising, only the integration of symmetry constraints will confirm whether this fixed point survives the breaking of global unitary symmetry induced by the propagator.

Looking forward, this work paves the way for a broader classification of phases in TGFTs. A natural next step will be to extend this analysis to the full flow equations incorporating Ward constraints, in order to verify whether the necklace sector can effectively close the flow equation hierarchy, thereby offering a viable and physically consistent alternative to the standard melonic scenario. Indeed, while the standard melonic limit confines the theory to branched polymer phase structures with a topological dimension of 1 that lack genuine geometric extension, the emergence of the necklace sector allows for the consideration of a richer phase. By drawing closer to the combinatorial structure of matrix models, this fixed point suggests a transition toward 2D planar surfaces, constituting a necessary step toward the generation of a continuous higher-dimensional spacetime.


\pagebreak
\newpage
\appendix


\section{Basics about colored graphs}\label{App0}
This appendix provides definitions and properties of colored graphs. As most of these properties are well-established in the tensor model literature, we shall adopt them here and refer the reader to, for instance, \cite{gurau2017random} for their respective proofs. Following \cite{Carrozza_2014}, we employ the following definitions:
\begin{definition}\label{def1}
\textbf{Contraction operation}. 
Let $\mathcal{G}$ be a Feynman graph and $L_0=\{l_i\} \subset \mathcal{L}(\mathcal{G})$ an ordered subset of dotted (i.e., propagation) lines in $\mathcal{G}$, including tadpole lines. The graph $\mathcal{G}/L_0$ is obtained from $\mathcal{G}$ through the following steps:\\

Let $l_i\in L_0$:\\

\noindent

$\bullet$ Step 1: Delete the line $l_i$ along with its two endpoints (the black and white vertices) and all incident colored edges (other than $l_i$) that join these two vertices.

$\bullet$ Step 2: For each color $c \in \{1, \dots, D\}$, identify the colored edge of color $c$ originally linked to the deleted black vertex with the corresponding edge of color $c$ linked to the white vertex.

$\bullet$ Step 3: Repeat the previous steps for the next line $l_{i+1}$ in the ordered set $L_0$, and continue until all lines in $L_0$ have been processed.
\end{definition}
\begin{definition}
For a connected graph $\mathcal{G}$ with $|L|$ lines and $|V|$ vertices, let $\mathcal{T} \subset \mathcal{G}$ be a spanning tree. We define the tensorial rosette (or simply rosette) as the contracted graph $\mathcal{G}/\mathcal{T}$. This rosette consists of a single vertex and $|L| - |V| + 1$ loops.
\end{definition}
And we have the following lemma: 
\begin{lemma}\label{lemma1}
Consider a connected graph $\mathcal{G}$ possessing $F$ faces. Under the contraction of a spanning tree $\mathcal{T}$, the number of faces $F$ remains invariant. Consequently, the rosette $\mathcal{G}/\mathcal{T}$ preserves the faces number of the original graph.
\end{lemma}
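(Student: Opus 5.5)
The plan is to reduce the lemma to the effect of a single contraction step and then induct along the tree. Order the lines of the spanning tree $\mathcal{T}$ as $l_1,\dots,l_{V-1}$, and contract them one at a time as in Definition \ref{def1}. It then suffices to prove one claim: contracting a single dotted line $l$ that joins two \emph{distinct} vertices (bubbles) of $\mathcal{G}$ preserves the number of faces.

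The tree condition is what makes this reduction valid. At every stage, the next tree line still joins two distinct effective vertices, because the lines of $\mathcal{T}$ form no cycle, so none of them has become a tadpole. After $V-1$ steps we are left with a single vertex carrying the $L-V+1$ remaining lines. This is exactly the rosette, and iterating the claim gives $F(\mathcal{G}/\mathcal{T})=F(\mathcal{G})$.

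For the single-step claim, I would argue face by face. A face of color $0c$ is a closed cycle alternating between dotted lines and colored edges of color $c$. Take a face $f$ that passes through $l$. Near $l$, it uses the color-$c$ edge at the white endpoint, then $l$, then the color-$c$ edge at the black endpoint.

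Step 2 of the contraction identifies exactly these two color-$c$ edges into a single edge of color $c$. So $f$ is mapped to a cycle $f'$ that is obtained by excising $l$, and $f'$ still alternates correctly. Faces not passing through $l$ are untouched.

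One special case needs checking: if the color-$c$ edge joins the two endpoints of $l$ directly, Step 1 deletes it. The corresponding face is then just $(l,e_c)$, a face of length one, and this is the delicate point. Since $l$ joins distinct bubbles, the endpoints of $l$ lie in different bubbles, and colored edges never connect different bubbles. Hence no such edge exists for a non-tadpole line, and the special case never occurs. The map $f\mapsto f'$ is therefore a bijection between the faces before and after the contraction. Its inverse reinserts $l$ at the identified edge.

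The main obstacle is this bookkeeping in the special case, namely ensuring that no face is created or destroyed. That case is exactly what distinguishes tree lines from tadpole or loop lines. When a loop line is contracted, colored edges joining its endpoints can exist, and faces of length one disappear, which is how the Gurau degree enters elsewhere.

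I would also check the external faces and the derivative weights $\eta(\partial f)$. Both are attached to faces, so they are carried along by the same bijection. I expect that point to be routine.
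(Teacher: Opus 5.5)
Your proof is correct, but it is organized differently from the paper's. The paper argues globally in one step: since $\mathcal{T}$ has no cycles, every closed face of $\mathcal{G}$ contains at least one dotted line outside $\mathcal{T}$. A face is destroyed only when all of its dotted lines have been contracted, so contracting $\mathcal{T}$ shortens faces but never deletes them.

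You instead induct on single-line contractions and build an explicit bijection of faces at each step. You also isolate the only possible failure mode: the two-edge face $(l,e_c)$, where $e_c$ joins the two endpoints of $l$. This cannot occur, because a tree line always joins two distinct effective vertices at the moment it is contracted, and colored edges stay inside effective vertices.

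Your route makes explicit the local mechanism that the paper compresses into the phrase ``merely reduces the length of the facial cycles''. It also handles the bookkeeping of merged bubbles, which the paper passes over. The paper's route isolates the global criterion: a face survives if and only if it keeps at least one uncontracted dotted line. That criterion immediately tells you which faces are lost when the loop lines of the rosette are contracted later.

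The paper's key claim needs a small justification that your approach makes unnecessary. The dotted lines of a $0c$ face trace a closed trail with distinct edges in $\mathcal{G}$, and such a trail cannot lie inside a tree.
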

\textit{\textbf{Proof}}: 
Since $\mathcal{T}$ is a spanning tree, it contains no cycles (and thus no self-loops). Every face $f$ of the graph $\mathcal{G}$ must contain at least one edge belonging to the complement of the tree, $\mathcal{G} \setminus \mathcal{T}$. Consequently, the contraction of the edges in $\mathcal{T}$ merely reduces the length of the facial cycles without ever closing or deleting them. The total number of faces $F$ is therefore an invariant of the contraction process $\mathcal{G} \to \mathcal{G}/\mathcal{T}$.
\begin{flushright}
$\square$
\end{flushright}

\begin{definition}\label{coloreddef}
Consider a Feynman graph $\mathcal{G}$. The colored extension $\mathcal{G}_c$ of this graph is the bipartite regular graph constructed as follows:\\ 

\noindent

$\bullet$ The set of vertices is partitioned as $\mathcal{V}(\mathcal{G}_c) = V \cup \bar{V}$, where $V$ and $\bar{V}$ denote the sets of black and white vertices, respectively.

$\bullet$ The set of edges $\mathcal{E}(\mathcal{G}_c)$ consists of all lines (colored and dotted) joining any pair $\{v, \bar{v}\} \in V \times \bar{V}$. By convention, the dotted lines (propagation lines) are assigned the color $0$.

$\bullet$ The set of faces is given by $\mathcal{F}(\mathcal{G}_c) = \mathcal{F}(\mathcal{G}) \cup \mathcal{F}_c^{\neq 0}(\mathcal{G}_c)$. Here, $\mathcal{F}(\mathcal{G})$ represents the set of faces in the original graph $\mathcal{G}$, composed of alternating edges of color $0$ and color $i$ ($i \neq 0$). The set $\mathcal{F}_c^{\neq 0}(\mathcal{G}_c)$ contains the internal faces formed by colors $i$ and $j$ ($i \neq j$ and $i, j \neq 0$).

\end{definition}
\begin{definition}
Consider a colored extension $\mathcal{G}_c$. A $k$-dipole $d_k$ is a set of $k$ edges of distinct colors, necessarily including the color $0$, that link the same two vertices $v$ and $\bar{v}$. Furthermore, these vertices $v$ and $\bar{v}$ must not be connected by any other edges of the remaining $D+1-k$ colors.
\end{definition}
In addition, we recall the following three fundamental definitions regarding the topology and classification of colored graphs:
\begin{definition}
\textbf{(jacket)} 
Consider a colored extension $\mathcal{G}_c$ in dimension $d$. A Jacket $\mathcal{J}$ is a 2-subcomplex of $\mathcal{G}_c$, associated with a cyclic permutation $\tau$ of the color set $\{0, 1, \dots, d\}$. The Jacket $\mathcal{J}$ contains all vertices and edges of $\mathcal{G}_c$, but only the subset of faces $\mathcal{F}_{\mathcal{J}}$ defined by:$$\mathcal{F}_{\mathcal{J}} = \left\{ f \in \mathcal{F}(\mathcal{G}_c) \mid f \text{ is of color type } (\tau^q(0), \tau^{q+1}(0)), \, q \in \mathbb{Z}_{d+1} \right\}\,.$$
\end{definition}
A jacket is a ribbon graph, representing a 2-dimensional sub-manifold. Its Euler-Poincaré characteristic is given by:$$\chi(\mathcal{J}) = |\mathcal{V}_{\mathcal{J}}| - |\mathcal{E}_{\mathcal{J}}| + |\mathcal{F}_{\mathcal{J}}| = 2 - 2g_{\mathcal{J}}\,,$$
where $g_{\mathcal{J}} \geq 0$ denotes the genus of the surface.

\begin{definition}
\textbf{(Gurau degree)} The Gurau degree $\varpi(\mathcal{G}_c)$ of a colored extension $\mathcal{G}_c$ is defined as the sum of the genera of all its jackets:
\begin{equation*}
\varpi(\mathcal{G}_{c})=\sum_{\mathcal{J}}g_{\mathcal{J}} \quad \Rightarrow \quad \varpi(\mathcal{G}_{c})\geq 0\,.
\end{equation*}
\end{definition}
\begin{definition}\label{defmelons}
The graphs whose Gurau degree is zero ($\varpi(\mathcal{G}_c) = 0$) are called melonic graphs. These diagrams represent the leading order (LO) contributions in the large-$N$ expansion of the theory.
\end{definition}
In addition to these definitions, we have the three following lemmas:
\begin{lemma}
The melonic graphs are dual to a $d$-dimensional sphere. 
\end{lemma}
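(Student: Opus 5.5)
The plan is to prove the lemma (melonic graphs, i.e. colored graphs with $\varpi(\mathcal{G}_c)=0$, are dual to a $d$-sphere) by induction on the number of vertices, following the standard argument of the colored tensor literature \cite{gurau2017random}. It has two ingredients. The first is a combinatorial characterization of melonic graphs: every connected $(d+1)$-colored graph of vanishing degree contains a $d$-dipole (an \emph{elementary melon}), i.e.\ a pair $v,\bar v$ joined by $d$ edges of distinct colors. The second is a topological fact: removing such a dipole, as in the contraction operation of Definition \ref{def1}, does not change the homeomorphism class of the dual pseudo-manifold and preserves vanishing degree. The base case is the two-vertex graph with $d+1$ parallel edges. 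It is dual to two $d$-simplices glued along their whole boundary, which is $S^d$. It has $|\mathcal{F}|=d(d+1)/2$ faces, and every jacket is a sphere, so its degree is $0$.

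\textbf{Step 1: counting faces.} I would first rewrite the degree via the jacket Euler characteristics. Each face belongs to exactly $(d-1)!$ jackets, there are $d!/2$ jackets, and each jacket contains all $p$ black and $p$ white vertices and all $(d+1)p$ edges. This gives
\begin{equation*}
|\mathcal{F}(\mathcal{G}_c)|=\frac{d(d-1)}{2}\,p+d-\frac{2}{(d-1)!}\,\varpi(\mathcal{G}_c),
\end{equation*}
which is the content of Lemma \ref{propdeg}. Vanishing degree therefore means the face count is maximal.

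\textbf{Step 2: existence of a $d$-dipole.} Averaging over colors, a degree-zero graph has many faces of length $2$, so some pair of vertices is joined by at least two edges. To upgrade this to a full $d$-dipole I would use the fact that degree zero forces every $d$-bubble (the subgraph with one color deleted) to be melonic as well, which follows by applying the face formula to the bubbles. Then I would argue by induction on $d$: inside a bubble of color $\hat c$, pick a $(d-1)$-dipole and show that either it closes into a $d$-dipole in $\mathcal{G}_c$, or contracting it would strictly decrease the face count, contradicting maximality. I expect this step to be the main obstacle; the careful version is the one in Gurau's book.

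\textbf{Step 3: removing the dipole and concluding.} Removing a $d$-dipole lowers $p$ by $1$ and $|\mathcal{F}|$ by exactly $d(d-1)/2$. Hence $\varpi$ is unchanged and the smaller graph is again melonic. Dually, a $d$-dipole is two $d$-simplices sharing $d$ faces, which together form a $d$-ball. Collapsing it amounts to an elementary subdivision (a bistellar or stellar move), so it preserves the PL-homeomorphism type. Induction on $p$ then shows $\mathcal{G}_c$ is dual to $S^d$. The remaining thing to check is that the dipole removal keeps the graph connected and regular.
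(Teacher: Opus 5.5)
The gap is in Step 2, and it is more than a missing detail.

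First, the claim that every degree-zero graph contains a $d$-dipole is false for $d=2$. There $\varpi$ is just the genus. Take the $3$-cube graph and color each edge by the coordinate it flips. It is a connected bipartite $3$-colored graph with $p=4$ and six bicolored faces, all of length $4$. Its single jacket has $\chi=8-12+6=2$, so $\varpi=0$, yet no two vertices are joined by two edges. So your route cannot give the lemma at $d=2$. The induction on $d$ you propose for Step 2 also has no valid base: at $d=3$ the $3$-bubbles are only known to be planar, so you cannot pick a $2$-dipole inside them. Your averaging argument likewise works only for $d\geq 3$. From $|\mathcal{F}|=\frac{d(d-1)}{2}p+d$, and the fact that each black vertex lies on $\frac{d(d+1)}{2}$ faces, you get at least $\frac{d(d-3)}{2}p+2d$ faces of length two.

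Second, the dichotomy ``closes into a $d$-dipole, or contracting it contradicts maximality'' is false, because the contraction may disconnect the graph. Suppose $v,\bar v$ are joined in $\mathcal{G}_c$ by exactly the $d-1$ colors other than $c,e$. Contracting them gives a graph $\mathcal{G}'$ with $\kappa\in\{1,2\}$ components, and
\[
\frac{2}{(d-1)!}\,\varpi(\mathcal{G}')=d(\kappa-1)-(d-1)-\epsilon ,
\]
where $\epsilon=+1$ if the $(c,e)$-faces through $v$ and $\bar v$ coincide and $\epsilon=-1$ otherwise. This is negative only for $\kappa=1$. For $\kappa=2$, which forces $\epsilon=+1$, it vanishes. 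This case occurs in genuinely melonic graphs. Insert elementary melons on the $c$-edge and on the $e$-edge of the two-vertex melon on $v,\bar v$. Then $v,\bar v$ form a $(d-1)$-dipole of the $\hat c$-bubble that is not closed by color $c$. Contracting it yields two disjoint two-vertex melons, and no contradiction arises. You would have to recurse into the pieces and show that a $d$-dipole found there lifts back to $\mathcal{G}_c$; your sketch does not address this.

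The paper gives no proof of this lemma; it defers to Gurau's book. However, the tools it records right after the lemma point to the standard argument, which uses $1$-dipoles and never needs $d$-dipoles.
\begin{itemize}
\item Induct on $d$, starting from $d=2$, where planar surfaces are spheres.
\item If $\varpi(\mathcal{G}_c)=0$, the second relation of Lemma \ref{propdeg} forces $p+d-\mathcal{B}^{[d]}=0$ and $\varpi=0$ for every $d$-bubble. By induction every bubble is then dual to $S^{d-1}$.
\item Hence each $1$-dipole contraction is a homeomorphism of the dual pseudo-manifold, and by the paper's proposition it keeps $\varpi=0$.
\item By connectivity you can keep contracting $1$-dipoles until each color has a single bubble. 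Then $\mathcal{B}^{[d]}=d+1$ forces $p=1$, the two-vertex melon, which is $S^d$.
\end{itemize}
This works uniformly in $d$.

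Your Steps 1 and 3 are correct. Removing a $d$-dipole keeps the graph connected and regular and removes exactly $d(d-1)/2$ faces. Topologically it collapses a pillow-shaped $d$-ball onto one of its two boundary facets; this is a dipole cancellation, not a bistellar move. On their own, though, these steps yield the lemma only for $d\geq 3$, and only after Step 2 is repaired.
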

\begin{lemma}\label{propdeg}
In dimension $d$, the Gurau degree $\varpi(\mathcal{G}_c)$ is related to the number of bi-colored faces and the number of black (or white) vertices $p $ by the following two fundamental relations:
\begin{equation*}
|\mathcal{F}(\mathcal{G}_{c})|=\dfrac{d(d-1)}{2}p+d-\dfrac{2}{(d-1)!}\varpi(\mathcal{G}_{c})\,,
\end{equation*}
\begin{equation*}
\varpi(\mathcal{G}_{c})=\dfrac{(d-1)!}{2}(p+d-\mathcal{B}^{[d]})+\sum_{i;\rho}\varpi(\mathcal{B}^{\hat{i}}_{(\rho)})\,.
\end{equation*}
In addition, we can show that $p+d-\mathcal{B}^{[d]} \geq 0$.
\end{lemma}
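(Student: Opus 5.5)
The plan is to prove the three claims of Lemma~\ref{propdeg} in order. The first comes from summing Euler characteristics over jackets, the second from applying the first one level down to the $d$-bubbles, and the inequality by induction on $p$. Throughout, $\mathcal{G}_c$ is connected, with colors $\{0,\dots,d\}$, $p$ black and $p$ white vertices, and $(d+1)p$ edges.

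\emph{Face relation.} For each jacket $\mathcal{J}$, write $\chi(\mathcal{J})=2p-(d+1)p+|\mathcal{F}_{\mathcal{J}}|=2-2g_{\mathcal{J}}$ and sum over all jackets. The jackets are the cyclic orderings of $d+1$ colors up to orientation, so there are $d!/2$ of them. A fixed color pair $(i,j)$ is adjacent in exactly $(d-1)!$ of these cycles. Hence each bicolored face belongs to exactly $(d-1)!$ jackets, and
\[
\tfrac{d!}{2}(1-d)p+(d-1)!\,|\mathcal{F}(\mathcal{G}_c)|=d!-2\varpi(\mathcal{G}_c).
\]
Dividing by $(d-1)!$ gives the first formula directly.

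\emph{Degree relation.} Let $\mathcal{B}^{\hat i}_{(\rho)}$ be the connected components obtained by deleting color $i$; they are $d$-colored graphs, i.e.\ of dimension $d-1$, and $\mathcal{B}^{[d]}$ counts all of them. The key step is to count faces twice. A face of colors $(i,j)$ lies in exactly one bubble $\mathcal{B}^{\hat k}_{(\rho)}$ for each $k\notin\{i,j\}$, so
\[
\sum_{i,\rho}|\mathcal{F}(\mathcal{B}^{\hat i}_{(\rho)})|=(d-1)|\mathcal{F}(\mathcal{G}_c)|.
\]
For $d\ge 3$ I would apply the first formula in dimension $d-1$ to each bubble. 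For fixed $i$ the bubble vertex counts $p_\rho$ add up to $p$, so the left side equals
\[
\tfrac{(d+1)(d-1)(d-2)}{2}p+(d-1)\mathcal{B}^{[d]}-\tfrac{2}{(d-2)!}\sum_{i,\rho}\varpi(\mathcal{B}^{\hat i}_{(\rho)}).
\]
The right side is $(d-1)$ times the first formula. The $p$-terms combine to $\tfrac{(d-1)p}{2}\big[d(d-1)-(d+1)(d-2)\big]=(d-1)p$. Solving for $\varpi(\mathcal{G}_c)$ and multiplying by $(d-2)!/2$ yields exactly $\tfrac{(d-1)!}{2}(p+d-\mathcal{B}^{[d]})+\sum\varpi(\mathcal{B}^{\hat i}_{(\rho)})$. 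For $d=2$ the bubbles are bicolored cycles with zero degree, and each cycle is a face, so the same bookkeeping works with the bubble contribution set to zero.

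\emph{Inequality, the main obstacle.} I would prove $p+d-\mathcal{B}^{[d]}\ge 0$ by induction on $p$. The base case is $p=1$: there every color edge joins the unique black–white pair, so each of the $d+1$ deletions leaves one component, $\mathcal{B}^{[d]}=d+1$, and the quantity vanishes. For the inductive step I would pick an edge of some color $c$ joining $v,\bar v$ and apply the contraction of Definition~\ref{def1} with the roles of colors permuted, which lowers $p$ by one. The delicate point is to show that $\mathcal{B}^{[d]}$ drops by at most one.

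Bubbles $\mathcal{B}^{\hat i}$ with $i\ne c$ contain the contracted edge. Contracting an edge inside a connected colored graph can only disconnect it when it isolates a pair linked by all colors, which I would handle separately as a dipole case. For $i=c$, the endpoints may lie in two distinct components that merge, costing at most one bubble.

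If this bookkeeping becomes tangled, the fallback is to choose the contracted edge as part of a $k$-dipole, which always exists in a connected graph, and do a case analysis on $k$. Alternatively, I would argue that $\mathcal{B}^{[d]}\le p+d$ via a spanning-forest count on the auxiliary graph whose vertices are bubbles and whose adjacency is given by shared vertices.
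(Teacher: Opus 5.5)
Your jacket computation and your derivation of the degree formula are correct. The paper proves neither and only cites the tensor-model literature. The gap is in the inequality $p+d-\mathcal{B}^{[d]}\ge 0$.

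\textbf{Connectivity is not controlled.} Your induction hypothesis holds only for \emph{connected} graphs, and connectivity is essential. For a disjoint union of $k$ connected graphs one only gets $p+kd-\mathcal{B}^{[d]}\ge 0$; two disjoint $p=1$ graphs give $p+d-\mathcal{B}^{[d]}=-d$. So your step needs $\mathcal{G}/e$ to be connected.

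Your claim that a contraction disconnects only when the pair is linked by all colours is false. Take $d=3$, with unbarred vertices black and barred vertices white:
\begin{itemize}
\item Start from two copies $\{x_1,\bar y_1\}$ and $\{x_2,\bar y_2\}$ of the $p=1$ graph.
\item Cut the colour-$1$ edge of the first copy, and the colour-$2$ and colour-$3$ edges of the second.
\item Insert a pair $v,\bar v$ joined only by colour $0$.
\item Link $v$ to $\bar y_1$ by colour $1$ and to $\bar y_2$ by colours $2,3$.
\item Link $\bar v$ to $x_1$ by colour $1$ and to $x_2$ by colours $2,3$.
\end{itemize}
This graph is connected, with $p=3$ and $\mathcal{B}^{[3]}=5$. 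Contracting the colour-$0$ edge $v\bar v$ gives back the two disjoint $p=1$ graphs, where $p+d-\mathcal{B}^{[d]}=2+3-8=-3$.

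Here $\mathcal{B}^{[d]}$ actually increases, so your bookkeeping worry is misplaced. Bubbles splitting only helps. The real obstructions are losing connectivity of the whole graph, and a bubble vanishing or merging with another. Neither is controlled for an arbitrary edge.

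A sanity check shows the sketch cannot be right as stated: it never uses $d\ge 2$, yet for $d=1$ one has $p+1-\mathcal{B}^{[1]}=1-p<0$ for $p\ge 2$. The fallbacks do not repair this. With the paper's definition every edge lies in some $k$-dipole; the edge above is a $1$-dipole. A naive spanning-forest count on the bubble-adjacency graph only gives a bound like $\mathcal{B}^{[d]}\le 2pd+1$.

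\textbf{The missing idea is to choose the edge.} If $\mathcal{B}^{\hat c}=1$ for every colour $c$, then $\mathcal{B}^{[d]}=d+1$ and $p+d-\mathcal{B}^{[d]}=p-1\ge 0$ directly.

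Otherwise pick $c$ with $\mathcal{B}^{\hat c}\ge 2$. By connectivity, some colour-$c$ edge $e=(v,\bar v)$ joins two distinct $\hat c$-bubbles. It is then the only edge between $v$ and $\bar v$.

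You also need one fact: a connected bipartite regular multigraph has no cut vertex. To see it, count edges at the black and at the white vertices of a component of $G\setminus v$; this shows $v$ sends a multiple of the degree, hence all its edges, into that component.

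With this fact, for $d\ge 2$, contracting $e$ does the following:
\begin{itemize}
\item It fuses the two $\hat c$-bubbles into a single connected one and leaves the other $\hat c$-bubbles untouched.
\item For each $i\neq c$, the $\hat i$-bubble through $e$ stays connected. It cannot vanish, because $v,\bar v$ are joined by one colour only, and it merges with no other $\hat i$-bubble. This is where $d\ge 2$ enters.
\item $\mathcal{G}/e$ stays connected.
\end{itemize}
Hence $p$ and $\mathcal{B}^{[d]}$ each drop by exactly one, so $p+d-\mathcal{B}^{[d]}$ is invariant. Your induction on $p$ then closes, ending at a graph with all $\mathcal{B}^{\hat c}=1$.
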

Note that, in this lemma, the sum over $i$ in the second relation includes the color $0$. Furthermore, $\mathcal{B}^{\hat{i}}_{(\rho)}$ denotes the connected component $\rho$ of the subgraph obtained from $\mathcal{G}_c$ by deleting all edges of color $i$ (including the color $0$). We let $\mathcal{B}^{[d]}$ denote the total number of these connected subgraphs. In the literature, these subgraphs are referred to as $d$-bubbles, which explains the terminology used for the interactions in Figure \ref{fig1}. From this lemma, the following proposition is straightforwardly deduced:
\begin{proposition}
Under any $1$-dipole contraction, the degree of a graph is unchanged.
\end{proposition}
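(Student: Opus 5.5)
The plan is to compare the degree before and after the contraction directly, using the first relation of Lemma \ref{propdeg}. That relation expresses $\varpi(\mathcal{G}_c)$ as
\[
\frac{(d-1)!}{2}\Big(\frac{d(d-1)}{2}p+d-|\mathcal{F}(\mathcal{G}_c)|\Big).
\]
The degree is therefore unchanged exactly when the face count and the vertex count shift in the ratio $\Delta|\mathcal{F}| = \frac{d(d-1)}{2}\Delta p$. Let $\mathcal{G}_c' = \mathcal{G}_c/l$ be obtained by contracting a $1$-dipole, i.e.\ a single line $l$ of color $0$ joining $v$ and $\bar v$, with no other edge between them. Definition \ref{def1} gives $p' = p-1$. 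So the whole task reduces to proving $|\mathcal{F}(\mathcal{G}_c')| = |\mathcal{F}(\mathcal{G}_c)| - \frac{d(d-1)}{2}$.

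The face count splits by color pairs. There are $\binom{d+1}{2}$ bicolored pairs $\{i,j\}$ in total: $d$ of them contain $0$ and $\binom{d}{2}=\frac{d(d-1)}{2}$ do not.

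\emph{Pairs $\{0,i\}$.} The line $l$ lies on exactly one face of colors $0i$. That face passes through $v$ and $\bar v$ via $l$ and then continues through the two edges of color $i$ at these vertices. Step 2 of Definition \ref{def1} glues these two color-$i$ edges together, which only shortens the face. The only exception is when the face consists solely of $l$ and one color-$i$ edge between $v$ and $\bar v$, and this is excluded because $l$ is a $1$-dipole. All other $0i$ faces are untouched. Hence faces containing color $0$ are in bijection before and after the contraction, just as in the argument of Lemma \ref{lemma1}.

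\emph{Pairs $\{i,j\}$ with $i,j\neq 0$.} Here lies the key point, and I expect it to be the main obstacle. At $v$ the $ij$ face through $v$ uses the color-$i$ and color-$j$ edges at $v$, and similarly at $\bar v$. After the contraction these half-edges are reglued $i$-to-$i$ and $j$-to-$j$. The two $ij$ strands through $v$ and $\bar v$ are merged into a single closed path, or a face is split, depending on whether $v$ and $\bar v$ lie on the same $ij$ face.

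I would argue as follows. Orient each face as a cycle and view the contraction as a local surgery. If $v$ and $\bar v$ lie on distinct $ij$ faces, the two faces merge into one ($-1$). If they lie on the same face, that face splits into two ($+1$). With this dichotomy a net change of $-1$ per pair is not automatic, so it must be obtained by checking the surgery against the jacket structure. The natural tool is the second relation of Lemma \ref{propdeg}: a $1$-dipole contraction does not change the $d$-bubbles $\mathcal{B}^{\hat 0}_{(\rho)}$, since these never see color-$0$ lines. For $i\ne0$, contracting $l$ inside $\mathcal{B}^{\hat i}$ is again a line contraction there, so one can induct on $d$.

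For the base case, take $d=2$, where the degree is the genus of a ribbon graph. There, contracting a non-loop edge preserves the genus, by the Euler-characteristic computation $\Delta V=-2$, $\Delta E=-3$, $\Delta F=-1$. The inductive step then shows that each $\varpi(\mathcal{B}^{\hat i}_{(\rho)})$ is preserved and that $p+d-\mathcal{B}^{[d]}$ shifts by $-1+1$ appropriately: $p$ drops by one, while the number of bubbles $\mathcal{B}^{[d]}$ drops by one exactly for the $\hat i$ bubble that contains both endpoints, which merge. Combining these gives $\varpi(\mathcal{G}_c') = \varpi(\mathcal{G}_c)$. The delicate bookkeeping of bubble counts, i.e.\ the merging of the $\hat i$ components under the contraction, is where I expect the real work to lie.
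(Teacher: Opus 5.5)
Your reduction to proving $\Delta|\mathcal{F}|=-d(d-1)/2$ when $\Delta p=-1$, via the first relation of Lemma~\ref{propdeg}, is right. So is your treatment of the faces of colours $\{0,i\}$, and you correctly see that each pair $\{i,j\}$ with $i,j\neq0$ contributes either $-1$ (merge) or $+1$ (split).

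The gap is that the split case is never excluded, and your induction cannot exclude it. Its ingredients are off in two places:
\begin{itemize}
\item The contraction does \emph{not} leave the bubbles $\mathcal{B}^{\hat 0}_{(\rho)}$ unchanged. Deleting $v,\bar v$ and regluing the half-edges of colours $1,\dots,d$ is precisely a surgery on the $\hat 0$-bubbles. By contrast, the $\hat i$-bubble ($i\neq0$) through $v$ already contains $\bar v$, so nothing merges there.
\item Your base case simply posits $\Delta F=-1$, i.e.\ the merge case, which is what had to be shown.
\end{itemize}

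In fact, under the paper's definition alone (one line, no parallel coloured edge), the split case occurs and the statement fails:
\begin{itemize}
\item For $d=2$, take the $3$-edge-coloured $K_{3,3}$ with $b_xw_y$ of colour $y-x \bmod 3$. It has $p=3$ and one face per colour pair, so $\varpi=1$. The line $b_0w_0$ has no parallel edge, but both its ends lie on the unique $\{1,2\}$-face. Contracting it gives a melonic graph with $p=2$, $F=4$, $\varpi=0$.
\item The paper's own rank-$4$ setting behaves the same way. Take a tadpole line joining non-neighbouring $\psi$ and $\bar\psi$ of an octic necklace with colour pairing $\{1,2\},\{3,4\}$. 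The $\{1,2\}$- and $\{3,4\}$-faces through its ends merge. The four mixed pairs split, because the necklace has a single face of each mixed pair. So $\Delta|\mathcal{F}|=+2$ instead of $-6$, and $\varpi$ drops by $24$.
\end{itemize}

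The missing ingredient is the condition built into Gurau's notion of a dipole. The endpoints $v$ and $\bar v$ must lie in \emph{distinct} $\hat 0$-bubbles, i.e.\ the line must join two different interaction vertices; this already forbids coloured edges between $v$ and $\bar v$.

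Under this condition the proof is one line and needs neither induction nor jackets. A face of colours $\{i,j\}$ with $i,j\neq0$ contains no colour-$0$ edge. So the one through $v$ lies in the $\hat 0$-bubble of $v$, and the one through $\bar v$ lies in the other bubble. They are distinct, hence always merge, giving $\Delta|\mathcal{F}|=-d(d-1)/2$. With $\Delta p=-1$, the first relation of Lemma~\ref{propdeg} then gives $\Delta\varpi=0$; connectedness is kept because the regluing fuses the two bubbles.

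This is the ``straightforward'' deduction from the lemma that the paper alludes to. You should state the distinct-bubble hypothesis explicitly rather than try to extract the merge from the jacket structure.
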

Finally, we close this section by the two following definitions and their corollary:
\begin{definition}
A Feynman graph $\mathcal{G}$ such that its colored extension $\mathcal{G}_c$ has zero Gurau degree is said to be melonic.
\end{definition}
\begin{definition}
The face-connected components of a graph $\mathcal{G}$ are defined by the subsets of edges corresponding to the maximal factorized rectangular blocks of its incidence matrix $\epsilon_{fe}$. The elements of this matrix are defined as $\epsilon_{fe} = \pm 1$ if the edge $e$ belongs to the boundary $\partial f$ (the sign being determined by their relative orientation), and $\epsilon_{fe} = 0$ otherwise.
\end{definition}
\begin{corollary}\label{cor1}
Any melonic graph is face-connected. 
\end{corollary}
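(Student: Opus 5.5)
The plan is to prove Corollary~\ref{cor1} (every melonic graph is face-connected) by induction on the number of black vertices $p$, using the recursive structure of melonic colored graphs together with Lemma~\ref{propdeg}.

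The first step is to establish that any melonic $\mathcal{G}_c$ with $p\geq 2$ contains a $d$-dipole, i.e.\ an elementary melon: two vertices $v,\bar v$ joined by $d$ edges of distinct colors. For this I will use the face count of Lemma~\ref{propdeg} with $\varpi(\mathcal{G}_c)=0$, which gives
\begin{equation*}
|\mathcal{F}(\mathcal{G}_c)|=\tfrac{d(d-1)}{2}p+d .
\end{equation*}
Averaging over colors, some pair of colors $(i,j)$ has at least $p/2+\tfrac{2}{d+1}$ faces. Since a bicolored face has even length at least $2$ and the total length over faces of colors $(i,j)$ is $2p$, this forces a face of length $2$, that is, a $2$-dipole. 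Feeding this into the second relation of Lemma~\ref{propdeg} (all $\varpi(\mathcal{B}^{\hat i}_{(\rho)})=0$ and $p+d-\mathcal{B}^{[d]}=0$), an iteration over the bubble structure upgrades the $2$-dipole to a $d$-dipole. This is the standard argument from the tensor literature, which is cited for proofs, so I will quote it rather than redo it.

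The second step is the induction itself. Removing the $d$-dipole (Definition~\ref{def1}, with the reconnection of the remaining colored edge) yields a graph $\mathcal{G}'_c$ with $p-1$ black vertices. By the Proposition on dipole contraction, $\mathcal{G}'_c$ is still melonic. The faces of $\mathcal{G}_c$ are then those of $\mathcal{G}'_c$ together with the $\binom{d}{2}$ new length-$2$ faces inside the dipole. The faces passing through the reconnected edge of color $c$ are lengthened but not split. In terms of the incidence matrix $\epsilon_{fe}$, every new face shares a dipole edge with a face that continues through the reconnected edge, and by induction that face lies in the single face-connected block of $\mathcal{G}'_c$. Hence no new rectangular block can split off. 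The base case $p=1$ is the elementary melon, whose faces all share edges pairwise, so it is trivially face-connected.

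I expect the main obstacle to be the dipole-existence step. The face-count averaging alone only gives a $2$-dipole, and promoting it to a full $d$-dipole needs the recursive bubble decomposition of the degree, which is exactly where the degree-zero hypothesis is used in full. A secondary subtlety is making the block-factorisation definition of face-connectedness precise enough that the claim "the new faces attach to the existing block" is rigorous. I would handle this by rephrasing face-connectedness as connectivity of the bipartite face--edge incidence graph.
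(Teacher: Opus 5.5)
The paper gives no proof of this corollary; it is imported from the TGFT literature. Your argument therefore has to stand on its own, and as written it proves a different and essentially vacuous statement.

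In the paper, $\epsilon_{fe}$ is the incidence matrix of the Feynman graph $\mathcal{G}$. Its rows are the faces in $\mathcal{F}(\mathcal{G})$, i.e.\ the bicolored cycles of colors $(0i)$. Its columns are the propagator lines, the $l\in\partial f$ of the amplitude \eqref{amplitude}. You instead use every bicolored face of $\mathcal{G}_c$, including the bubble faces in $\mathcal{F}_c^{\neq 0}(\mathcal{G}_c)$, together with every colored edge. For that enlarged incidence graph the conclusion holds for \emph{every} connected colored graph. Two edges of colors $i$ and $j$ meeting at a vertex both lie on the $(ij)$-face through that vertex, so connectivity of $\mathcal{G}_c$ transfers immediately. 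Your induction step is sound for this structure. But the dipole-existence step you single out as the main obstacle, the induction itself, and the hypothesis $\varpi(\mathcal{G}_c)=0$ do no work. The result also does not give connectivity of the smaller incidence graph (lines versus $(0i)$-faces) that the paper means.

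Even with the correct $\epsilon_{fe}$, the closed case is elementary and needs no melonicity. Two lines ending at bubble vertices joined by a color-$i$ edge lie on a common $(0i)$-face; each bubble is connected; and lines connect bubbles. The corollary only has content for graphs with external legs, with only closed faces entering $\epsilon_{fe}$. That is how the paper uses it, when it compares the face-connected components of a non-vacuum $\mathcal{G}$ with those of its closure $\hat{\mathcal{G}}$. Your tools, Lemma~\ref{propdeg} and the existence of $d$-dipoles, concern closed graphs, and open faces never enter your argument.

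In the open setting an extra hypothesis is in fact needed. A bridge of $\mathcal{G}_c$ lies on no cycle, hence on no closed face. For example, take two quartic melonic bubbles, each carrying a melonic tadpole, joined by one line, with the remaining two fields external. This is a melonic two-point graph: its closure has $28$ faces at $p=4$. Yet the bridge's column of $\epsilon_{fe}$ vanishes, and the two tadpoles lie in different blocks. A proof must therefore fix the open-graph setting (e.g.\ one-particle irreducibility, or a precise convention for open faces) and argue there. A dipole induction on closed melons cannot supply that.
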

\noindent
With this theoretical framework established, we now proceed to the analysis of the divergence degree as characterized by Theorem \ref{th1}.

\section{Perturbative renormalization group}\label{App1}

In the Wilson–Polchinski approach to the RG, high-momentum degrees of freedom are systematically integrated out of the partition function, inducing a scale dependence in the coupling constants. Crucially, even if the initial action is truncated to a finite order, a single step of the integration procedure is sufficient to generate all possible couplings allowed by the symmetries, including those absent at the classical level. The flow of these couplings under scale transformations is governed by a functional differential equation namely, the Wilson–Polchinski equation which serves as a rigorous foundation for non-perturbative RG. While it is often regarded as less convenient for explicit numerical computations than the Wetterich framework, it remains a powerful tool for deriving formal non-perturbative results. In this section, we utilize the Wilson–Polchinski equation within a perturbative framework to cross-validate the non-perturbative findings presented in Section \ref{sectionNP}. For a general review of this topic, see \cite{Zinn-Justin:1989rgp}.

\subsection{Wilson-Polchinski Equation}

The starting point of our approach of Wilson-Polchinski equation in the TGFT context is the following theorem, coming from standard properties of Gaussian integration:
\begin{theorem}\label{th1}
Let two non normalized Gaussian measures $d\mu_{C}$ and $d\mu_{C'}$ such as $C'=C+\Delta$, $C$, $C'$, $\Delta > 0$. Then:
\begin{align}
\int d\mu_{C}(\bar{\psi}_1,\psi_1)d\mu_{\Delta}(\bar{\psi}_2,\psi_2)e^{-S_{int}(\psi_1+\psi_2, \bar{\psi}_1+\bar{\psi}_2)}=\left(\dfrac{\det(\Delta C)}{\det(C')}\right)^{1/2}\int d\mu_{C'}(\bar{\psi},\psi)e^{-S_{int}(\psi, \bar{\psi})}.
\end{align}
\end{theorem}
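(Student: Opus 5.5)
The plan is to reduce the statement to the standard fact that the sum of two independent Gaussian fields is Gaussian with covariance equal to the sum of covariances, and to keep careful track of the normalization constants, since the measures are not normalized. I will work in momentum space with the UV cutoff $|p|\leq\Lambda$ in force, so that all fields live in a finite-dimensional complex space and all covariances are positive matrices (diagonal in momentum for the propagators of this paper, although diagonality is not needed). The non-normalized measure is written $d\mu_C(\bar\psi,\psi)=d\bar\psi\,d\psi\,e^{-\langle\bar\psi,C^{-1}\psi\rangle}$. Its total mass is $\det(C)$ up to a fixed power of $\pi$, and for complex fields the $1/2$ exponent in the statement is naturally absorbed by this convention.

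The first step is a change of variables. In the left-hand side, substitute $\psi=\psi_1+\psi_2$ and keep $\psi_2$; the Jacobian is $1$. The integrand then becomes
\begin{equation*}
e^{-S_{int}(\psi,\bar\psi)}\,e^{-\langle\bar\psi-\bar\psi_2,\,C^{-1}(\psi-\psi_2)\rangle-\langle\bar\psi_2,\Delta^{-1}\psi_2\rangle}.
\end{equation*}
The second step is to perform the Gaussian integral over $\psi_2$ at fixed $\psi$, by completing the square. The quadratic form in $\psi_2$ is $C^{-1}+\Delta^{-1}$, and the linear term comes from $C^{-1}\psi$. The standard matrix identity
\begin{equation*}
C^{-1}-C^{-1}(C^{-1}+\Delta^{-1})^{-1}C^{-1}=(C+\Delta)^{-1}=C'^{-1}
\end{equation*}
leaves exactly the exponent $-\langle\bar\psi,C'^{-1}\psi\rangle$. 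The determinant produced is $\det(C^{-1}+\Delta^{-1})^{-1}=\det(C\Delta)/\det(C+\Delta)$. This reassembles into $d\mu_{C'}$ times a constant, and comparing with the mass of $d\mu_{C'}$ gives the prefactor $\det(\Delta C)/\det C'$, raised to the power fixed by the convention above.

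An alternative I would mention is to note that the moment generating functions satisfy $\int d\mu_C d\mu_\Delta\, e^{\langle\bar J,\psi_1+\psi_2\rangle+c.c.}=\mathcal{N}\,e^{\langle\bar J,(C+\Delta)J\rangle}$. I would then extend from exponentials to $e^{-S_{int}}$ by density or by a Fourier argument, which is legitimate in finite dimension when $S_{int}$ is bounded below (as for positive quartic couplings).

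The main obstacle is not the algebra but the justification. Fubini and the Gaussian integration over $\psi_2$ require $e^{-S_{int}}$ to be integrable, and the determinants require finite dimension. I would state explicitly that the identity holds at fixed UV cutoff, where it is finite-dimensional, and assuming the interaction is stable, i.e. $\Re S_{int}$ bounded below. The constant prefactor is field-independent and thus drops out of the Polchinski flow. Finally, I would check the power convention of the determinant (real versus complex Gaussian) to match the displayed exponent $1/2$.
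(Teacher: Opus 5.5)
Your change of variables $\psi=\psi_1+\psi_2$, followed by Gaussian integration over $\psi_2$ using $C^{-1}-C^{-1}(C^{-1}+\Delta^{-1})^{-1}C^{-1}=C'^{-1}$ and $\det(C^{-1}+\Delta^{-1})^{-1}=\det(C\Delta)/\det C'$, is exactly the standard Gaussian-convolution computation that the paper invokes (``standard properties of Gaussian integration'') without writing out, so your argument is correct and in line with the paper. The one point to tidy is the normalization: with your convention $d\mu_C=d\bar\psi\,d\psi\,e^{-\langle\bar\psi,C^{-1}\psi\rangle}$ for complex fields, the prefactor comes out as $\pi^{n}\det(\Delta C)/\det C'$ ($n$ the number of modes below the cutoff), so the exponent $1/2$ is not ``absorbed'' but instead corresponds to the convention in which $d\mu_C$ has total mass $(\det C)^{1/2}$; in general the prefactor is the mass ratio $m(C)\,m(\Delta)/m(C')$, which is field-independent and drops out of the flow.
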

This theorem relies two Gaussian integration with two propagator, and in the Wilson point of view, it is interpreted as a partial integration over rapid modes, associated to the propagator $\Delta$. In order to make this more concrete, one introduce an arbitrary fundamental UV cut-off $\Lambda_0$ and the regularized propagator:
\begin{equation}\label{regularization}
C_{\Lambda_0}(\vec{p}\,)= \int_{1/\Lambda_0^{2\eta}}^{+\infty} dte^{-t(\sum_{i=1}^4|p_i|^{2\eta}+m^{2\eta})}
\end{equation}
where here $m$ is understood as an IR regulator, which can be removed when any IR divergence occur, as it will be the case in the rest of this section. Then, we introduce a dilatation parameter $s<1$, such as any cut-off $\Lambda<\Lambda_0$ can be written as $\Lambda=s\Lambda_0$ for some $s$. This parameter will be used as a step to the gradual integration of the UV modes. Defining:
\begin{align}
\Delta_{\Lambda}(\sigma,\vec{p}\,)&:=C_{\Lambda}(\vec{p}\,)-C_{(1-\sigma)\Lambda}(\vec{p}\,)=\sigma s\frac{d}{ds}C_{s\Lambda_0}(\vec{p}\,)=\sigma [D_{\Lambda}(\sigma)]_{\vec{p},\vec{p}},
\end{align}
with, using \ref{regularization}
\begin{equation}
[D_{\Lambda}(\sigma)]_{\vec{p},\vec{p}^{\,'}}=\frac{2\eta}{\Lambda^{2\eta}}e^{-\sum_{i=1}^4|p_i|^{2\eta}/\Lambda^{2\eta}}\delta_{\vec{p},\vec{p}^{\,'}},
\end{equation}
we can apply the Theorem \ref{th1} and write the partition function as an integral over two fields, respectively associated to "slow" and "rapid" modes. Starting with the partition function $\Omega[S_{int,\,s}]$ at scale $\Lambda=s\Lambda_0$:
\begin{equation}
\Omega[S_{int,\,s}]:=\int d\mu_{C_{\Lambda}}(\bar{\psi},\psi)e^{-S_{int,\,s}[\psi,\bar{\psi}]}.
\end{equation}
Theorem \ref{th1} allows to decompose it into two Gaussian integrals over two fields, with covariances $\Delta_{\Lambda}(\sigma)$ and $C_{\Lambda}$:
\begin{align}
\Omega[S_{int,\,s}]=&\left(\dfrac{\det(\Delta_{\Lambda} C_{(1-\sigma)\Lambda})}{\det(C_{\Lambda})}\right)^{-1/2}\int d\mu_{C_{\Lambda}}(\bar{\psi}_<,\psi_<)\label{decomp}\int d\mu_{\Delta_{\Lambda}}(\bar{\psi}_>,\psi_>)e^{-S_{int,\,s}(\psi_<+\bar{\psi}_>, \bar{\psi}_<+\bar{\psi}_>)},
\end{align}
and identifying the effective action at scale $\Lambda=s\Lambda_0$ as:
\begin{align}\label{effectiveaction}
&e^{-S_{int,\,s(1-\sigma)}(\psi_<,\bar{\psi}_<)}:=\frac{1}{\sqrt{\det \Delta_{\Lambda}}}\int d\mu_{\Delta_{\Lambda}}(\bar{\psi}_>,\psi_>)e^{-S_{int,\,s}(\psi_<+\psi_>, \bar{\psi}_<+\bar{\psi}_>)},
\end{align}
the decomposition \ref{decomp} becomes:
\begin{equation}\label{effectiveaction2}
\Omega[S_{int,\,s}]=\left(\dfrac{\det C_{(1-\sigma)\Lambda}}{\det C_{\Lambda}}\right)^{-1/2}\int d\mu_{C_{s\Lambda}}(\bar{\psi}_<,\psi_<)e^{-S_{int,s}(\psi_<,\bar{\psi}_<)}.
\end{equation}
This equation can be translated as a differential equation describing the flow of the effective action. Keeping only the leading order terms in $\sigma$, we find:
\begin{align}
&\qquad e^{-\Delta S_{int,\,s}(\psi_<,\bar{\psi}_<)}\label{infinitesimalvar}=1-\sigma\Tr\Big[\Big(\frac{\delta^2 S_{int,s}}{\delta \psi \delta \bar{\psi}}-\frac{\delta S_{int,s}}{\delta \psi} \frac{\delta S_{int,s}}{\delta \bar{\psi}}\Big)D_{\Lambda}\Big]+\mathcal{O}(\sigma^2),
\end{align}
where $\Delta S_{int}^{s}(\psi_<,\bar{\psi}_<):=S_{int,\,(1-\sigma)s}(\psi_<,\bar{\psi}_<)-S_{int,\,s}(\psi_<,\bar{\psi}_<)$. Finally, expanding the left hand side at the same order, and identifying the power of $\sigma$ leads to the differential equation describing how an infinitesimal dilatation of cut-off translates to a modification of couplings:
\begin{equation}
s\dfrac{dS_{int,s}}{ds}=-\Tr\bigg\{\Big(\frac{\delta^2 S_{int,s}^{s}}{\delta \psi \delta \bar{\psi}}-\frac{\delta S_{int,s}}{\delta \psi} \frac{\delta S_{int,s}}{\delta \bar{\psi}}\Big)D_{\Lambda}\bigg\}\label{eqflow}.
\end{equation}
\noindent
\noindent
Choosing to work in momentum representation, the bubble expansion of the action $S_{int,s}[\psi,\bar{\psi}]$
\begin{align}\label{actionLambda}
S_{int,s}[\psi,\bar{\psi}]=\sum_{l=1}\mathcal{W}^{(l)}=\sum_{l=1}\sum_{\{\vec{p}_{i},\vec{\bar{p}}_{i}\}}\mathcal{W}^{(l)\,\vec{\bar{p}}_{1},...,\vec{\bar{p}}_{l}}_{\,\,\,\,\,\,\,\vec{p}_{1},...,\vec{p}_{l}}\prod_{i=1}^{l}T_{\vec{p}_{i}}\bar{T}_{\vec{\bar{p}}_{i}},
\end{align}
where the $T_{\vec{p}_{i}}$ designates the Fourier components of the field $\psi$ : $\psi(\vec{\theta})=\sum_{\mathbb{Z}^4}T_{\vec{p}}e^{i\vec{p}\cdot \vec{\theta}}$, and $\mathcal{V}^{(l)}$ is the sum of all the bubbles of valence $2l$ involved in the effective classical action at scale $\Lambda$. By replacing this expression of the action in the flow equation \ref{eqflow}, we obtain the renormalization group equations (RGEs):
\begin{align}
\label{RGE}s\dfrac{d\mathcal{W}^{(l)}}{ds}=-\sum_{\vec{p}}D_{\Lambda}(\vec{p}\,)\frac{\partial}{\partial \bar{T}_{\vec{p}}}\frac{\partial}{\partial T_{\vec{{p}}}}\mathcal{W}^{(l+1)}+\sum_{m=0}^{l-1}\sum_{\vec{p}}D_{\Lambda}(\vec{p}\,)\frac{\partial\mathcal{W}^{(m+1)}}{\partial \bar{T}_{\vec{p}}}\frac{\partial\mathcal{W}^{(l-m)}}{\partial T_{\vec{\bar{p}}}}.
\end{align}
For our purpose, it is convenient to use of dimensionless couplings. Defining by $d(l,I)$ the canonical dimension of the $I(l)$-th bubble of valence $2l$, $\mathcal{W}^{(l,I)}$, we define the dimensionless bubble $\bar{\mathcal{W}}^{(l,I)}$ as :$\mathcal{W}^{(l,I)}=\bar{\mathcal{W}}^{(l,I)}\Lambda^{d(l,I)}$, and the action \ref{actionLambda} as:
\begin{align}\label{actionLambda2}
S_{int,s}[\psi,\bar{\psi}]=\sum_{l=1}\sum_{I=1}^{N(l)}\sum_{\{\vec{p}_{i},\vec{\bar{p}}_{i}\}}\Lambda^{d(l,I)}\bar{\mathcal{W}}^{(l),I,\,\vec{\bar{p}}_{1},...,\vec{\bar{p}}_{l}}_{\,\,\,\,\,\,\,\vec{p}_{1},...,\vec{p}_{l}}\prod_{i=1}^{l}T_{\vec{p}_{i}}\bar{T}_{\vec{\bar{p}}_{i}},
\end{align}
and the flow equations \ref{RGE} become:
\begin{align}
\nonumber s\dfrac{d\bar{\mathcal{W}}^{(l),I}}{ds}=-d(l,I)\bar{\mathcal{W}}^{(l),I}&-\sum_{\vec{p}}D_{\Lambda}(\vec{p}\,)\bigg\{\sum_{J\in\vartheta_I(l)}\Lambda^{d(l+1,J)-d(l,I)}\frac{\partial^2\bar{\mathcal{W}}^{(l+1),J}}{\partial \bar{T}_{\vec{p}}\partial T_{\vec{{p}}}}\\
&+\sum_{m=0}^{l-1}\sum_{(K,L)\in \vartheta_I(l,m)}\Lambda^{d(m+1,K)+d(l-m,L)-d(l,I)}\frac{\partial\bar{\mathcal{W}}^{(m+1),K}}{\partial \bar{T}_{\vec{p}}}\frac{\partial\bar{\mathcal{W}}^{(l-m),L}}{\partial T_{\vec{\bar{p}}}}\bigg\},
\end{align}\label{RGE2}
where the two sets $\vartheta_I(l)$ and $\vartheta_I(l,m)$ are so that the contracted bubbles in the r.h.s have the same connectivity structure as the bubble on the l.h.s. 

\subsection{Perturbative solution}
\label{section3}
\subsubsection{Ansatz and truncation}
We will apply the general RG formalism to our field theory model, and solve the RGEs equations \ref{RGE2} in a perturbative framework. In this way, we choose the following Ansatz: \\

\begin{ansatz}
All the couplings associated to bubbles of valence $4$ and less are assumed to have the same magnitude $\lambda$, and all the bubbles of valence $l>2$ are assumed to have couplings of magnitude $\lambda^{l-1}$. In addition, the necklace bubbles of valence $l>3$ are those of order $\lambda^{l}$. Then, our truncated action at order $\lambda$ writes as:
\begin{align}
\nonumber S_{int,s}[\psi,\bar{\psi}]=&\sum_{\vec{p}\in\mathbb{Z}^4}\bigg[\bar{\mu}_1\Lambda^{3/2}+\bar{\mu}_2\Lambda^{1/2}\sum_{i=1}^4|p_i|+\delta Z\sum_{i=1}^4|p_i|^{3/2}\bigg]T_{\vec{p}}\bar{T}_{\vec{{p}}}+\sum_{\{\vec{p}_{i},\vec{\bar{p}}_{i}\}}\bigg[\lambda_1\sum_{i=1}^4\bar{\mathcal{W}}^{(2),(i)\,\vec{\bar{p}}_{1},\vec{\bar{p}}_{2}}_{melo\,\,\,\vec{p}_{1},\vec{p}_{2}}\\
&\,\,+\bar{\lambda}_2\Lambda\sum_{i=1}^3\bar{\mathcal{W}}^{(2),(1i)\,\vec{\bar{p}}_{1},\vec{\bar{p}}_{2}}_{neck\,\,\,\vec{p}_{1},\vec{p}_{2}}+{\lambda}_3\sum_{i=1}^3\bar{\mathcal{W}}^{(2),(1i)\,\vec{\bar{p}}_{1},\vec{\bar{p}}_{2}}_{neck\,\,\,\vec{p}_{1},\vec{p}_{2}}\sum_{i=1}^4(|p_{1i}|+|p_{2i}|)\bigg]\prod_{i=1}^{2}T_{\vec{p}_{i}}\bar{T}_{\vec{\bar{p}}_{i}}\,.\label{actionLambda2}
\end{align}
\end{ansatz}

\noindent
Note that this approximation make sense in the vicinity of the Gaussian fixed point, and is motivated by the flow equations \ref{RGE2}, implying that, if the perturbation to the Gaussian fixed point leaves in the coupling subspace of valence $4$, the flow tends to be remain in this subspace, up corrections of order $\lambda^3$. \\

\noindent
After to begin the calculation, we have to precise our regime of approximation. We will consider the ultra-violet limit, and we assume that $s\Lambda$ and $\Lambda$ are large. However, we are interested in a relative infra-red physic compared to the fundamental cut-off $\Lambda$, and more precisely, our approximation can be characterized by $\Lambda_0,\Lambda\gg 1$ but $\Lambda/\Lambda_0=s\ll 1$. 

\subsubsection{$\bar{\mathcal{W}}^{(1)}$ at order $\lambda$}

At order $\lambda$, and retaining only the LO contributions in the UV limit, the flow equation \ref{RGE2} for $\bar{\mathcal{W}}^{(1)}$ writes as:
\begin{align}\label{flow1}
 s\dfrac{d\mathcal{W}^{(1)}}{ds}=-2\sum_{\vec{p},\vec{p}_1,\vec{\bar{p}}_1}&D_{\Lambda}(\vec{p}\,)\bigg[\lambda_1\sum_{i=1}^4\bar{\mathcal{W}}^{(2),(i)\,\vec{\bar{p}}_{1},\vec{{p}}}_{melo\,\,\,\vec{p}_{1},\vec{p}}\cr
&+\bigg(\bar{\lambda}_2\Lambda+\lambda_3\sum_{i=1}^4(|p_{1i}|+|p_{2i}|)\bigg)\sum_{i=1}^3\sym\bar{\mathcal{W}}^{(2),(1i)\,\vec{\bar{p}}_{1},\vec{{p}}}_{neck\,\,\,\vec{p}_{1},\vec{p}}\bigg]T_{\vec{p}_1}\bar{T}_{\vec{\bar{p}}_1},
\end{align}
or explicitly:
\begin{align}
\nonumber s\dfrac{d\mathcal{W}^{(1)}}{ds}=-2\sum_{\vec{p},\vec{p}_1,\vec{\bar{p}}_1}D_{\Lambda}(\vec{p}\,)\bigg[\lambda_1\sum_{i=1}^4\delta_{p_i,p_{1,i}}&\delta_{p_i\bar{p}_{1i}}\prod_{j\neq i}\delta_{p_{1j}\bar{p}_{1j}}+\bigg(\bar{\lambda}_2\Lambda+\lambda_3\sum_{i=1}^4(|p_{1i}|+|p_{2i}|)\bigg)\\
&\quad\times\sum_{i=1}^3\bigg(\prod_{l=1,i}\delta_{p_l,p_{1,l}}\delta_{p_l\bar{p}_{1l}}\prod_{j\neq 1,i}\delta_{p_{1j}\bar{p}_{1j}}+l\leftrightarrow j\bigg)\bigg]T_{\vec{p}_1}\bar{T}_{\vec{\bar{p}}_1}\,.\label{flow11}
\end{align}
\noindent
The right hand side receives two contribution, coming from the melonic and necklace bubbles, and pictured on Figure\ref{figA1}, and we will compute each contributions separately. \\
\begin{center}
\includegraphics[scale=0.9]{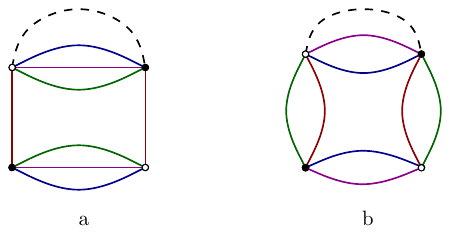} 
\captionof{figure}{The two leading contribution to the RGE for $\mathcal{W}^{(1)}$. The dotted line represent the contraction with $D_{\Lambda}$.}\label{figA1}
\end{center}
\medskip

\noindent
\textit{Contribution of the melonic bubbles.}
The first term of the right hand side involves the typical sum:
\begin{equation}
\mathcal{S}_{melo}=\dfrac{2\eta}{\Lambda^{2\eta}}\sum_{\vec{q}}\delta_{q_1p_1}e^{-\sum_i|q_i|^{2\eta}/\Lambda^{2\eta}}.
\end{equation}
Because we are only interested by the leading order contributions in $\Lambda$, in the large $\Lambda$ limit, we use of the integral approximation:
\begin{equation}
\mathcal{S}_{melo}\approx \mathcal{I}_{melo}:=2\eta\Lambda^{3-2\eta}\int d^3y\prod_{i=2}^4  e^{-|y_i|^{2\eta}}e^{-|y_1|^{2\eta}},
\end{equation}
with $y_i:=p_i/\Lambda$. Making the change of variable $z_i=x|y_i|^{2\eta}\,i=2,3,4$, the integral approximation writes as:
\begin{align}
\mathcal{I}_{melo}=16\eta \Lambda^{3-2\eta}\bigg[\frac{1}{2\eta}\int_0^{\infty}dzz^{1/2\eta-1}e^{-z}\bigg]^3e^{-|p_1|^{2\eta}/\Lambda^{2\eta}},
\end{align}
and expanding the exponential with the value $\eta=3/4$, we find:
\begin{equation}
\mathcal{I}_{melo}=12\big[\Gamma(5/3)\big]^3\Lambda^{3/2}-12[\Gamma(5/3)\big]^3|p_1|^{2\eta}+\text{NLO}\,.\label{melonsum}
\end{equation}

\noindent
\textit{Contribution of the necklace bubbles.}
The necklace bubbles contribution involves the two typical sums:
\begin{align}
\mathcal{S}_{neck,1}&=\dfrac{2\eta}{\Lambda^{2\eta}}\sum_{\vec{q}}\delta_{q_1p_1}\delta_{q_2p_2}e^{-\sum_i|q_i|^{2\eta}/\Lambda^{2\eta}},\\
\mathcal{S}_{neck,2}&=\dfrac{2\eta}{\Lambda^{2\eta}}\sum_{\vec{q}}\delta_{q_1p_1}\delta_{q_2p_2}|q_1| e^{-\sum_i|q_i|^{2\eta}/\Lambda^{2\eta}},
\end{align}
with integral approximations:
\begin{align}
\mathcal{I}_{neck,1}&=2\eta\Lambda^{2-2\eta}\int d^2y \prod_{i=1}^2e^{-|y_i|^{2\eta}}e^{-(|p_1|^{2\eta}+|p_2|^{2\eta})/\Lambda^{2\eta}},\\
\mathcal{I}_{neck,2}&=2\eta\Lambda^{3-2\eta}\int d^2y |y_1|\prod_{i=1}^2e^{-|y_i|^{2\eta}}e^{-(|p_1|^{2\eta}+|p_2|^{2\eta})/\Lambda^{2\eta}},
\end{align}
given:
\begin{align}
\mathcal{I}_{neck,1}&=6\big[\Gamma(5/3)\big]^2\Lambda^{1/2}-6\big[\Gamma(5/3)\big]^2\Lambda^{-1}(|p_1|^{2\eta}+|p_2|^{2\eta})+\text{NLO},\\
\mathcal{I}_{neck,2}&=3\Gamma(5/3)\Gamma(7/3)\Lambda^{3/2}-3\Gamma(5/3)\Gamma(7/3)(|p_1|^{2\eta}+|p_2|^{2\eta})+\text{NLO}\,. \label{necksum}
\end{align}
Denoting with $\mathcal{O}(\vec{p}\,)$ the kernel,
\begin{equation}
\mathcal{W}^{(1)}=:\sum_{\vec{p}\in\mathbb{Z}^4}\mathcal{O}(\vec{p}\,)T_{\vec{p}}\bar{T}_{\vec{{p}}},
\end{equation}
given from equation \ref{actionLambda2}:
\begin{equation}\label{kernel}
\mathcal{O}(\vec{p}\,):=\bar{\mu}_1\Lambda^{3/2}+\bar{\mu}_2\Lambda^{1/2}\sum_{i=1}^4|p_i|+\delta Z\sum_{i=1}^4|p_i|^{3/2},
\end{equation}
and using the two integral approximations \ref{melonsum} and \ref{necksum} in the flow equation, we find, at leading order in $\Lambda$:
\begin{align}
\nonumber s\dfrac{d\mathcal{O}(\vec{p}\,)}{ds}=&-12\big[4\lambda_1[\Gamma(5/3)]^3+3\bar{\lambda}_2[\Gamma(5/3)]^2+3\lambda_3\Gamma(5/3)\Gamma(7/3)\big]\Lambda^{3/2}\\\nonumber
&-
36\bar{\lambda}_2[\Gamma(5/3)]^2\Lambda^{1/2}\sum_{i=1}^4|p_i|\\
&+12[2\lambda_1[\Gamma(5/3)]^3+3\bar{\lambda}_2[\Gamma(5/3)]^2+3\lambda_3\Gamma(5/3)\Gamma(7/3)]\sum_{i=1}^4|p_i|^{2\eta},\label{flow11}
\end{align}
implying, for the dimensionless parameters of equation \ref{kernel}:
\begin{equation}\label{flow1f1}
s\dfrac{d\bar{\mu}_1}{ds}=-\frac{3}{2}\bar{\mu}_1-12\big[4\lambda_1[\Gamma(5/3)]^3+3\bar{\lambda}_2[\Gamma(5/3)]^2+3\lambda_3\Gamma(5/3)\Gamma(7/3)\big],
\end{equation}
\begin{equation}\label{flow1f2}
s\dfrac{d\bar{\mu}_2}{ds}=-\frac{1}{2}\bar{\mu}_2-36\bar{\lambda}_2[\Gamma(5/3)]^2,
\end{equation}
\begin{equation}
s\dfrac{d\delta Z}{ds}=12\big[2\lambda_1[\Gamma(5/3)]^3+3\bar{\lambda}_2[\Gamma(5/3)]^2+3\lambda_3\Gamma(5/3)\Gamma(7/3)\big].
\end{equation}

\subsubsection{$\bar{\mathcal{W}}^{(2)}$ at order $\lambda^2$}

The flow equations for the $4$-valent bubbles involve several contributions. To begin with, the first term on the right-hand side (r.h.s.) generates both 1PI contributions, as shown in Figure \ref{figA21}, and 1PR contributions, typical examples of which are illustrated in Figure \ref{figA22}. The leading orders of these 1PR contributions are, in fact, exactly compensated by the second term on the r.h.s. of the flow equation, which involves the $2$-point interactions at order $\lambda$ calculated in the previous subsection.
Taking this cancellation into account, we obtain the flow equations for each of the three couplings $\lambda_I$ by identifying terms on both sides of the equation that share the same bubble connectivity structure. For instance, the contribution to $\beta_{\lambda_1}$ involves diagram \ref{figA21}a with $i=j$, and diagram \ref{figA21}c for $k=i$. Following the same procedure for all couplings, we find:
\begin{equation}
s\frac{d\lambda_1}{ds}=-4\lambda_1^2\mathcal{S}_{melo,2}-24\lambda_1(2\lambda_3\mathcal{S}_{neck,4}+\bar{\lambda}_2\Lambda\mathcal{S}_{neck,3}),
\end{equation}
\begin{equation}
s\frac{d\bar{\lambda}_2}{ds}=-\bar{\lambda}_2-8\bar{\lambda}_2^2\Lambda\mathcal{S}_{neck,3}-32\lambda_3\bar{\lambda}_2\mathcal{S}_{neck,4}-16\lambda_3^2\Lambda^{-1}\mathcal{S}_{neck,5},
\end{equation}
\begin{equation}
s\frac{d\lambda_3}{ds}=-6\lambda_3^2\mathcal{S}_{neck,4}-6\lambda_3\bar{\lambda}_2\Lambda\mathcal{S}_{neck,3},
\end{equation}
where the four sums are defined as:
\begin{equation}
\mathcal{S}_{melo,2}:=(2\eta)^2\Lambda^{-2\eta}\sum_{q_1,q_2,q_3}\int_{\Lambda_0}^{\Lambda}d\Lambda' \Lambda'^{\,-2\eta-1}\prod_{i=1}^3e^{-\big(\frac{1}{\Lambda^{2\eta}}+\frac{1}{\Lambda'^{2\eta}}\big)|q_i|^{2\eta}}\approx -6[\Gamma(5/3)]^3,
\end{equation}
\begin{equation}
\mathcal{S}_{neck,3}:=(2\eta)^2\Lambda^{-2\eta}\sum_{q_1,q_2}\int_{\Lambda_0}^{\Lambda}d\Lambda' \Lambda'^{\,-2\eta-1}\prod_{i=1}^2e^{-\big(\frac{1}{\Lambda^{2\eta}}+\frac{1}{\Lambda'^{2\eta}}\big)|q_i|^{2\eta}}\approx -18(1-2^{-1/3})[\Gamma(5/3)]^2\Lambda^{-1},
\end{equation}
\begin{equation}
\mathcal{S}_{neck,4}:=(2\eta)^2\Lambda^{-2\eta}\sum_{q_1,q_2}|q_1|\int_{\Lambda_0}^{\Lambda}d\Lambda' \Lambda'^{\,-2\eta-1}\prod_{i=1}^2e^{-\big(\frac{1}{\Lambda^{2\eta}}+\frac{1}{\Lambda'^{2\eta}}\big)|q_i|^{2\eta}}\approx -3\Gamma(5/3)\Gamma(7/3),
\end{equation}
\begin{align}
\nonumber\mathcal{S}_{neck,5}&:=(2\eta)^2\Lambda^{-2\eta}\sum_{q_1,q_2}(|q_1|^2+|q_1||q_2|)\int_{\Lambda_0}^{\Lambda}d\Lambda' \Lambda'^{\,-2\eta-1}\prod_{i=1}^2e^{-\big(\frac{1}{\Lambda^{2\eta}}+\frac{1}{\Lambda'^{2\eta}}\big)|q_i|^{2\eta}}\\
&\,\,\approx -\frac{9}{5}(1-2^{-5/3})\bigg(\frac{1}{2}[\Gamma(7/3)]^2+3\Gamma(5/3)\bigg)\Lambda,
\end{align}
giving:
\begin{equation}\label{flow2f1}
s\frac{d\bar{\lambda}_1}{ds}=24[\Gamma(5/3)]^3\bar{\lambda}_1^2+144\lambda_1\Gamma(5/3)\big[\bar{\lambda}_3\Gamma(7/3)+3(1-2^{-1/3})\bar{\lambda}_2\Gamma(5/3)\big]\,,
\end{equation}
\begin{equation}\label{flow2f2}
s\frac{d\bar{\lambda}_2}{ds}=-\bar{\lambda}_2+144\bar{\lambda}_2^2(1-2^{-1/3})[\Gamma(5/3)]^2+96\lambda_3\bar{\lambda}_2\Gamma(5/3)\Gamma(7/3)+\frac{144}{5}\bar{\lambda}_3^2(1-2^{-5/3})\bigg(\frac{1}{2}[\Gamma(7/3)]^2+3\Gamma(5/3)\bigg)\,,
\end{equation}
\begin{equation}\label{flow2f3}
s\frac{d\bar{\lambda}_3}{ds}=18\bar{\lambda}_3^2\Gamma(5/3)\Gamma(7/3)+108(1-2^{-1/3})[\Gamma(5/3)]^2\bar{\lambda}_3\bar{\lambda}_2\,.
\end{equation}
\begin{center}
\includegraphics[scale=1]{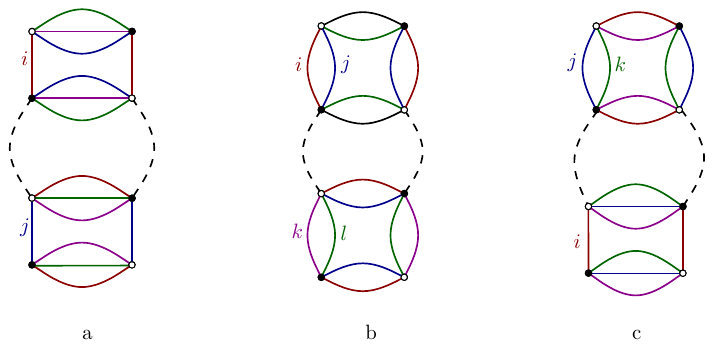} 
\captionof{figure}{Typical 1PI contributions to the equation for $\bar{\mathcal{W}}^{(2)}$}
\end{center}\label{figA21}
\begin{center}
\includegraphics[scale=1]{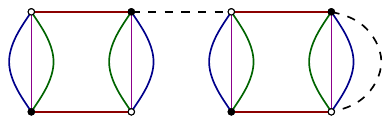} 
\captionof{figure}{1PR contribution comping from the first term on the r.h.s of the flow equation for $\bar{\mathcal{W}}^{(2)}$}
\end{center}\label{figA22}

\subsubsection{RGEs for dimensionless renormalized couplings}

The renormalized couplings $\bar{\lambda}_I^r$, $\bar{\mu}_I^r$ are defined as:
\begin{align}
\bar{\mu}_I^r&=Z^{-1}\bar{\mu}_I,\\
\bar{\lambda}_I^r&=Z^{-2}\bar{\lambda}_I.
\end{align}
With these definitions, and using the flow equations \ref{flow1f1},    \ref{flow1f2},   \ref{flow2f1}, \ref{flow2f2},   \ref{flow2f3}, we find:
\begin{equation}\label{flow1ff1}
s\dfrac{d\bar{\mu}_1^r}{ds}=-\frac{3}{2}\bar{\mu}_1^r-12\big[4\lambda_1^r[\Gamma(5/3)]^3+3\bar{\lambda}_2^r[\Gamma(5/3)]^2+3\bar{\lambda}_3^r\Gamma(5/3)\Gamma(7/3)\big]\,,
\end{equation}
\begin{equation}\label{flow1ff2}
s\dfrac{d\bar{\mu}_2^r}{ds}=-\frac{1}{2}\bar{\mu}_2^r-36\bar{\lambda}_3^r[\Gamma(5/3)]^2\,,
\end{equation}
\begin{equation}\label{flow2ff1}
s\frac{d\bar{\lambda}_1^r}{ds}=-24\bar{\lambda}_1^{r\,2}[\Gamma(5/3)]^3+72\bar{\lambda}_1^{r}\bar{\lambda}_3^r\Gamma(5/3)\Gamma(7/3)+\bar{\lambda}_1^r\bar{\lambda}_2^r[432(1-2^{-1/3})-72][\Gamma(5/3)]^2\,,
\end{equation}
\begin{align}\label{flow2ff2}
\nonumber s\frac{d\bar{\lambda}_2^r}{ds}=-\bar{\lambda}_2^r-72 (1- 2^{2/3})\bar{\lambda}_2^{r\,2}[\Gamma(5/3)]^2&-48\bar{\lambda}_1^r\bar{\lambda}_2^r[\Gamma(5/3)]^3+24\bar{\lambda}_2^r\bar{\lambda}_3^r\Gamma(5/3)\Gamma(7/3)\\
&+\frac{144}{5}\bar{\lambda}_3^{r\,2}(1-2^{-5/3})\big([\Gamma(7/3)]^2+6\Gamma(5/3)\big)\,,
\end{align}\label{flow2ff2}
\begin{equation}\label{flow2ff3}
s\frac{d\bar{\lambda}_3^r}{ds}=-54\bar{\lambda}_3^2\Gamma(5/3)\Gamma(7/3)+[108(1-2^{-1/3})-72][\Gamma(5/3)]^2\bar{\lambda}_3\bar{\lambda}_2-48\bar{\lambda}_3\bar{\lambda}_1[\Gamma(5/3)]^3\,.
\end{equation}

\begin{remark}
As a consistency check, the one-loop contributions of renormalizable interactions match those obtained by expanding the non-perturbative equations derived in this paper around the Gaussian fixed point, as required by universality.
\end{remark}

\subsection{Gaussian fixed point}

The flow around the Gaussian fixed point is illustrated in Figure \ref{figgauss}. It is immediately apparent that the theory is not asymptotically free, which is also confirmed by calculation. A direct method consists in examining the evolution of the norm of the vector $\vec{\lambda} := (\bar{\lambda}_1^r, \bar{\lambda}_2^r, \bar{\lambda}_3^r)$,
\begin{equation}
s\frac{d}{ds}\, \vec{\lambda}^2= 2 \sum_i \bar{\lambda}_i^r \beta_i\, ,
\end{equation}
where $\beta_i := s \, d\bar{\lambda}_i^r / ds$. The theory is asymptotically free if $\sum_i \bar{\lambda}_i^r \beta_i < 0$. The $\beta$-functions for the renormalizable couplings are all of the form:

\begin{equation}
\beta_i := \sum_{j,k}\,A^{(i)}_{jk} \bar{\lambda}_j^r\bar{\lambda}_k^r\,.
\end{equation}

The presence of at least one positive eigenvalue for any of the matrices $A^{(i)}$ is sufficient for asymptotic freedom to be violated; these eigenvalues are given by:

\begin{equation}
\text{vp} [A^{(1)}]:=\{-98.25,62.94,0.\}\,,
\end{equation}
\begin{equation}
\text{vp} [A^{(3)}]:=\{-137.15,21.06,0.\}\,.
\end{equation}
There are clearly several directions of instability; the theory is not asymptotically free.

\begin{figure}[htbp]
\begin{center}
\includegraphics[scale=0.35]{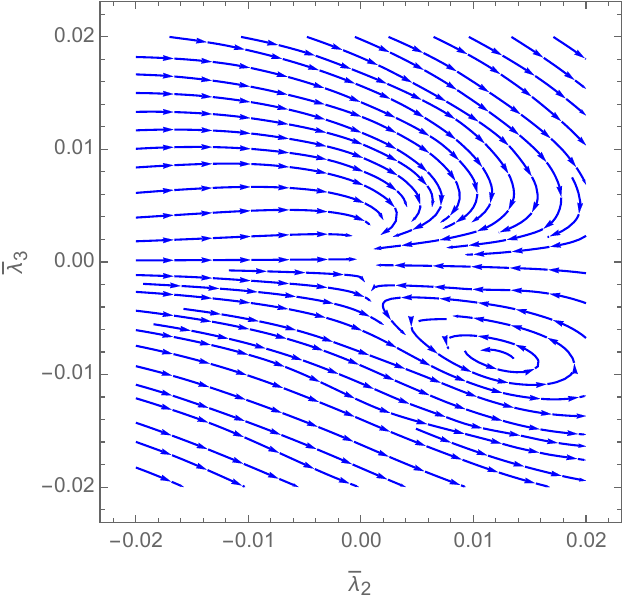}\quad \includegraphics[scale=0.35]{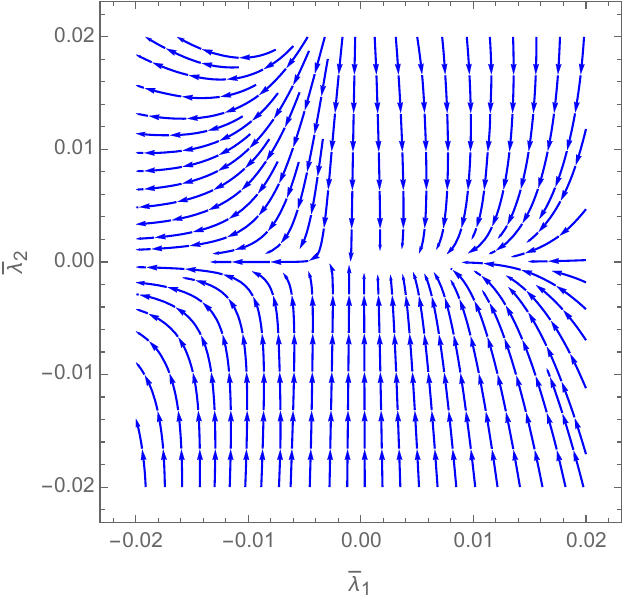}\quad \includegraphics[scale=0.35]{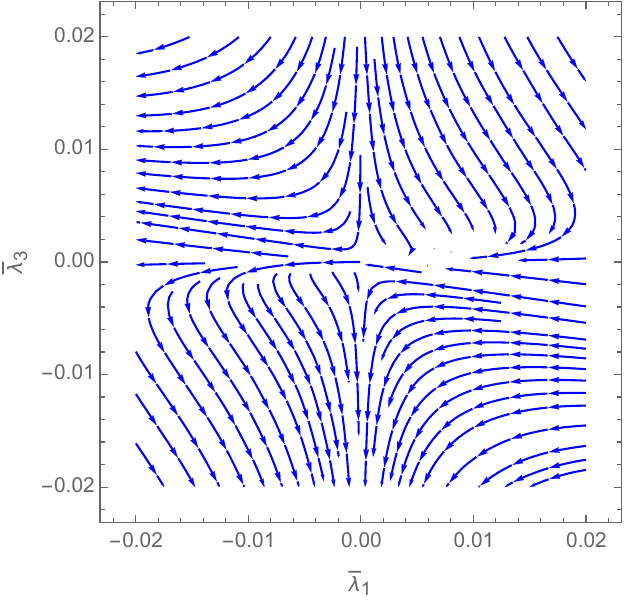}
\end{center}
\caption{Behavior of the RG flow in the vicinity of the gaussian fixed point.}\label{figgauss}
\end{figure}

\subsection{Non-gaussian fixed points}

Numerical analysis of the flow equations reveals the emergence of three non-trivial fixed points in addition to the Gaussian fixed point\footnote{A fourth one is observed, characterized by very large critical exponents and anomalous dimension. We discard it based on the principle that the perturbative regime should not deviate significantly from the Gaussian theory.}.

A first fixed point appears for the following values:

\begin{equation}
\text{FP1}:=\{\bar{\mu}_1^r,\bar{\mu}_2^r,\bar{\lambda}_1^r,\bar{\lambda}_2^r, \bar{\lambda}_3^r\} \,\approx \, \{0.053, 0.304, -0.01, 0.016, -0.005\}\,,
\end{equation}

for which the (real part of) critical exponents are:

\begin{equation}
\{\theta_i\}_1 \, \approx \, \{1.5, 0.5, -0.121, -0.121, -0.57\}\,,
\end{equation}

and the anomalous dimension is $\eta_1 \approx 0.096$. This fixed point, which couples melons and necklaces, possesses three irrelevant directions in the IR, spanning a 3D stable surface, and two relevant (critical) directions.  Two other 'pure necklace' fixed points emerge at the values:

\begin{equation}
\text{FP2}:=\{\bar{\mu}_1^r,\bar{\mu}_2^r,\bar{\lambda}_1^r,\bar{\lambda}_2^r, \bar{\lambda}_3^r\} \,\approx \, \{-0.02, 0.5, 0., 0.012, -0.008\}\,,
\end{equation}
\begin{equation}
\text{FP3}:=\{\bar{\mu}_1^r,\bar{\mu}_2^r,\bar{\lambda}_1^r,\bar{\lambda}_2^r, \bar{\lambda}_3^r\} \,\approx \, \{-0.57, 0, 0., 0.029, 0.\}\,,
\end{equation}
with the following critical exponents:
\begin{equation}
\{\theta_i \}_2 \, \approx \, \{1.5, 0.5, 0.488, -0.056, -0.056 \}\,,
\end{equation}
\begin{equation}
\{\theta_i\}_3\, \approx \,\{1.5, 1.176, 0.5, -0.404, -1.\}\,,
\end{equation}
and anomalous dimensions $\eta_2 \approx 0.03$ and $\eta_3 \approx 0.85$, respectively. These two fixed points possess two irrelevant directions in the IR, spanning a two-dimensional stable space. Physically, they correspond to a non-trivial summation of planar graphs. Figures \ref{figflow1} and \ref{figflow2} illustrate the flow behavior around $\text{FP1}, \text{FP2}$ and $\text{FP3}$. It should be noted that many values of the couplings are beyond the perturbative regime, highlighting the importance of non-perturbative methods for their study. Moreover, none of the discovered fixed point in the non-perturbative regime match with the perturbative ones, seeming to indicate that the UV completion issue is a non perturbative problem.

\begin{figure}
\begin{center}
\includegraphics[scale=0.32]{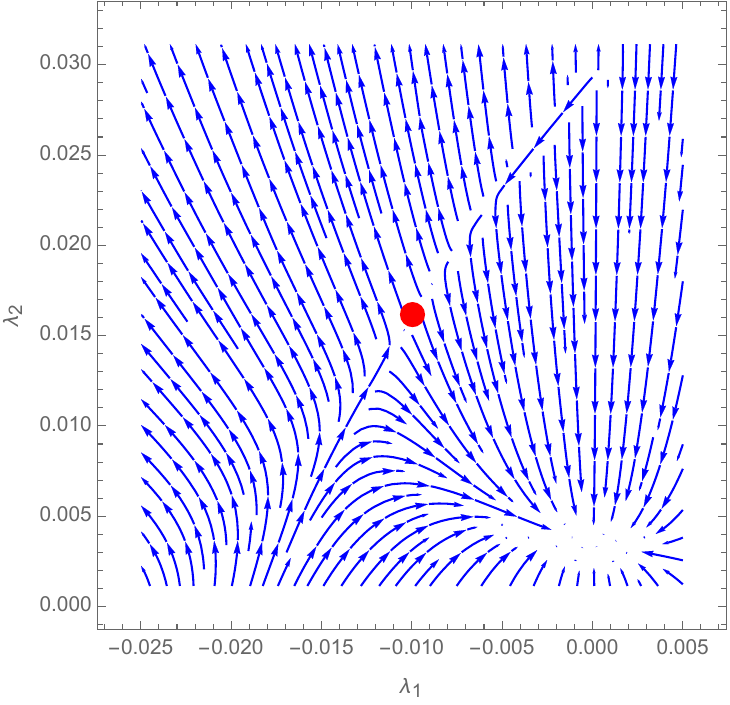}\quad \includegraphics[scale=0.32]{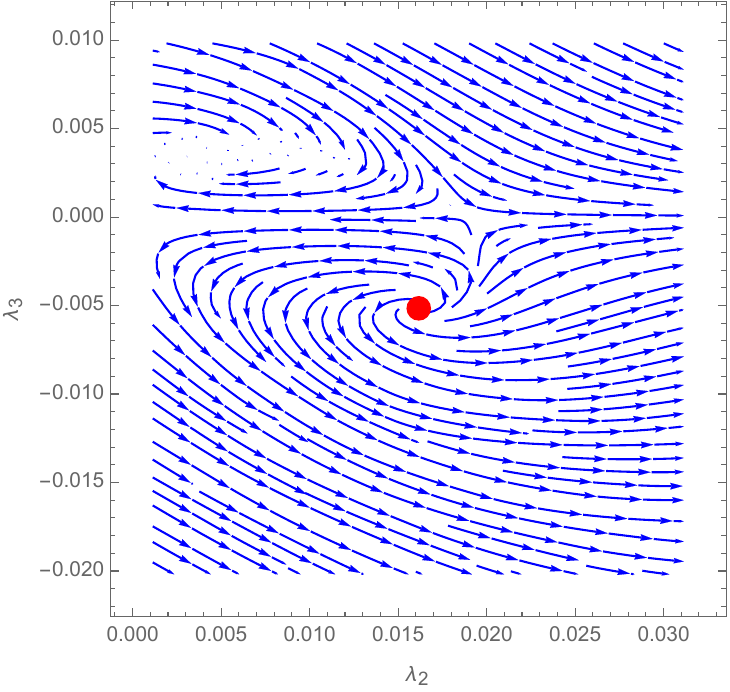}\quad \includegraphics[scale=0.32]{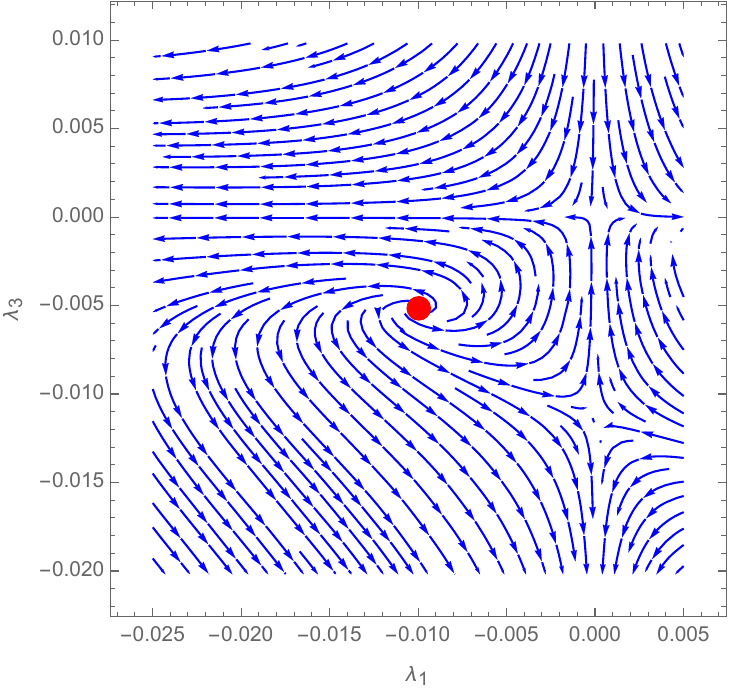}
\end{center}
\caption{Behavior of the RG flow in the vicinity of $\text{FP1}$.}\label{figflow1}
\end{figure}

\begin{figure}
\begin{center}
\includegraphics[scale=0.5]{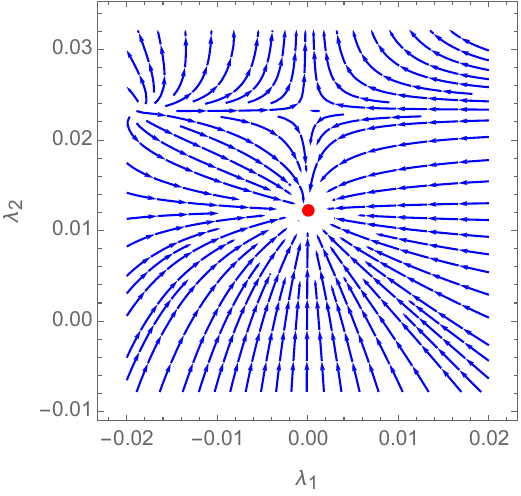}\qquad \includegraphics[scale=0.5]{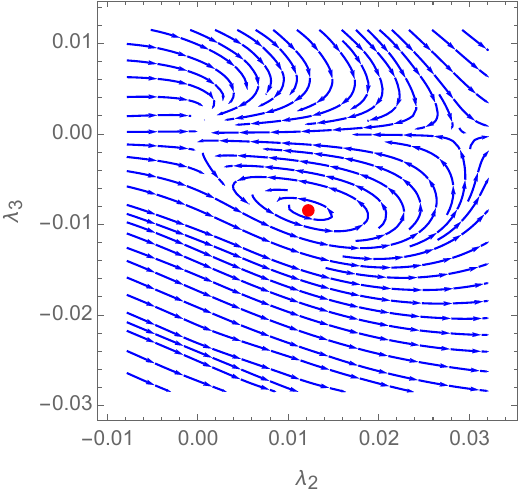}\\
\includegraphics[scale=0.5]{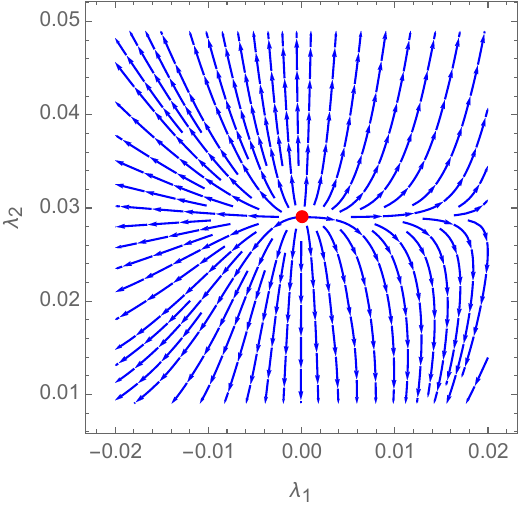}\qquad \includegraphics[scale=0.5]{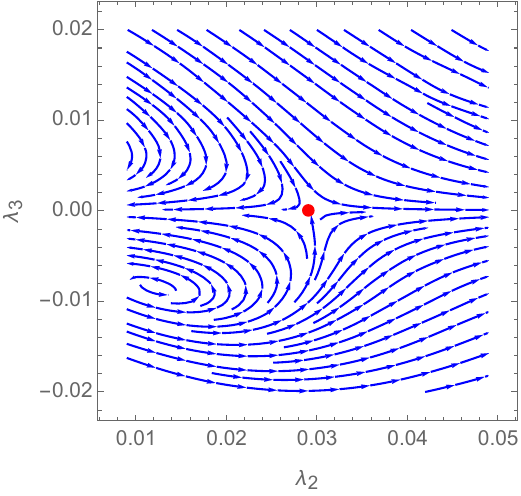}
\end{center}
\caption{Behavior of the RG flow in the vicinity of $\text{FP2}$ (on the top) and $\text{FP3}$ (on the bottom).}\label{figflow2}
\end{figure}

\pagebreak



\clearpage
\printbibliography[heading=bibintoc]

\clearpage

\end{document}